\theoremstyle{plain}
\newtheorem{thm}{Theorem}
\newtheorem*{thm*}{Theorem}
\newtheorem{prop}{Proposition}
\newtheorem*{prop*}{Proposition}
\newtheorem{cor}{Corollary}
\newtheorem*{cor*}{Corollary}
\newtheorem{lem}{Lemma}
\newtheorem*{lem*}{Lemma}
\def\eqref#1{equation~\ref{#1}}
\def\1{\bm{1}}
\def\eps{{\epsilon}}
\def\rb{{\textnormal{b}}}
\def\rh{{\textnormal{h}}}
\def\rt{{\textnormal{t}}}
\def\vb{{\bm{b}}}
\def\vh{{\bm{h}}}
\def\vt{{\bm{t}}}
\DeclareMathAlphabet{\mathsfit}{\encodingdefault}{\sfdefault}{m}{sl}
\SetMathAlphabet{\mathsfit}{bold}{\encodingdefault}{\sfdefault}{bx}{n}
\newcommand{\R}{\mathbb{R}}
\newcommand{\softmax}{\mathrm{softmax}}
\DeclareMathOperator*{\argmax}{arg\,max}
\let\ab\allowbreak
\def\blah{% #1
    \K=0 \loop\ifnum\K<20 % #1 for(K=0;K<#1;K++)
    {\color[rgb]{0.8, 0.8, 0.8} blah blah blah blah blah blah blah blah blah blah} % blah*10
    \advance\K by1\repeat % end for
}
\newcommand{\game}{\Gamma} \newcommand{\sgame}{\Gamma_\sigma} \newcommand{\zgame}{\Gamma_{-1}}
\newcommand{\agents}{\mathcal{N}}
\newcommand{\actions}{\mathcal{A}}  % S -> A
\newcommand{\payoffs}{\mathcal{U}}  % A -> U
\newcommand{\payoffmat}[1]{U^{(#1)}}  % A -> U
\newcommand{\sspace}{\mathcal{X}}  %\simplex  % space of strategies
\newcommand{\pspace}[1]{C^\infty_\mathrm{SC}(#1,\R)}  % space of potential (regularizer function)
\newcommand{\hreg}[1]{h_{#1}} \newcommand{\hdual}[1]{h_{#1}^\ast}
\newcommand{\xx}[1]{x^{#1}} \newcommand{\yy}[1]{y^{#1}} \newcommand{\XX}[1]{X^{#1}}
\newcommand{\strategies}{x}  %\tilde{x}  % collective notation for all agents' strategies
\newcommand{\xs}[1]{x^{#1 \ast}}  % Nash equilibrium
\newcommand{\hftrl}{H_{\mathrm{FTRL}}}
\newcommand{\gsimplex}{G_S}
\newcommand{\gfenchel}{G_F}
\newcommand{\SO}{\mathrm{SO}} \newcommand{\rot}{R} \newcommand{\rott}{\mathcal{R}}
\newcommand{\sof}{\mathfrak{so}} \newcommand{\rotd}{K} \newcommand{\rottd}{\Omega}
\newcommand{\gangular}{G_K} \newcommand{\gangularr}{G_\Omega}
\newcommand{\gcumul}{G_C} \newcommand{\gdilation}{G_D}  % \ghomo -> \gdilation
\newsavebox{\@brx}
\newcommand{\llangle}[1][]{\savebox{\@brx}{\(\m@th{#1\langle}\)}%
  \mathopen{\copy\@brx\mkern2mu\kern-0.9\wd\@brx\usebox{\@brx}}}
\newcommand{\rrangle}[1][]{\savebox{\@brx}{\(\m@th{#1\rangle}\)}%
  \mathclose{\copy\@brx\mkern2mu\kern-0.9\wd\@brx\usebox{\@brx}}}
\newcommand{\hess}{\mathrm{Hess}}
\newcommand{\onevec}{\bm{1}}
\newcommand{\simplectic}{\omega}
\newcommand{\pbracket}[2]{\left\{ #1, #2 \right\}_{\mathrm{P}}}  % Poisson bracket
\newcommand{\step}{\epsilon}
\newcommand{\xd}[1]{\hat{x}^{#1}} \newcommand{\yd}[1]{\hat{y}^{#1}}  % discrete dynamical variables
\newcommand{\coeff}{\alpha}  % \coupling -> \coeff
\newcommand{\defname}[1]{\emph{#1}}
\newcommand{\eref}[1]{Eq.~(\ref{#1})}
\newcommand{\fref}[1]{Fig.~\ref{#1}}
\newcommand{\sref}[1]{Sec.~\ref{#1}}
\newcommand{\appenref}[1]{Appendix~\ref{#1}}
\newcommand{\thmref}[1]{Theorem~\ref{#1}}
\newcommand{\propref}[1]{Proposition~\ref{#1}}
\newcommand{\lemref}[1]{Lemma~\ref{#1}}
\newcommand{\dd}[2]{\frac{d #1}{d #2}}
\newcommand{\dpdp}[2]{\frac{\partial #1}{\partial #2}}
\newcommand{\set}[1]{\left\{ #1 \right\}}
\newcommand{\abs}[1]{\left\lvert #1 \right\rvert}
\newcommand{\norm}[1]{\left\lVert #1 \right\rVert}
\newcommand{\inner}[2]{\left\langle #1, #2 \right\rangle}
\newcommand{\Z}{\mathbb{Z}}
\begin{document}

\title{Hamiltonian of polymatrix zero-sum games}

% \affiliation command applies to all authors since the last \affiliation command.
% The \affiliation command should follow the other information.
% \affiliation can be followed by \email, \homepage, \thanks as well.
\author{Toshihiro Ota}
    \email{ota\_toshihiro@cyberagent.co.jp}
    \affiliation{AI Lab, CyberAgent, Shibuya, Tokyo 150-0002, Japan}
    \affiliation{iTHEMS, RIKEN, Wako, Saitama 351-0198, Japan}
    %\altaffiliation[Also at ]{RIKEN, iTHEMS}
\author{Yuma Fujimoto}
    \email{fujimoto.yuma1991@gmail.com}
    \affiliation{AI Lab, CyberAgent, Shibuya, Tokyo 150-0002, Japan}
    \affiliation{RCIES, Soken University, Hayama, Kanagawa 240-0193, Japan}
    %\altaffiliation[Also at ]{Soken University}
%\email[]{Your e-mail address}  % Explanatory text should go in the []'s, actual e-mail address or url should go in the {}'s for \email and \homepage.
%\homepage[]{Your web page}
%\thanks{}
%\altaffiliation{}
%\affiliation{AI Lab, CyberAgent Inc., Shibuya, Tokyo 150-0002, Japan}

%\date{\today}

\begin{abstract}

The understanding of a dynamical system's properties can be significantly advanced by establishing it as a Hamiltonian system and then systematically exploring its inherent symmetries.
By formulating agents' strategies and cumulative payoffs as canonically conjugate variables, we identify the Hamiltonian function that generates the dynamics of poly-matrix zero-sum games.
We reveal the symmetries of our Hamiltonian and derive the associated conserved quantities, showing how the conservation of probability and the invariance of the Fenchel coupling are intrinsically encoded within the system.
Furthermore, we propose the \emph{dissipation FTRL} (DFTRL) dynamics by introducing a perturbation that dissipates the Fenchel coupling, proving convergence to the Nash equilibrium and linking DFTRL to last-iterate convergent algorithms.
Our results highlight the potential of Hamiltonian dynamics in uncovering the structural properties of learning dynamics in games, and pave the way for broader applications of Hamiltonian dynamics in game theory and machine learning.

\end{abstract}

\maketitle  %\maketitle must follow title, authors, abstract, and keywords

\section{Introduction}
\label{sec:introduction}

Dynamical systems analysis plays a fundamental role in game theory.
In the context of learning in games \cite{fudenberg1998theory}, multiple agents dynamically update their strategies over time in an attempt to maximize their own payoffs.
The utility function of each agent depends not only on their own strategy but also on those of others, and simultaneous optimal strategies across all agents are characterized by the Nash equilibrium \cite{nash1950equilibrium}.
However, when the utilities of agents are in direct conflict (zero-sum games), the resulting learning dynamics often exhibit recurrent behaviors and convergence to a Nash equilibrium may fail to occur \cite{mertikopoulos2018cycles}.

The learning algorithm known as Follow the Regularized Leader (FTRL) provides a systematic formulation of dynamics in online learning \cite{shalev2006convex,mertikopoulos2018cycles}.
In the FTRL algorithm, each agent's strategy evolves according to their cumulative payoff, and it has been shown that in zero-sum games the Fenchel coupling is conserved throughout the dynamics.
This conservation law implies that, under the FTRL dynamics, the distance between the strategies and the Nash equilibrium remains essentially invariant, thereby explaining the failure of convergence to the equilibrium.
Although heuristic modifications to the FTRL algorithm have been proposed to mitigate this issue, a comprehensive understanding remains elusive.

Conserved quantities generally provide crucial insights into the system under consideration, as a sufficient number of conserved quantities can even render the system solvable.
Hamiltonian dynamics offers a sophisticated mathematical framework for systematically analyzing \emph{symmetries} and associated conservation laws.
Attempts to examine learning dynamics in games using a Hamiltonian perspective date back several decades \cite{hofbauer1996evolutionary}, yet most studies have rarely extended beyond two-agent scenarios.
Bailey and Piliouras \cite{bailey2019multi} made a notable observation that, in poly-matrix zero-sum games \cite{daskalakis2009network,cai2011minmax,cai2016zero}, it is promising that agents' strategies and their cumulative payoffs are thought of as canonical conjugates for formulating the FTRL dynamics as a Hamiltonian system.
Nevertheless, their approach resulted only in the reconstruction of existing conserved quantities such as the Fenchel coupling, without identifying a genuine Hamiltonian function for poly-matrix zero-sum games.\footnote{We make comments on the paper by Bailey and Piliouras in \appenref{append:bp}.}
In any case, previous studies have not fully exploited the paradigm of ``symmetry and conservation law'' inherent in Hamiltonian dynamics.

\begin{figure*}[t]
    \begin{minipage}[b]{0.495\linewidth}
        \centering
        \includegraphics[keepaspectratio, scale=0.24]{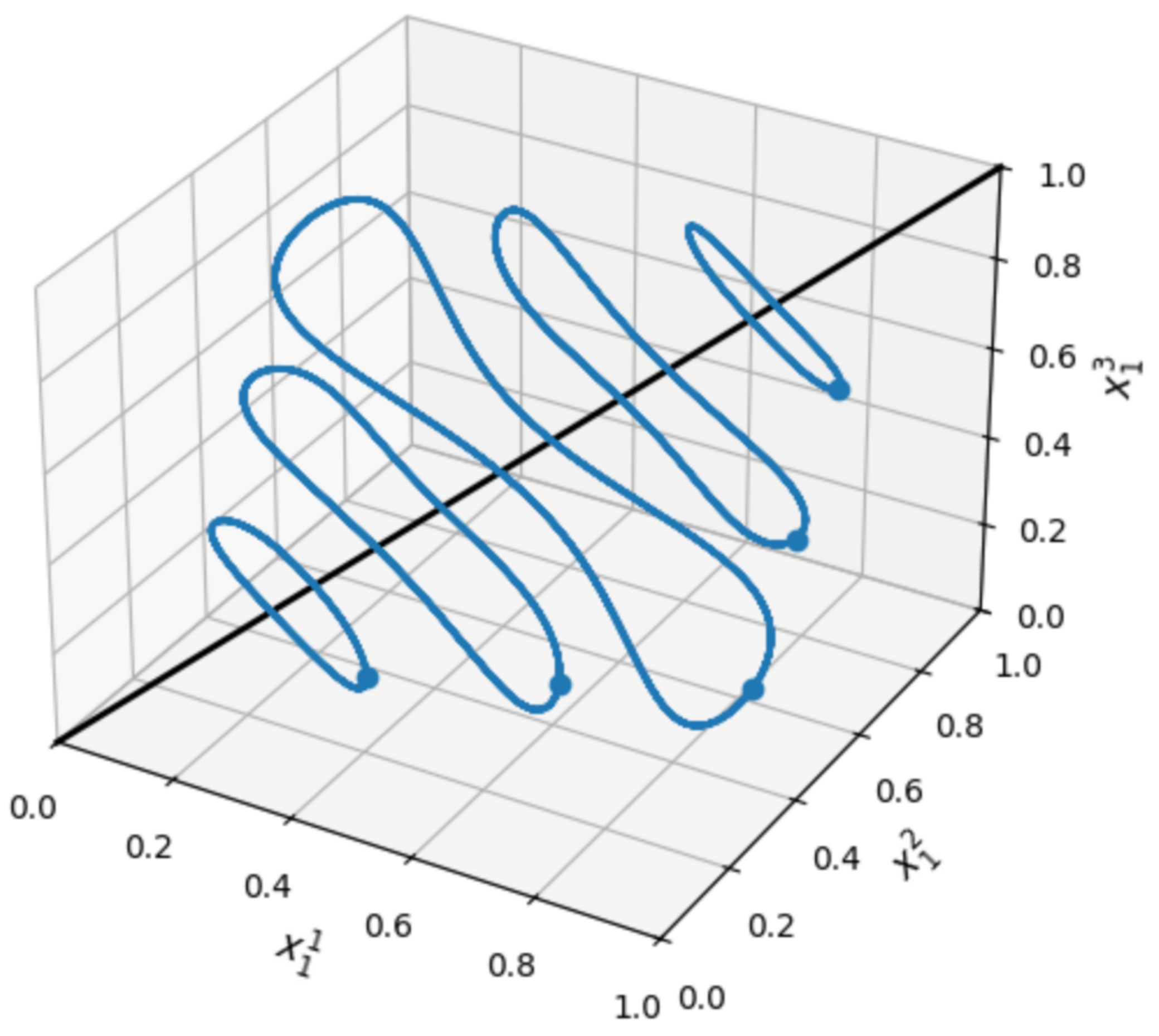}
        \subcaption{FTRL}  % ($\coeff=0$)
        \label{fig:mp0}
    \end{minipage}
    \begin{minipage}[b]{0.495\linewidth}
        \centering
        \includegraphics[keepaspectratio, scale=0.24]{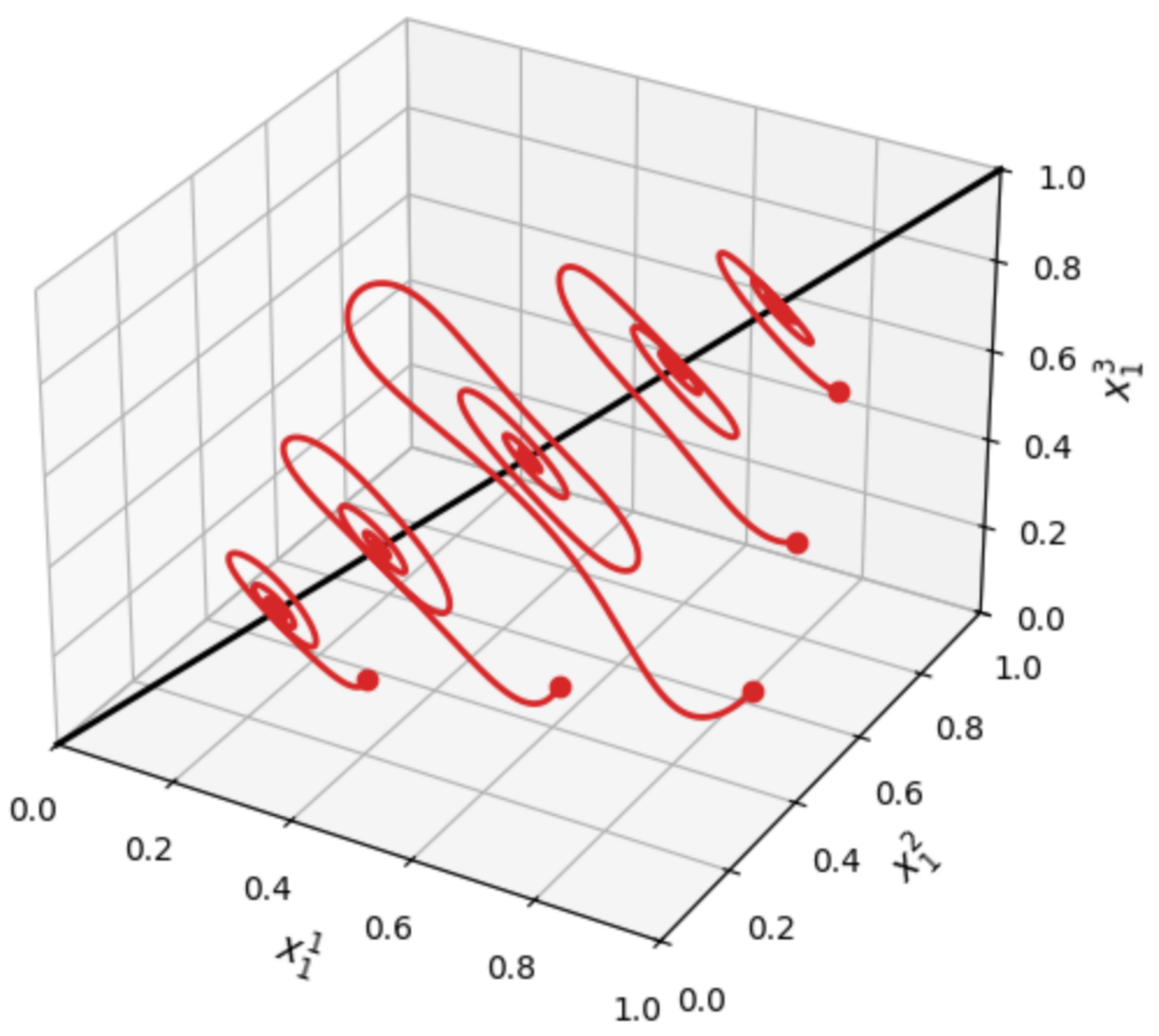}
        \subcaption{DFTRL}  % ($\coeff=0.1$)
        \label{fig:mp1}
    \end{minipage}
    \caption{
        Three-player Matching Pennies.
        The diagonal straight line from $(0,0,0)$ to $(1,1,1)$ represents the Nash equilibria.
        The solution trajectories show (a) cyclic and (b) convergent behavior.
        }
    \label{fig:mp}
\end{figure*}

In this paper, we address the aforementioned issues by leveraging an analysis of symmetry, conservation laws, and their breaking within the framework of Hamiltonian dynamics.
Specifically, our contributions are as follows:
\begin{itemize}
    \item We establish the Hamiltonian function $\hftrl$ of the learning dynamics in poly-matrix zero-sum games, with agents' strategies and cumulative payoffs serving as canonically conjugate variables, and demonstrate that the Hamiltonian system generated by $\hftrl$ produces the FTRL dynamics.
    \item We elucidate the symmetries of the Hamiltonian $\hftrl$ and derive the associated conserved quantities, revealing explicitly how the conservation of probability in strategies and the temporal invariance of the Fenchel coupling are intrinsically encoded within $\hftrl$. We also examine angular momentum-like conservation laws, which have not been studied in the literature.
    \item We propose the \emph{dissipation FTRL} (DFTRL) dynamics by introducing a perturbation designed to dissipate (monotonically decrease) the Fenchel coupling along the trajectory. We prove that the DFTRL dynamics converges to the Nash equilibrium (as numerically illustrated in \fref{fig:mp}) and coincides with the continuous optimistic and extra-gradient FTRL algorithms in the limit of infinitesimal perturbations.
\end{itemize}

%\section{Related work}
\subsection{Related work}

The analysis of learning dynamics in zero-sum games from the perspective of Hamiltonian dynamics has a long history \cite{hofbauer1996evolutionary,sato2002chaos,ostrovski2011piecewise,van2011hamiltonian,balduzzi2018mechanics,letcher2019differentiable}.
These studies typically adopt a formulation in which the canonical variables are given by a pair of strategies, thereby restricting their analysis to two-agent settings.
Moreover, the Hamiltonians proposed in these works are tailored to specific game dynamics, such as the replicator, projection, or best-response dynamics, and do not possess any symmetry giving rise to known conserved quantities.
Another line of research has focused on the Poisson structure of evolutionary games \cite{alishah2015hamiltonian}, and the conserved quantities of the game within that framework \cite{najafi2020conservative}.
While their Hamiltonian can also be defined for non-zero-sum games, it needs an equilibrium in its definition and is restricted to replicator dynamics.

To address the non-convergent behavior observed in FTRL dynamics, several modified versions of FTRL have been proposed.
Representative examples include the optimistic \cite{cai2022finitetime,daskalakis2019last,daskalakis2018training,syrgkanis2015fast,rakhlin2013optimization}, extra-gradient \cite{cai2022finitetime,lee2021fast,mertikopoulos2018optimistic,Korpelevich1976TheEM}, and negative-momentum \cite{hemmat2023lead,zhang2021suboptimality,kovachki2021continuous,gidel2019negative,polyak1964methods} FTRL algorithms.
While all these variants of FTRL are known to converge to the Nash equilibrium, their similarities and differences remain an active subject of discussion.
For instance, it has been reported that the extra-gradient FTRL converges even in time-varying games, where the payoff matrix evolves over time, whereas the other two do not \cite{feng2023last,feng2024last}.
In \sref{sec:dftrl}, we will demonstrate that the proposed DFTRL algorithm offers a unified perspective for classifying these existing variants in the continuum limit.

\section{Preliminaries}
\label{sec:preliminaries}

To fix the notation, in this section we describe a brief overview of poly-matrix games and the FTRL learning dynamics.
We basically follow the discussions in \cite{mertikopoulos2018cycles,bailey2019multi}.

We denote a graphical \defname{poly-matrix game} $\game$ by the following three-tuple:
\begin{equation}
    \game = (\agents, \actions, \payoffs),
\end{equation}
where $\agents=\set{1, \dots, n}$ represents the set of agents (players), $\actions=\set{A_1, \dots, A_n}$ are finite sets of actions that an agent $i$ takes,
\begin{equation}
    A_i = \set{a^i_1, \dots, a^i_{N_i +1}},  \quad  N_i \geq 0,
\end{equation}
and $\payoffs=\set{\payoffmat{ij}}_{i,j\in\agents}$ is a set of payoff matrices:
\begin{equation}
    \payoffmat{ij} \in \R^{\abs{A_i}\times\abs{A_j}},
    \quad
    \payoffmat{ii} \equiv 0.
\end{equation}
Each point of the $N_i$-simplex,  %\abs{A_i}
\begin{equation}
    \sspace^{N_i} \coloneq \set{z \in \R^{\abs{A_i}}_{\geq 0} \,\middle|\, \sum_{a\in A_i} z_a = 1},
\end{equation}
is called a \defname{(mixed) strategy}, whose entries correspond to the probabilities of the actions that an agent may select.
In particular, a strategy is called \defname{fully mixed} when it lies in the relative interior of the simplex.
We collectively denote a set of $n$ strategies as
\begin{equation}
    \strategies = (\xx{1}, \dots, \xx{n}) \in \sspace^{N_1} \times \cdots \times \sspace^{N_n} \eqcolon \sspace.
\end{equation}

Given the mixed strategies of agents $i$ and $j$, $\xx{i}\in\sspace^{N_i}$ and $\xx{j}\in\sspace^{N_j}$, agent $i$ receives \defname{utility} from agent $j$: $(\xx{i})^\top \payoffmat{ij} \xx{j}$.
The total utility of agent $i$ is%
\footnote{
    The $j=i$ term can be consistently included in the sum, since by our definition of $\game$ the corresponding payoff matrix is null; $\payoffmat{ii}\equiv 0$, which implies that agents do not engage in self-play (do not have self-interaction).
    The same applies in the subsequent sections generally, although we occasionally remove the $j=i$ term from the sum for clarity.
    }
\begin{equation}
    u_i(\xx{i}; \xx{j(\neq i)}) = \sum_{j(\neq i)} (\xx{i})^\top \payoffmat{ij} \xx{j},
\end{equation}
which results in $n$ optimization problems:
\begin{equation}
    \max_{\xx{i}\in\sspace^{N_i}} u_i(\xx{i}; \xx{j(\neq i)})
        = \max_{\xx{i}\in\sspace^{N_i}} (\xx{i})^\top \sum_{j(\neq i)} \payoffmat{ij} \xx{j},
\end{equation}
for $i=1, \dots, n$.
In $\game$, $\strategies^\ast = \left( \xs{1}, \dots, \xs{n} \right) \in \sspace$ is called a \defname{Nash equilibrium} if
\begin{equation}
    \begin{aligned}
        &{}^{\forall}i\in\agents, ~ {}^{\forall}\xx{i}\in\sspace^{N_i},  \\
        %\qquad
        &u_i(\xs{i}; (\xx{j(\neq i)})^\ast) \geq u_i(\xx{i}; (\xx{j(\neq i)})^\ast).
    \end{aligned}
    \label{eq:nash}
\end{equation}
That is, the Nash equilibrium is an $n$-simultaneous solution to the above optimization problem.

A game $\game$ is called \defname{zero-sum} (\defname{coordination}) if
\begin{equation}
    {}^{\forall}i,j\in\agents,
    \qquad
    \payoffmat{ji} = \sigma \big(\payoffmat{ij}\big)^\top,
\end{equation}
where $\sigma=-1$ $(+1)$.
We denote zero-sum or coordination poly-matrix games by $\sgame\coloneq (\agents, \actions, \payoffs, \sigma)$.
A significant nature of $\zgame$ is that the total utility of all the agents always vanishes due to the zero-sum property,
\begin{equation}
    \sum_{i\in\agents} u_i(\xx{i}; \xx{j(\neq i)})
        = \sum_{i,j\in\agents} (\xx{i})^\top \payoffmat{ij} \xx{j}
        = 0,
    \label{eq:total-utility}
\end{equation}
which implies that the total payoffs circulate among agents and are conserved within the game.

In the subsequent sections, we may refer to curves such as
\begin{equation}
    \phi : I \rightarrow \sspace^{N_i} \text{ or } \R^{\abs{A_i}}
\end{equation}
as the \defname{learning dynamics}, where $I$ is an interval, and write their images in the codomains by $\xx{i}(t)$ or $\yy{i}(t)$ for convenience.

\subsection{Follow the Regularized Leader}
\label{sec:ftrl}

Let $\pspace{M}$ ($\subset C^\infty(M,\R)$) be the set of strongly convex and $L$-smooth functions on a convex subset $M$ of Euclidean space.  %smooth manifold
Given initial payoff vectors $\yy{i}(0)\in\R^{\abs{A_i}}$, the agents in $\sgame$ update their strategies by%
\footnote{
    In this paper, we focus on games of $\sgame$.
    However, the discussion in this subsection is generically applicable to $\game$, not only to $\sgame$.
    }
\begin{align}
    \yy{i}(t) &= \yy{i}(0) + \int_0^t \sum_{j(\neq i)} \payoffmat{ij} \xx{j}(s) ds,  \label{eq:yftrl}  \\
    \xx{i}(t) &= \argmax_{z\in\sspace^{N_i}} \left( \inner{z}{\yy{i}(t)} - \hreg{i}(z) \right),  \label{eq:xftrl}
\end{align}
where $\inner{\cdot}{\cdot}$ is the Euclidean inner product, and $\hreg{i}\in\pspace{\sspace^{N_i}}$ are regularizer functions.  %$\langle \cdot \rangle$
%$\yy{i}(t)$ are referred to as the \defname{cumulative payoffs}.
This learning rule is known as the Follow-the-Regularized-Leader dynamics \cite{mertikopoulos2018cycles}.
For the function $\hdual{i}:\R^{\abs{A_i}}\to\R$, which is the convex conjugate of $\hreg{i}$,  %! by our assumption, hdual is also $\hdual{i}\in\pspace{\R^{\abs{A_i}}}$,
\begin{equation}
    \hdual{i}(\yy{i}) = \max_{z\in\sspace^{N_i}} \left( \inner{z}{\yy{i}} - \hreg{i}(z) \right),
    \label{eq:hdual}
\end{equation}
it is known \cite{shalev2012online} that the right-hand side of \eref{eq:xftrl} can be expressed as the gradient of $\hdual{i}$:
\begin{equation}
    \xx{i}(t) = \nabla \hdual{i}(\yy{i})\mid_{\yy{i}=\yy{i}(t)}.
    \label{eq:xftrl-dual}
\end{equation}

To summarize, the FTRL dynamics is a dynamical system consisting of the following three components:
\begin{itemize}
    \item Setup of game: $\sgame = (\agents, \actions, \payoffs, \sigma)$
    \item Regularizers: a set of $\hreg{i}\in\pspace{\sspace^{N_i}}$ or their dual $\hdual{i}$
    \item Learning rule: Equation (\ref{eq:xftrl}) or its dual (\ref{eq:xftrl-dual}), with (\ref{eq:yftrl})
\end{itemize}
In particular, it is noted that this dynamical system is characterized by the choice of payoff matrices $\set{\payoffmat{ij}}_{i,j\in\agents}$ (the ``rule'' of a game)
and the choice of ``potential functions'' $\hreg{i}$ (or $\hdual{i}$).

\section{The Hamiltonian of poly-matrix zero-sum games}
\label{sec:hftrl}

We focus on poly-matrix zero-sum games $\zgame$ and show that the FTRL dynamics in $\zgame$ is understood as a Hamiltonian system.
We investigate the system through the lens of \emph{symmetry} and the conservation laws.
For readers unfamiliar with Hamiltonian dynamics, see \appenref{append:hamiltonian}.

\subsection{FTRL dynamics as Hamiltonian systems}  %Poly-matrix zero-sum games

As noted in \eref{eq:total-utility}, the total utility in zero-sum games is a constant over time and thus could serve as a Hamiltonian function for the learning dynamics.
With this viewpoint, let us first observe that, by taking time derivatives, the FTRL dynamics in Eqs.~(\ref{eq:yftrl}) and (\ref{eq:xftrl-dual}) can be rewritten as follows:
\begin{align}
    \dd{\yy{i}(t)}{t}
        &= \sum_{j\in\agents} \payoffmat{ij} \xx{j}(t)  \nonumber \\
        &= \sum_{j\in\agents} \payoffmat{ij} \nabla \hdual{j}(\yy{j})\mid_{\yy{j}=\yy{j}(t)}  \label{eq:dyftrl},  \\
    \dd{\xx{i}(t)}{t}
        &= \nabla\nabla \hdual{i}(\yy{i})\mid_{\yy{i}=\yy{i}(t)} \dd{\yy{i}(t)}{t}  \nonumber  \\
        &= \hess\big(\hdual{i}(\yy{i}(t))\big) \sum_{j\in\agents} \payoffmat{ij} \xx{j}(t)  \label{eq:dxftrl},
\end{align}
where we have used \eref{eq:xftrl-dual} and $\hess$ is the Hessian.
A point here is that the right-hand sides of these equations are Lipschitz continuous, owing to the strong convexity of the regularizer functions.
From the standard existence and uniqueness theorem of initial value problems, Eqs.~(\ref{eq:dyftrl}) and (\ref{eq:dxftrl}) on $I$ admit a unique solution that is indeed the FTRL dynamics Eqs.~(\ref{eq:yftrl}) and (\ref{eq:xftrl-dual}).
Hence, these differential equations are equivalent to the FTRL dynamics.

Now, we define the phase space of $\zgame$ by $M_{\zgame}=T^\ast\R^{\sum_i \abs{A_i}}_{\geq 0}\simeq\R^{\sum_i \abs{A_i}}_{\geq 0}\times\R^{\sum_i \abs{A_i}}$ equipped with the symplectic structure $\simplectic\simeq J$ as described in \appenref{append:hamiltonian}.
For a given choice of $\hdual{i}\in\pspace{\R^{\abs{A_i}}}$, we define the \defname{FTRL Hamiltonian} $\hftrl: M_{\zgame}\to\R$ by
\begin{equation}
    \hftrl(\strategies, y) = \sum_{i,j\in\agents} \inner{\nabla \hdual{j}(\yy{j})}{\payoffmat{ji} \xx{i}}.
    \label{eq:ftrl-hamiltonian}
\end{equation}

\begin{thm}
    \label{thm:ftrl-hamiltonian}
    The FTRL dynamics in the form of Eqs.~(\ref{eq:dyftrl}) and (\ref{eq:dxftrl}) is a Hamiltonian system defined by \eref{eq:ftrl-hamiltonian}.
\end{thm}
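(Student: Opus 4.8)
The plan is to take the theorem at face value: to show that Hamilton's equations generated by $\hftrl$ on $M_{\zgame}$, written in the canonical coordinates $(\strategies, y)$, coincide term by term with Eqs.~(\ref{eq:dyftrl}) and (\ref{eq:dxftrl}). With the symplectic structure $\simplectic\simeq J$ and the convention fixed in \appenref{append:hamiltonian} (the base $\R^{\sum_i\abs{A_i}}_{\geq0}$ carrying the strategies $\xx{i}$ as ``positions'' and the fibre carrying the payoffs $\yy{i}$ as conjugate ``momenta''), Hamilton's equations are $\dd{\xx{i}}{t}=\nabla_{\yy{i}}\hftrl$ and $\dd{\yy{i}}{t}=-\nabla_{\xx{i}}\hftrl$, so the whole proof reduces to evaluating these two gradients and comparing. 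Since the text already established, via existence and uniqueness, that Eqs.~(\ref{eq:dyftrl})--(\ref{eq:dxftrl}) are equivalent to the integral-form FTRL dynamics Eqs.~(\ref{eq:yftrl})--(\ref{eq:xftrl-dual}), matching Hamilton's equations to Eqs.~(\ref{eq:dyftrl})--(\ref{eq:dxftrl}) finishes the argument.

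First I would differentiate in $\xx{i}$. Writing $\hftrl=\sum_{p,q\in\agents}\inner{\nabla\hdual{q}(\yy{q})}{\payoffmat{qp}\xx{p}}$, this expression is linear in each block $\xx{p}$, and the blocks are independent coordinates, so only the $p=i$ terms contribute: $\nabla_{\xx{i}}\hftrl=\sum_{q\in\agents}\big(\payoffmat{qi}\big)^\top\nabla\hdual{q}(\yy{q})$. This is where the zero-sum hypothesis enters. From $\payoffmat{qi}=\sigma\big(\payoffmat{iq}\big)^\top$ with $\sigma=-1$ we get $\big(\payoffmat{qi}\big)^\top=-\payoffmat{iq}$, hence $-\nabla_{\xx{i}}\hftrl=\sum_{q\in\agents}\payoffmat{iq}\nabla\hdual{q}(\yy{q})$, which is exactly the right-hand side of \eref{eq:dyftrl} once we use $\xx{q}=\nabla\hdual{q}(\yy{q})$ from \eref{eq:xftrl-dual}.

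Next I would differentiate in $\yy{i}$. Now only the $q=i$ terms depend on $\yy{i}$; differentiating $\inner{\nabla\hdual{i}(\yy{i})}{\payoffmat{ip}\xx{p}}$ in $\yy{i}$ produces $\hess\big(\hdual{i}(\yy{i})\big)\,\payoffmat{ip}\xx{p}$, where the symmetry of the Hessian is what lets the result be written in this form rather than its transpose. Summing over $p$ gives $\nabla_{\yy{i}}\hftrl=\hess\big(\hdual{i}(\yy{i})\big)\sum_{p\in\agents}\payoffmat{ip}\xx{p}$, i.e.\ the right-hand side of \eref{eq:dxftrl}. Thus Hamilton's equations for $\hftrl$ are precisely Eqs.~(\ref{eq:dyftrl}) and (\ref{eq:dxftrl}), and the theorem follows by the equivalence noted above.

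The main difficulty here is bookkeeping rather than computation. One must check that the $\xx{i}$-gradient and the $\yy{i}$-gradient select disjoint families of terms in the double sum (those with $p=i$ versus those with $q=i$), so no term is inadvertently differentiated twice, and one must apply the zero-sum identity with exactly the right transpose and sign: it is precisely $\sigma=-1$ that turns $\big(\payoffmat{qi}\big)^\top$ into $-\payoffmat{iq}$ and thereby gives the $\yy{i}$-equation the correct sign; for coordination games ($\sigma=+1$) the same computation would yield the wrong-sign dynamics, so $\hftrl$ genuinely uses the zero-sum structure. A minor point to state explicitly is that the diagonal terms ($q=i$ in the first gradient, $p=i$ in the second) are harmless since $\payoffmat{ii}\equiv0$, in line with the footnote convention on including or excluding $j=i$.
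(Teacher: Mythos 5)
Your proposal is correct and follows essentially the same route as the paper's own proof: compute $\nabla_{\yy{k}}\hftrl$ to recover \eref{eq:dxftrl} and $-\nabla_{\xx{k}}\hftrl$ to recover \eref{eq:dyftrl}, invoking the zero-sum identity $\big(\payoffmat{jk}\big)^\top=-\payoffmat{kj}$ exactly where the paper does. The added remarks on term bookkeeping, the vanishing diagonal blocks, and the role of $\sigma=-1$ are consistent with the paper and do not change the argument.
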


\begin{proof}
    What we need to show is that the flow equation on $M_{\zgame}$ generated by \eref{eq:ftrl-hamiltonian} accompanied with the symplectic structure reduces to the FTRL dynamics (\ref{eq:dyftrl}) and (\ref{eq:dxftrl}).
    First, the flow equation for $\strategies(t)$ (``position'') induced by \eref{eq:ftrl-hamiltonian} is
    \begin{equation}
        \begin{aligned}[b]
            \dd{\xx{k}(t)}{t}
                &= \nabla_{\yy{k}} \hftrl(\strategies, y)\mid_{x=\strategies(t),\, y=y(t)}  \\
                &= \sum_{i,j\in\agents} \delta^{kj} \left( \nabla\nabla \hdual{j}(\yy{j}(t)) \right)^\top \payoffmat{ji} \xx{i}(t)  \\
                &= \hess\big(\hdual{k}(\yy{k}(t))\big) \sum_{i\in\agents} \payoffmat{ki} \xx{i}(t),
        \end{aligned}
    \end{equation}
    which is equivalent to \eref{eq:dxftrl}.
    Next, by utilizing the symmetricity of the inner product, $\inner{\alpha}{\beta}=\inner{\beta}{\alpha}$, one finds the flow equation for $y(t)$ (``momentum''),
    \begin{equation}
        \begin{aligned}[b]
            \dd{\yy{k}(t)}{t}
                &= -\nabla_{\xx{k}} \hftrl(\strategies, y)\mid_{x=\strategies(t),\, y=y(t)}  \\
                &= -\sum_{i,j\in\agents} \delta^{ki} \big(\payoffmat{ji}\big)^\top \nabla \hdual{j}(\yy{j}(t))  \\
                &= \sum_{j\in\agents} \payoffmat{kj} \nabla \hdual{j}(\yy{j}(t)),
        \end{aligned}
    \end{equation}
    where the last equality follows from the assumption that our game is zero-sum: $\big(\payoffmat{jk}\big)^\top = -\payoffmat{kj}$ for all $j,\, k\in\agents$.
\end{proof}

The aforementioned Lipschitz continuity ensures that the Hamiltonian system of $\hftrl$ describes the FTRL dynamics in poly-matrix zero-sum games.
The energy conservation law in this Hamiltonian system leads to the conservation law of the total utility of the game $\zgame$:
\begin{equation}
    \begin{aligned}[b]
        E_{\text{FTRL}}(t)
            &= \hftrl(\strategies(t), y(t))  \\
            &= \sum_{i,j\in\agents} \inner{\xx{i}(t)}{\payoffmat{ij} \xx{j}(t)}
            \equiv 0,
    \end{aligned}
    \label{eq:energy-conservation}
\end{equation}
which is the equivalent of \eref{eq:total-utility}, as expected.
Now that the learning dynamics in $\zgame$ is characterized by our proposed Hamiltonian (\ref{eq:ftrl-hamiltonian}), we investigate the system in terms of symmetry and the conservation laws in the following subsections.

\subsection{Symmetry and conservation law from the FTRL Hamiltonian}  %Conserved quantity from the symmetry of the Hamiltonian

The Hamiltonian (\ref{eq:ftrl-hamiltonian}) is defined on the symplectic manifold $M_{\zgame}\simeq\R^{\sum_i \abs{A_i}}_{\geq 0}\times\R^{\sum_i \abs{A_i}}$.
The base manifold (configuration space) $\R^{\sum_i \abs{A_i}}_{\geq 0}$ seems too large to be the space of strategies, since the strategies $\xx{i}$ in games should be thought of as probabilities of actions and have to sum to one: $\sum_a \xx{i}_a = 1$.
Interestingly, this condition is implemented as the \emph{symmetry} of our Hamiltonian and its conservation law, as the following proposition demonstrates.

\begin{prop}
    \label{prop:simplex}
    If $\nabla\hdual{i}(\yy{i})$ are translationally invariant for all $i\in\agents$, then the solutions $\strategies(t)$ to the Hamiltonian system of $\hftrl$ are constrained onto the probability simplices $\sspace$.
\end{prop}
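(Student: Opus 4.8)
The plan is to realize the normalization $\sum_{a\in A_i}\xx{i}_a=1$ as a Noether-type conserved quantity of $\hftrl$ arising from a translation symmetry in the momentum directions, and then to pin down the value of that conserved quantity using the FTRL initial condition.

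First I would restate the hypothesis in a usable form. Saying that $\nabla\hdual{i}(\yy{i})$ is translationally invariant means $\nabla\hdual{i}(\yy{i}+c\onevec)=\nabla\hdual{i}(\yy{i})$ for all $c\in\R$, or, differentiating at $c=0$, $\hess\big(\hdual{i}(\yy{i})\big)\onevec=0$. (This is the natural situation in the FTRL setting: since $\hdual{i}$ is the convex conjugate \eref{eq:hdual} of a regularizer $\hreg{i}$ living on $\sspace^{N_i}$, where $\inner{z}{\onevec}=1$, one has $\hdual{i}(\yy{i}+c\onevec)=\hdual{i}(\yy{i})+c$, so the gradient is unchanged.) Because the Hamiltonian \eref{eq:ftrl-hamiltonian} depends on the momenta only through the vectors $\nabla\hdual{j}(\yy{j})$, it is invariant, for each $i\in\agents$, under the one-parameter translation $T^{(i)}_c:\yy{i}\mapsto\yy{i}+c\onevec$ acting trivially on $\strategies$ and on the other momenta.

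Next I would convert this symmetry into a conservation law. On $M_{\zgame}$ the flow $T^{(i)}_c$ is the Hamiltonian flow generated by $G_i(\strategies,y)=\inner{\onevec}{\xx{i}}$ (up to the overall sign fixed by the convention for $\simplectic$), so invariance of $\hftrl$ under $T^{(i)}_c$ is the statement $\pbracket{G_i}{\hftrl}=0$, whence $\inner{\onevec}{\xx{i}(t)}$ is constant along every trajectory of the Hamiltonian system. Equivalently, and without invoking Noether, one checks it directly from \eref{eq:dxftrl}: $\frac{d}{dt}\inner{\onevec}{\xx{i}}=\inner{\hess(\hdual{i}(\yy{i}))\onevec}{\sum_{j}\payoffmat{ij}\xx{j}}=0$, using symmetry of the Hessian and $\hess(\hdual{i}(\yy{i}))\onevec=0$. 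Then I would insert the FTRL initial data: $\xx{i}(0)=\argmax_{z\in\sspace^{N_i}}\big(\inner{z}{\yy{i}(0)}-\hreg{i}(z)\big)=\nabla\hdual{i}(\yy{i}(0))\in\sspace^{N_i}$, so $\inner{\onevec}{\xx{i}(0)}=1$; combined with the conservation law, $\inner{\onevec}{\xx{i}(t)}=1$ for all $t$ and all $i$, i.e. $\strategies(t)$ stays in the affine hull of $\sspace$.

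The step carrying the real content — and the one I expect to be the main obstacle — is the Noether correspondence matching the momentum-translation symmetry to the conserved quantity $\inner{\onevec}{\xx{i}}$, including getting the symplectic sign bookkeeping right; the normalization half of the simplex constraint is exactly what this symmetry encodes. The nonnegativity constraints $\xx{i}_a(t)\ge0$ are not produced by the symmetry argument at all and must be imported from the FTRL structure: the graph $\xx{i}=\nabla\hdual{i}(\yy{i})$ is flow-invariant (as used already in the proof of \thmref{thm:ftrl-hamiltonian}) and $\nabla\hdual{i}$ takes values in (the relative interior of) $\sspace^{N_i}$ by strong convexity of $\hreg{i}$, which closes the argument that $\strategies(t)\in\sspace$.
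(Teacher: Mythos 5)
Your proposal follows essentially the same route as the paper: the translation symmetry of $\hftrl$ in each $\yy{i}$ yields, via Noether's theorem (the paper's Lemma~1), conservation of $\sum_{a\in A_i}\xx{i}_a$, which the initial condition then pins to $1$. Your two additions --- the direct check $\frac{d}{dt}\inner{\onevec}{\xx{i}}=0$ using $\hess\big(\hdual{i}(\yy{i})\big)\onevec=0$, and the explicit observation that nonnegativity is not delivered by the symmetry but by the flow-invariance of the graph $\xx{i}=\nabla\hdual{i}(\yy{i})$ together with $\nabla\hdual{i}$ taking values in $\sspace^{N_i}$ --- are both correct and in fact make explicit a point the paper's proof leaves implicit.
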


The proof follows straightforwardly from the following lemma:

\begin{lem}
    \label{lem:simplex}
    If a Hamiltonian function on $M_{\zgame}$ has the translation symmetry in $\yy{i}\in\R^{\abs{A_i}}$, then the following sum of the elements of the corresponding conjugate variable is a conserved quantity:
    \begin{equation}
        \gsimplex^i(\strategies, y) = \sum_{a\in A_i} \xx{i}_a.
    \end{equation}
\end{lem}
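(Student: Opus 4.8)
The plan is to read Lemma~\ref{lem:simplex} as the Hamiltonian form of Noether's theorem: $\gsimplex^i$ is, up to sign, the momentum map generating the one-parameter group $\yy{i}\mapsto\yy{i}+c\,\onevec$ on $M_{\zgame}$, and it is conserved precisely when the Hamiltonian is invariant under that group. The argument is short, and I would organize it in three steps: (i) make the symmetry hypothesis infinitesimal; (ii) differentiate $\gsimplex^i$ along a trajectory using Hamilton's equations; (iii) read off that the derivative vanishes.

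First I would translate the hypothesis. ``Translation symmetry in $\yy{i}$'' means $H(\strategies,\dots,\yy{i}+c\,\onevec,\dots)=H(\strategies,\dots,\yy{i},\dots)$ for every $c\in\R$ and every point of $M_{\zgame}$; differentiating in $c$ at $c=0$ gives the pointwise identity $\onevec^\top\nabla_{\yy{i}}H\equiv 0$, i.e. the directional derivative of $H$ in the $\yy{i}$-block along the all-ones vector vanishes identically. Next I would differentiate $\gsimplex^i$ along an arbitrary integral curve $(\strategies(t),y(t))$ of the Hamiltonian system of $H$. Using the ``position'' half of Hamilton's equations in the form already used in the proof of Theorem~\ref{thm:ftrl-hamiltonian}, namely $\dd{\xx{i}(t)}{t}=\nabla_{\yy{i}}H\big|_{(\strategies(t),y(t))}$ with the fixed symplectic structure $\simplectic\simeq J$ of \appenref{append:hamiltonian}, one computes
\[
\dd{}{t}\,\gsimplex^i(\strategies(t),y(t))=\sum_{a\in A_i}\dd{\xx{i}_a(t)}{t}=\onevec^\top\nabla_{\yy{i}}H\big|_{(\strategies(t),y(t))}=0,
\]
the final equality being exactly the infinitesimal invariance from the first step. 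Hence $\gsimplex^i$ is constant along every solution, which is the assertion. Equivalently one may phrase this as $\pbracket{\gsimplex^i}{H}=\onevec^\top\nabla_{\yy{i}}H=0$ and invoke that Poisson-commuting functions are mutual constants of motion.

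I do not anticipate a genuine obstacle: the content collapses to a one-line computation once the symmetry is written infinitesimally. The only points requiring care are bookkeeping ones — fixing the sign and block placement in $J$ so that a function of the positions $\xx{i}$ is indeed what generates translations of the conjugate momenta $\yy{i}$ (and not the reverse), and noting that, although the configuration space $\R^{\sum_i\abs{A_i}}_{\geq 0}$ has a boundary, the relevant translations act only on the unconstrained momentum factors $\yy{i}\in\R^{\abs{A_i}}$, so no boundary subtleties enter. With the lemma in hand, Proposition~\ref{prop:simplex} follows immediately: the assumed translation invariance of each $\nabla\hdual{i}$ makes $\hftrl$ of \eref{eq:ftrl-hamiltonian} invariant under $\yy{i}\mapsto\yy{i}+c\,\onevec$ for every $i$, so each $\gsimplex^i$ is conserved; since the FTRL solution satisfies $\xx{i}(t)=\nabla\hdual{i}(\yy{i}(t))$ and $\nabla\hdual{i}$ takes values in $\sspace^{N_i}$, the conserved value is $\gsimplex^i\equiv 1$, pinning $\strategies(t)$ to the product of probability simplices $\sspace$.
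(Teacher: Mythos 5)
Your proof is correct and is essentially the paper's argument: the paper likewise identifies $-\sum_a \xx{i}_a$ as the generator of the translation $\yy{i}\mapsto\yy{i}+\epsilon\onevec$ and invokes Noether's theorem, which is exactly your observation that $\pbracket{\gsimplex^i}{H}=\onevec^\top\nabla_{\yy{i}}H=0$; you merely unwind the Noether step into the explicit one-line computation $\dd{}{t}\gsimplex^i=\onevec^\top\nabla_{\yy{i}}H=0$. The sign and block-placement caveats you flag are handled the same way in the paper (its generator is $-\gsimplex^i$ up to a constant, which conserves the same quantity).
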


This lemma is derived from Noether's theorem; see \appenref{append:hamiltonian} and \ref{append:proof-simplex}.

\begin{proof}[Proof of \propref{prop:simplex}]
    Given translationally invariant $\nabla\hdual{i}(\yy{i})$ for all $i\in\agents$, the FTRL Hamiltonian (\ref{eq:ftrl-hamiltonian}) has the translation symmetry in $\yy{i}\in \R^{\abs{A_i}}$.
    From \lemref{lem:simplex}, in this Hamiltonian system we have the conserved quantities
    \begin{equation}
        \gsimplex^i(\strategies(t), y(t)) = \sum_{a\in A_i} \xx{i}_a(t) = \text{const.},
    \end{equation}
    for all $i$.
    Suppose that the initial conditions $\sum_a \xx{i}_a(0)=1$ are imposed, then the solution trajectory of the Hamiltonian system generated by \eref{eq:ftrl-hamiltonian} remains restricted to $\strategies{(t)}\in\sspace$, completing the proof.  %$\gsimplex^i(\strategies(0), y(0))=1$
\end{proof}

We find that the Hamiltonians for both the entropic regularizer (the replicator dynamics \cite{taylor1978evolutionary,taylor1979evolutionarily,schuster1983replicator}) and the Euclidean regularizer (the projection dynamics \cite{friedman1991evolutionary,nagurney1997projected,lahkar2008projection,sandholm2008projection,mertikopoulos2016learning})
have translation symmetry in all the $\yy{i}\in\R^{\abs{A_i}}$, implying that they yield reasonable Hamiltonian dynamics in games:

\begin{cor}
    In the Hamiltonian systems of $\hftrl$ for the entropic regularizer and the Euclidean regularizer, the solution trajectories of $\strategies(t)$ are constrained onto the probability simplices $\sspace$.
\end{cor}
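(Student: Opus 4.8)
The plan is to obtain this Corollary as an immediate consequence of \propref{prop:simplex}. By that proposition, all that has to be checked is that, for each of the two regularizers, the map $\nabla\hdual{i}$ is invariant under the translation $\yy{i}\mapsto\yy{i}+c\onevec$ for every $c\in\R$ (the translation symmetry whose Noether charge, by \lemref{lem:simplex}, is $\sum_{a\in A_i}\xx{i}_a$); once this is verified for all $i\in\agents$, \propref{prop:simplex} constrains $\strategies(t)$ to $\sspace$. Since by \eref{eq:xftrl-dual} the map $\nabla\hdual{i}$ is precisely the FTRL response $\yy{i}\mapsto\argmax_{z\in\sspace^{N_i}}\bigl(\inner{z}{\yy{i}}-\hreg{i}(z)\bigr)$, it is enough to see how this $\argmax$ reacts to a shift of $\yy{i}$ along $\onevec$.

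For the entropic regularizer $\hreg{i}(z)=\sum_{a\in A_i}z_a\log z_a$ one computes the standard conjugate $\hdual{i}(\yy{i})=\log\sum_{a\in A_i}e^{\yy{i}_a}$, so $\nabla\hdual{i}=\softmax$; the cancellation of the common factor $e^{c}$ between numerator and denominator gives $\softmax(\yy{i}+c\onevec)=\softmax(\yy{i})$, i.e.\ the desired invariance (indeed $\hdual{i}$ itself only picks up the $\yy{i}$-independent summand $c$, so $\hftrl$ is manifestly invariant). For the Euclidean regularizer $\hreg{i}(z)=\tfrac12\norm{z}_2^2$ the maximizer in \eref{eq:hdual} is the Euclidean projection onto the simplex, so $\nabla\hdual{i}(\yy{i})=\Pi_{\sspace^{N_i}}(\yy{i})$, and what remains is to show this projection is unchanged under $\yy{i}\mapsto\yy{i}+c\onevec$.

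This last point is the only step requiring a small argument. The simplex $\sspace^{N_i}$ lies in the affine hyperplane $\{z:\sum_{a}z_a=1\}$, on which $\inner{\onevec}{\yy{i}-z}=\sum_a\yy{i}_a-1$ is independent of $z$; expanding $\norm{\yy{i}+c\onevec-z}_2^2=\norm{\yy{i}-z}_2^2+2c\inner{\onevec}{\yy{i}-z}+c^2\abs{A_i}$ then shows that, over $z\in\sspace^{N_i}$, the minimized objective differs from $\norm{\yy{i}-z}_2^2$ only by a $z$-independent constant, so the two minimizers coincide and $\Pi_{\sspace^{N_i}}(\yy{i}+c\onevec)=\Pi_{\sspace^{N_i}}(\yy{i})$. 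With the translation invariance of $\nabla\hdual{i}$ thus established for both regularizers, \propref{prop:simplex} yields the claim. I expect no genuine obstacle here — the only thing to watch is that the Euclidean case gives invariance only along the $\onevec$ direction rather than under arbitrary translations of $\R^{\abs{A_i}}$, but that restricted invariance is exactly what \propref{prop:simplex} asks for.
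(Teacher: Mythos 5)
Your proposal is correct and follows the paper's proof essentially verbatim: reduce to \propref{prop:simplex} and verify that $\nabla\hdual{i}$ is invariant under shifts $\yy{i}\mapsto\yy{i}+c\onevec$, with the softmax cancellation handling the entropic case. The only (harmless) deviation is in the Euclidean case, where you argue via the projection-onto-the-simplex characterization of the $\argmax$ rather than the paper's explicit closed-form gradient $\nabla\hdual{i}(\yy{i})_a=\yy{i}_a-\tfrac{1}{\abs{A_i}}\bigl(\sum_{a'}\yy{i}_{a'}-1\bigr)$; your route is in fact marginally more robust since it does not rely on the fully-mixed (interior) form of that formula.
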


\begin{proof}
    From \propref{prop:simplex}, it suffices to show the translation invariance of the corresponding gradients of the dual regularizers, $\nabla\hdual{i}$.
    For the entropic regularizer, we have
    \begin{equation}
        \nabla \hdual{i}(\yy{i})_a = \frac{\exp(\yy{i}_a)}{\sum_{a'} \exp(\yy{i}_{a'})},
    \end{equation}
    and for the Euclidean regularizer,
    \begin{equation}
        \nabla \hdual{i}(\yy{i})_a = \yy{i}_a - \frac{1}{\abs{A_i}} \left( \sum_{a'} \yy{i}_{a'} - 1 \right).
    \end{equation}
    These explicit formulae and other details can be found in \appenref{append:properties}.
    One finds that the right-hand sides of these gradients are invariant under the constant shift in $\yy{i}$, $y'^i_a=\yy{i}_a+\epsilon$ for all $a\in A_i$, which proves the statement.
\end{proof}

In addition, for the entropic and Euclidean regularizers, we know that $\sum_a \nabla\hdual{i}(\yy{i})_a = 1$, which leads us to another conserved quantity.

\begin{prop}
    \label{prop:fenchel}
    Suppose we have a game $\zgame$ with the Hamiltonian (\ref{eq:ftrl-hamiltonian}).
    If a fully-mixed Nash equilibrium exists and the elements of $\nabla\hdual{j}(\yy{j})$ sum to a constant independent of $j$, then the following function is a conserved quantity of the Hamiltonian system:
    \begin{equation}
        \gfenchel(\strategies, y) = \sum_{i\in\agents} \left( \hreg{i}(\xs{i}) + \hdual{i}(\yy{i}) - \inner{\yy{i}}{\xs{i}} \right).
    \end{equation}
\end{prop}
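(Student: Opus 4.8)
The plan is to verify $\dd{\gfenchel}{t}=0$ along the flow generated by $\hftrl$. Since the Nash profile $\strategies^\ast$ and the regularizers $\hreg{i}$ are fixed, the only time dependence in $\gfenchel(\strategies(t),y(t))$ enters through the momenta $\yy{i}(t)$. Differentiating, and using both $\nabla\hdual{i}(\yy{i})=\xx{i}$ along the FTRL trajectory (the identification already used in \eref{eq:energy-conservation}) and the momentum flow equation \eref{eq:dyftrl}, namely $\dd{\yy{i}}{t}=\sum_{j\in\agents}\payoffmat{ij}\xx{j}$, I obtain
\begin{equation}
    \dd{\gfenchel}{t}
        =\sum_{i\in\agents}\inner{\nabla\hdual{i}(\yy{i})-\xs{i}}{\dd{\yy{i}}{t}}
        =\sum_{i,j\in\agents}\inner{\xx{i}-\xs{i}}{\payoffmat{ij}\xx{j}}.
\end{equation}

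Next I would split this into an ``$\xx{i}$ part'' and an ``$\xs{i}$ part''. The first, $\sum_{i,j}\inner{\xx{i}}{\payoffmat{ij}\xx{j}}$, equals $\hftrl$ along the flow, hence vanishes identically by energy conservation \eref{eq:energy-conservation} (equivalently, the zero-sum antisymmetry $(\payoffmat{ij})^\top=-\payoffmat{ji}$ cancels the $(i,j)$ and $(j,i)$ contributions pairwise). For the second part I use the same antisymmetry to move $\xs{i}$ to the other slot, $\inner{\xs{i}}{\payoffmat{ij}\xx{j}}=\inner{(\payoffmat{ij})^\top\xs{i}}{\xx{j}}=-\inner{\payoffmat{ji}\xs{i}}{\xx{j}}$, which leaves
\begin{equation}
    \dd{\gfenchel}{t}=\sum_{j\in\agents}\inner{w^j}{\xx{j}},
    \qquad
    w^j\coloneq\sum_{i\in\agents}\payoffmat{ji}\xs{i}
    =\nabla_{\xx{j}}u_j\big(\xx{j};(\xx{-j})^\ast\big),
\end{equation}
the payoff vector of agent $j$ against the Nash profile.

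The key point is that, because $\strategies^\ast$ is fully mixed, $\xs{j}$ lies in the interior of $\sspace^{N_j}$; the first-order optimality condition for the Nash property $\xs{j}\in\argmax_{z\in\sspace^{N_j}}\inner{z}{w^j}$ then forces all entries of $w^j$ to be equal, i.e.\ $w^j=c_j\onevec$ for some scalar $c_j$ (the ``equalization'' of interior equilibria). Pairing with $\xs{j}$, which sums to one, identifies $c_j=\inner{w^j}{\xs{j}}=u_j(\xs{j};(\xx{-j})^\ast)$, the equilibrium utility of agent $j$. Therefore $\inner{w^j}{\xx{j}}=c_j\sum_{a\in A_j}\xx{j}_a$, and by the hypothesis that $\sum_{a}\xx{j}_a=\sum_{a}\nabla\hdual{j}(\yy{j})_a$ is a constant $C$ \emph{independent of $j$}, this gives $\dd{\gfenchel}{t}=C\sum_{j\in\agents}c_j$. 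Finally, $\sum_{j}c_j=\sum_{j}u_j(\xs{j};(\xx{-j})^\ast)=0$ by the vanishing of the total utility \eref{eq:total-utility} evaluated at $\strategies=\strategies^\ast$, so $\dd{\gfenchel}{t}=0$ and $\gfenchel$ is conserved.

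I expect the main obstacle to be this third step: recognizing that fully-mixedness of the Nash equilibrium is exactly what makes $w^j$ proportional to $\onevec$, and then checking that the two standing hypotheses — interior Nash, and constant $j$-independent sums of $\nabla\hdual{j}$ — mesh with the zero-sum structure so that the residual term collapses (were $C$ allowed to depend on $j$, the argument would fail). The remaining manipulations are routine bookkeeping with the antisymmetry of the payoff matrices and the conservation laws already in hand.
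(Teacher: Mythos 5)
Your proof is correct, but it takes a genuinely different route from the paper's. The paper proves \propref{prop:fenchel} via Noether's theorem: it exhibits the canonical transformation $x'^i = x^i + \nabla\hdual{i}(\yy{i}) - \xs{i}$, $y'^i = y^i$, verifies that $\hftrl$ is invariant under it, and identifies $\gfenchel$ (up to the constant $\sum_i \hreg{i}(\xs{i})$) as the generator of that transformation. You instead verify conservation directly, computing $\tfrac{d}{dt}\gfenchel = \sum_{i,j}\inner{\xx{i}-\xs{i}}{\payoffmat{ij}\xx{j}}$ and killing the two pieces separately. The algebraic core is identical in both arguments: the antisymmetry $\big(\payoffmat{ij}\big)^\top = -\payoffmat{ji}$, the interior-equilibrium equalization $\sum_i \payoffmat{ji}\xs{i} = c_j\onevec$ with $\sum_j c_j = 0$ (the paper's Eq.~(\ref{eq:nash-property}), which you correctly rederive from the KKT conditions for a fully mixed Nash point), and the hypothesis that $\sum_a \nabla\hdual{j}(\yy{j})_a$ is a $j$-independent constant $C$ so that the residual collapses to $C\sum_j c_j = 0$. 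Your version is more elementary — no symplectic machinery, no need to guess the transformation — and in fact the paper itself carries out exactly your computation inside the proof of \thmref{thm:dftrl} (Eq.~(\ref{eq:fenchel-conserved})), remarking that it is "nothing but the consequence of \propref{prop:fenchel}." What the Noether route buys is the conceptual payoff the paper is after: it exhibits $\gfenchel$ as the conserved charge of an explicit symmetry of $\hftrl$, rather than as a quantity that happens to be constant. One minor point of hygiene: you invoke $\xx{i}(t)=\nabla\hdual{i}(\yy{i}(t))$ to rewrite $\dd{\yy{i}}{t}$, which restricts you to FTRL trajectories; since the Hamiltonian flow equation (\ref{eq:dyftrl}) already reads $\dd{\yy{i}}{t}=\sum_j \payoffmat{ij}\nabla\hdual{j}(\yy{j})$, you could run the whole argument with $\nabla\hdual{j}(\yy{j})$ in place of $\xx{j}$ and obtain conservation on the full phase space, matching the scope of the Noether statement.
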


\begin{proof}[Proof sketch]
    The proof is derived by applying Noether's theorem to the Hamiltonian (\ref{eq:ftrl-hamiltonian}) along with the symmetry under the canonical transformation,
    \begin{equation}
        x'^i = x^i + \nabla\hdual{i}(\yy{i}) - \xs{i},  \quad  y'^i = y^i,
        \label{eq:symmetry-fenchel}
    \end{equation}
    for all $i\in\agents$.
    Using the assumptions, one can confirm that the FTRL Hamiltonian $\hftrl$ is invariant under this transformation:
    \begin{equation}
        \hftrl(\strategies', y') = \hftrl(\strategies, y).
    \end{equation}
    It turns out that the generator of this canonical transformation takes the form
    \begin{equation}
        g_F(\strategies, y) = \sum_{i\in\agents} \left( \hdual{i}(\yy{i}) - \inner{\yy{i}}{\xs{i}} \right) + \text{const.}
    \end{equation}
    Since $\sum_i \hreg{i}(\xs{i})$ is a constant, from Noether's theorem it follows that the function $\gfenchel$ is the conserved quantity in the Hamiltonian system of $\hftrl$.
\end{proof}

The complete proof is provided in \appenref{append:proof-fenchel}.
In the literature, the following function,
\begin{equation}
    F(p, y) = \sum_{i\in\agents} \left( \hreg{i}(p^i) + \hdual{i}(\yy{i}) - \inner{\yy{i}}{p^i} \right),
\end{equation}
is referred to as the Fenchel coupling \cite{mertikopoulos2016learning,mertikopoulos2019learning}.
The Fenchel coupling at $p=\strategies^\ast$, which measures the distance to the Nash equilibrium, is a well-known conserved quantity of the FTRL dynamics in zero-sum games \cite{mertikopoulos2018cycles}.
\propref{prop:fenchel} shows that our Hamiltonian (\ref{eq:ftrl-hamiltonian}) involves the Fenchel coupling as a conserved quantity associated with the corresponding symmetry.
Due to this, we also refer to $\gfenchel$ as the Fenchel coupling.

If we interpret the symmetry transformation (\ref{eq:symmetry-fenchel}), it can be viewed as a coordinate shift under which the total utility of the system remains invariant when each agent's `current strategy' $\xx{i}$ is adjusted by the difference between its `regularized best response' $\nabla\hdual{i}(\yy{i})$ and the equilibrium $\xs{i}$.
If an agent is at equilibrium ($\nabla\hdual{i}(\yy{i})=\xs{i}$), no shift is applied; otherwise, the strategy $x$ is adjusted by the magnitude of this error.
(Due to the zero-sum property, the Hamiltonian (\ref{eq:ftrl-hamiltonian}) is insensitive to shifts in $x$ corresponding to the difference from the best response.)
This shift coincides exactly with the gradient of the Fenchel coupling.
Then, by Noether's theorem, the Fenchel coupling, which serves as the generating function for this shift, must remain constant.
This is analogous to the conservation of momentum in physics, where the symmetry is invariance under spatial translations (shifts).

Lastly, we propose two ``angular momentum'' conservation laws for the Hamiltonian systems of $\hftrl$.
The key points are the commutativity of rotation matrices and the payoff matrices, and the equivariance of $\nabla\hdual{i}$ under the rotation.

\begin{prop}
    \label{prop:angular}
    Suppose we have a game $\zgame$ with the Hamiltonian (\ref{eq:ftrl-hamiltonian}).
    \begin{enumerate}
        \item Let $\rot_i\in\SO(\abs{A_i}),~ i\in\agents$, be rotation matrices.
            If the followings hold for all $i,j\in\agents$,
            \begin{align}
                \payoffmat{ij} \rot_j &= \rot_i \payoffmat{ij},  \\
                \nabla \hdual{i}(\rot_i \yy{i}) &= \rot_i \nabla \hdual{i}(\yy{i}),
            \end{align}
            then the following function is a conserved quantity of the Hamiltonian system:
            \begin{equation}
                \gangular(\strategies, y) = \sum_{i\in\agents} \inner{\xx{i}}{\rotd_i \yy{i}},
            \end{equation}
            where $\rot_i = e^{\epsilon \rotd_i}$ and $\rotd_i\in\sof(\abs{A_i})$.  %$\rotd_i^\top = -\rotd_i$
        \item Let $\rott\in\SO(n)$ be a rotation matrix.
            Suppose that all the action sets and the regularizers are identical for all the agents: $A_i\simeq A$ and $\hdual{i}=\hdual{}$ for all $i\in\agents$.
            If the followings hold for all $i,j\in\agents$,
            \begin{align}
                U \rott &= \rott U,  \\
                \nabla \hdual{}(\rott \yy{}) &= \rott \nabla \hdual{}(\yy{}),
            \end{align}
            where $U$ is the block payoff matrix with the entries $U_{ij}=\payoffmat{ij}$,
            then the following function is a conserved quantity of the Hamiltonian system:
            \begin{equation}
                \gangularr(\strategies, y)
                    = \sum_{i,j\in\agents} \inner{\xx{i}}{\Omega_{ij} \yy{j}}
                    \eqcolon \llangle \xx{}, \Omega \yy{} \rrangle,
            \end{equation}
            where $\rott = e^{\epsilon \Omega}$ and $\Omega\in\sof(n)$.  %$\Omega^\top = -\Omega$
    \end{enumerate}
\end{prop}

The proof is provided in \appenref{append:proof-angular}.
This proposition indicates that there may be game-theoretic analogues of the conservation laws for angular momentum in physics.
The first one shows the conservation of the total ``angular momentum'' of each agent, and the second corresponds to the conservation of the system's total angular momentum.
We currently do not have a clear intuitive interpretation of these conserved quantities in game theory and are not aware of any existing literature on this topic.
This result suggests that the Hamiltonian (\ref{eq:ftrl-hamiltonian}) may generate new conserved quantities that are not recognized in the standard FTRL dynamics context.
These results remain interesting from a mathematical standpoint and may provide a deeper understanding of learning dynamics in games.

We here provide a useful expression of \propref{prop:angular} as a corollary.

\begin{cor}
    \label{cor:angular}
    Suppose we have a game $\zgame$ with the Hamiltonian (\ref{eq:ftrl-hamiltonian}), and assume that the equivariance condition holds for below cases.
    \begin{enumerate}
        \item Suppose that all the action sets are identical, $A_i\simeq A$ for all $i\in\agents$.
            If the payoff matrices are given by a single skew-symmetric matrix $B\in\mathfrak{so}(\abs{A})$ as,
            \begin{equation}
                \payoffmat{ij} = c B,  \quad  B^\top = -B,
            \end{equation}
            where $c\in\R$ is a constant, then the rotation matrices $\rot_i=e^{\epsilon B}\in\SO(\abs{A})$ commute with the payoff matrices and the following quantity is conserved:
            \begin{equation}
                \gangular(\strategies, y) = \sum_{i\in\agents} \inner{\xx{i}}{B \yy{i}}.
            \end{equation}
        \item If the payoff matrices are given by the tensor product of a skew-symmetric matrix $\Sigma =(\sigma_{ij}) \in\sof(n)$ and a symmetric matrix $S$ as,
            \begin{equation}
                \payoffmat{ij} = \sigma_{ij} S,  \quad  S^\top = S,  %U = \Sigma \otimes S,
            \end{equation}
            then the rotation matrix $\rott=e^{\epsilon \Sigma}\in\SO(n)$ commutes with the payoff matrix $U$ and the following quantity is conserved:
            \begin{equation}
                \gangularr(\strategies, y) = \llangle \xx{}, \Sigma \yy{} \rrangle.
            \end{equation}
    \end{enumerate}
\end{cor}

One should note that the payoff matrices in these cases, $\payoffmat{ij} = c B$ and $\payoffmat{ij} = \sigma_{ij} S$, satisfy the zero-sum condition, $(\payoffmat{ij})^\top = -\payoffmat{ji}$, due to the skew-symmetry of $B$ and the symmetry of $S$.
The payoff matrices in the first case define a symmetric game, $\payoffmat{ij} = \payoffmat{ji}$, while those in the second case define a \defname{skew-symmetric game}, $\payoffmat{ij} = -\payoffmat{ji}$.
Thus, roughly speaking, a class of symmetric zero-sum games has the rotation symmetry in the strategy space with the corresponding angular momentum(-like) conservation law, while a class of skew-symmetric zero-sum games has the rotation symmetry among agents with another angular momentum conservation law.
In \appenref{append:proof-angular}, we demonstrate the explicit examples of these angular momentum conservation laws; two-player Rock-Paper-Scissors as a symmetric game and two-player Matching Pennies as a skew-symmetric game.

\subsection{Dissipation FTRL dynamics}
\label{sec:dftrl}

We now have the conserved quantity $\gfenchel$ of the Hamiltonian system, which measures the distance from a fully-mixed Nash equilibrium.
However, it implies that the learning dynamics does not converge to the Nash equilibrium, but rather shows a cyclic behavior around it.
To resolve this issue, we introduce a perturbation to the FTRL dynamics that breaks the conservation law for $\gfenchel$, so that the learning dynamics converges to the Nash equilibrium, while the Hamiltonian itself (the total utility) is kept as in \eref{eq:energy-conservation} to sustain the global zero-sum property of the game.

We consider a modified FTRL dynamics by adding perturbation terms to Eqs.~(\ref{eq:dyftrl}) and (\ref{eq:dxftrl}), with $\coeff\in\R_{\geq0}$,
\begin{equation}
    \begin{aligned}
        \dd{\xx{i}(t)}{t}
            &= \hess\big(\hdual{i}(\yy{i}(t))\big) \sum_{j\in\agents} \payoffmat{ij} \xx{j}(t) + \coeff f^i(x(t), y(t)),  \\
        \dd{\yy{i}(t)}{t}
            &= \sum_{j\in\agents} \payoffmat{ij} \nabla \hdual{j}(\yy{j}(t)) + \coeff g^i(x(t), y(t)).
    \end{aligned}
    \label{eq:dftrl}
\end{equation}
We observe that the energy conservation law \eref{eq:energy-conservation} holds as long as the learning dynamics satisfies the relations $\xx{i}(t) = \nabla\hdual{i}(\yy{i}(t))$.
From this observation, one can confirm that the FTRL Hamiltonian remains conserved along the solutions to the modified FTRL dynamics \eref{eq:dftrl} under the perturbation relations,
\begin{equation}
    f^i(x, y) = \hess\big(\hdual{i}(\yy{i})\big) g^i(x, y).
    \label{eq:consistency}
\end{equation}

For generic perturbations $f$ and $g$ that do not satisfy \eref{eq:consistency}, the solution is not guaranteed to take the form $\xx{i}(t)=\nabla\hdual{i}(\yy{i}(t))$, and consequently the Hamiltonian is not necessarily conserved.
From the viewpoint of Hamiltonian conservation, there is freedom in the choice of $g$ (or $f$).
Leveraging this freedom, we can deform the dynamics while preserving the Hamiltonian, i.e., the zero-sum nature of the game.
This property is a significant consequence of the FTRL Hamiltonian (\ref{eq:ftrl-hamiltonian}); in general Hamiltonian systems, adding a perturbation such as \eref{eq:dftrl} would typically break conservation of the Hamiltonian.

\begin{thm}
    \label{thm:dftrl}
    Suppose we have a game $\zgame$.
    If a fully-mixed Nash equilibrium exists and the elements of $\nabla\hdual{j}(\yy{j})$ sum to a constant independent of $j$,
    then the modified FTRL dynamics \eref{eq:dftrl} together with the following perturbations for all $i\in\agents$ under the relations \eref{eq:consistency} lead to convergence to the Nash equilibrium:
    \begin{equation}
        g^i(x, y) = \sum_{j\in\agents} \payoffmat{ij} \hess\big(\hdual{j}(\yy{j})\big) \sum_{k\in\agents} \payoffmat{jk} \xx{k}.
        \label{eq:perturbation}
    \end{equation}
\end{thm}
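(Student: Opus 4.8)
The plan is to exhibit the Fenchel coupling $\gfenchel$ of \propref{prop:fenchel}, evaluated at the fully mixed Nash equilibrium $\strategies^\ast$, as a strict Lyapunov function for the DFTRL dynamics \eref{eq:dftrl} with the perturbation \eref{eq:perturbation}, and then to upgrade the resulting monotone decrease into convergence to an equilibrium via a LaSalle-type argument. First I would note that the consistency relation \eref{eq:consistency} keeps the identities $\xx{i}(t)=\nabla\hdual{i}(\yy{i}(t))$ valid along the flow (this is also what preserves the energy \eref{eq:energy-conservation}), so the state is governed by the $y$-dynamics alone; since $\gfenchel$ depends on $\strategies$ only through the fixed $\strategies^\ast$ one has $\nabla_{\strategies}\gfenchel\equiv 0$, and differentiating along \eref{eq:dftrl} gives
\begin{equation}
    \dd{}{t}\gfenchel = \sum_{i\in\agents}\inner{\xx{i}-\xs{i}}{\sum_{j\in\agents}\payoffmat{ij}\xx{j}} + \coeff\sum_{i\in\agents}\inner{\xx{i}-\xs{i}}{g^i}.
\end{equation}
The first sum is exactly the time derivative of $\gfenchel$ along the unperturbed FTRL flow, which vanishes identically by \propref{prop:fenchel}.

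Next I would evaluate the perturbation term. Substituting $g^i=\sum_{j}\payoffmat{ij}\hess(\hdual{j}(\yy{j}))\,v^j$ with $v^j\coloneq\sum_{k}\payoffmat{jk}\xx{k}$, moving the matrices across the inner product, and using the zero-sum antisymmetry $(\payoffmat{ij})^\top=-\payoffmat{ji}$, the coefficient multiplying $\hess(\hdual{j})v^j$ becomes $-v^j+\sum_i\payoffmat{ji}\xs{i}$. Because $\strategies^\ast$ maximizes each agent's linear payoff over the interior of its simplex, first-order optimality forces $\sum_i\payoffmat{ji}\xs{i}=c_j\onevec$ for some scalar $c_j$; and the hypothesis that $\sum_a\nabla\hdual{j}(\yy{j})_a$ is constant in $\yy{j}$ gives $\hess(\hdual{j}(\yy{j}))\,\onevec=0$, so the $c_j\onevec$ contribution drops out. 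The result is
\begin{equation}
    \dd{}{t}\gfenchel = -\coeff\sum_{j\in\agents}\inner{v^j}{\hess(\hdual{j}(\yy{j}))\,v^j}\le 0,
\end{equation}
the inequality following from convexity of $\hdual{j}$, with equality precisely when every payoff vector $v^j$ lies in $\ker\hess(\hdual{j}(\yy{j}))=\mathrm{span}\{\onevec\}$, i.e. when each agent is indifferent among all actions, which is exactly the condition that $\strategies$ is a Nash equilibrium.

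For the convergence conclusion I would argue as follows. The coupling $\gfenchel\ge 0$ (Fenchel--Young inequality) is non-increasing, hence converges, and $\int_0^\infty\sum_j\inner{v^j}{\hess(\hdual{j})v^j}\,dt<\infty$. The strategies $\strategies(t)$ lie in the compact product of simplices, so they admit a limit point $\hat\strategies$; combining the integral bound with a LaSalle/invariance argument — after quotienting out the $\onevec$-translation in each $\yy{i}$, along which $\gfenchel$ is invariant, so that the sublevel sets of $\gfenchel$ become precompact (using that the Bregman gap blows up as $\xx{i}$ approaches the boundary) — shows that $\hat\strategies$ must be a Nash equilibrium. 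It is then a fully mixed one, so $\gfenchel$ referenced at $\hat\strategies$ is again a non-increasing function of $t$; it tends to $0$ along the subsequence converging to $\hat\strategies$, since the Fenchel--Young gap at $\hat\strategies$ is at most $\tfrac{L}{2}\norm{\hat\strategies-\strategies}^2$ by $L$-smoothness of the regularizers $\hreg{i}$, hence to $0$ for all $t$ by monotonicity; finally $\mu$-strong convexity of the $\hreg{i}$ gives $\norm{\strategies(t)-\hat\strategies}^2\le\tfrac{2}{\mu}\,\gfenchel^{\hat\strategies}(t)\to 0$, so $\strategies(t)\to\hat\strategies$, a Nash equilibrium.

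The algebraic identity for $\dd{}{t}\gfenchel$ and the $\le 0$ sign are routine; I expect the main obstacle to be the asymptotic step — controlling the possibly unbounded momenta $\yy{i}(t)$ so that the $\omega$-limit set genuinely lies in the Nash set even though $\hess(\hdual{j})$ can degenerate ``at infinity'' for regularizers such as the entropic one, and promoting ``convergence to the Nash set'' into ``convergence to a single equilibrium.'' Making the quotient-by-$\onevec$ precompactness argument rigorous, or alternatively invoking Barbalat's lemma together with a uniform-continuity estimate on $\dd{}{t}\gfenchel$, is the crux; everything else is bookkeeping with the zero-sum structure and the properties of $\hdual{i}$.
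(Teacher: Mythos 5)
Your proposal follows essentially the same route as the paper's proof: use the consistency relations to reduce to the $y$-dynamics, show the unperturbed contribution to $\tfrac{d}{dt}\gfenchel$ vanishes via the interior Nash condition $\sum_i \payoffmat{ji}\xs{i}=\lambda^j\onevec$ with $\sum_j\lambda^j=0$, and show the perturbation contributes $-\coeff\sum_j\inner{v^j}{\hess(\hdual{j})v^j}\le 0$ using $\hess(\hdual{j})\onevec=0$; your expression agrees with the paper's (which writes $v^j$ with $\xx{k}-\xs{k}$ inserted) precisely because of that kernel identity. You are in fact more careful than the paper on two points — you correctly treat the Hessian as only positive semidefinite with kernel $\mathrm{span}\{\onevec\}$ (the paper calls it positive-definite), and you at least sketch the LaSalle/precompactness step needed to pass from $\dot\gfenchel\le 0$ to genuine convergence, a step the paper simply asserts.
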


\begin{proof}[Proof sketch]
    We show that the Fenchel coupling $\gfenchel$ is monotonically decreasing under the modified FTRL dynamics Eqs.~(\ref{eq:dftrl}) with (\ref{eq:consistency}) and (\ref{eq:perturbation}).
    Using the assumptions, the time evolution of the Fenchel coupling turns out to be
    \begin{equation}
        \begin{aligned}[b]
            \frac{d}{dt}\gfenchel\left( \strategies(t), y(t) \right)
            = \coeff \sum_{i\in\agents} \inner{g^i(\strategies(t), y(t))}{\xx{i}(t) - \xs{i}}  \\
                = -\coeff \sum_{j} \bigg\langle \hess\big(\hdual{j}(\yy{j}(t))\big) \sum_{k} \payoffmat{jk} \left(\xx{k}(t) - \xs{k}\right),  \\
                    {\sum_{i} \payoffmat{ji} \left(\xx{i}(t) - \xs{i}\right)} \bigg\rangle.
        \end{aligned}
    \end{equation}
    Note that $\strategies(t)$ and $y(t)$ here are the solution to \eref{eq:dftrl}.
    Since $\coeff\geq 0$ and the Hessian of the dual regularizers $\hdual{j}$ is positive-definite, we obtain
    \begin{equation}
        \frac{d}{dt}\gfenchel\left( \strategies(t), y(t) \right) \leq 0.
    \end{equation}
    The equality is reached by $\coeff=0$, which reduces to the ordinary FTRL dynamics.
    Thus, in the modified FTRL dynamics for $\coeff>0$, we have the convergent solutions to the Nash equilibrium.
\end{proof}

We illustrate the complete proof in \appenref{append:proof-dftrl}.
It is noted that, from the relations (\ref{eq:consistency}), the learning dynamics turns out to satisfy $\xx{i}(t) = \nabla\hdual{i}(\yy{i}(t))$.
Due to this dissipative property of the Fenchel coupling $\gfenchel$, we refer to the proposed modified FTRL dynamics \eref{eq:dftrl} with the relations (\ref{eq:consistency}) and (\ref{eq:perturbation}) as the \defname{dissipation FTRL} dynamics.
The DFTRL dynamics generically converges to the Nash equilibrium under the perturbations \eref{eq:perturbation}, which can be understood as an effective tuning of the payoff matrices of the game:
\begin{equation}
    \begin{aligned}
        &\xx{i}(t) = \nabla\hdual{i}(\yy{i}(t)),  \\
        &\dd{\yy{i}(t)}{t}
            = \sum_{j\in\agents} \payoffmat{ij} \nabla \hdual{j}(\yy{j}(t)) + \coeff g^i(x(t), y(t))  \\
            %&= \sum_{j\in\agents} \payoffmat{ij} \xx{j}(t) + \coeff \sum_{j\in\agents} \payoffmat{ij} \hess\big(\hdual{j}(\yy{j}(t))\big) \sum_{k\in\agents} \payoffmat{jk} \xx{k}(t)  \\
            &= \sum_{j\in\agents} \left( \payoffmat{ij} + \coeff \sum_{k\in\agents} \payoffmat{ik} \hess\big(\hdual{k}(\yy{k}(t))\big) \payoffmat{kj} \right) \xx{j}(t).
    \end{aligned}
    \label{eq:dftrl2}
\end{equation}
The algorithm of DFTRL is interpreted as follows.
The agents play a game $\zgame$ with the payoff matrices satisfying the \emph{local} zero-sum property, $\left(\payoffmat{ij}\right)^\top=-\payoffmat{ji}$.
Meanwhile, a ``game master'' secretly tunes the actual payoffs by \eref{eq:perturbation} to converge the learning dynamics while maintaining the \emph{global} zero-sum property \eref{eq:energy-conservation}, namely, the conservation of the total utility.
We will demonstrate in numerical simulations in the next section that the DFTRL dynamics really converges to the Nash equilibrium.
To close this section, we would like to mention the relation between our DFTRL algorithm and the known last-iterate convergent FTRL algorithms.

\begin{prop}
    \label{prop:co-ceg}
    For small $\coeff$, the DFTRL dynamics is equivalent to the first-order expansion of the continuous optimistic \cite{cai2022finitetime,daskalakis2019last,daskalakis2018training,syrgkanis2015fast,rakhlin2013optimization}
    and the continuous extra-gradient \cite{cai2022finitetime,lee2021fast,mertikopoulos2018optimistic,Korpelevich1976TheEM} FTRL dynamics.
\end{prop}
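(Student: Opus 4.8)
The plan is to write down the continuous-time limits of the optimistic and the extra-gradient FTRL algorithms, each carrying a small extrapolation (look-ahead) parameter that I will identify with $\coeff$, then Taylor-expand both vector fields to first order in $\coeff$ and check that the two expansions coincide with the DFTRL dynamics in the form \eref{eq:dftrl2}. Because the consistency relation \eref{eq:consistency} guarantees that DFTRL trajectories obey $\xx{i}(t)=\nabla\hdual{i}(\yy{i}(t))$ exactly, and the analogous relation holds for the optimistic and extra-gradient schemes by construction, it is enough to compare the momentum ($\yy{i}$) equations. Throughout I would abbreviate the payoff vector of agent $i$ by $v^i(x):=\sum_{j\in\agents}\payoffmat{ij}\xx{j}$, so that vanilla FTRL is simply $\dd{\yy{i}}{t}=v^i(x)$, while the DFTRL correction in \eref{eq:dftrl2} is $\coeff\sum_{j\in\agents}\payoffmat{ij}\hess(\hdual{j}(\yy{j}))v^j(x)=\coeff\,g^i(x,y)$ with $g^i$ as in \eref{eq:perturbation}.

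For the continuous extra-gradient FTRL with step $\coeff$, each agent first forms a predictor via a half-step, $\tilde{y}^i=\yy{i}+\coeff\,v^i(x)$, $\tilde{x}^i=\nabla\hdual{i}(\tilde{y}^i)$, and then updates $\dd{\yy{i}}{t}=v^i(\tilde{x})$. Expanding the predictor strategy by the chain rule, $\tilde{x}^j=\nabla\hdual{j}(\yy{j})+\coeff\,\hess(\hdual{j}(\yy{j}))\,v^j(x)+\order{\coeff^2}$, and substituting into $v^i(\tilde{x})=\sum_{j\in\agents}\payoffmat{ij}\tilde{x}^j$ gives $\dd{\yy{i}}{t}=v^i(x)+\coeff\,g^i(x,y)+\order{\coeff^2}$, which is exactly \eref{eq:dftrl2} to first order. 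For the continuous optimistic FTRL, the optimism amounts to feeding in the one-step-ahead predicted payoff, $\dd{\yy{i}}{t}=v^i(x(t))+\coeff\,\dd{}{t}v^i(x(t))$; differentiating through $\xx{j}=\nabla\hdual{j}(\yy{j})$ yields $\dd{}{t}v^i(x(t))=\sum_{j\in\agents}\payoffmat{ij}\hess(\hdual{j}(\yy{j}))\dd{\yy{j}}{t}$, and inserting the leading-order relation $\dd{\yy{j}}{t}=v^j(x)+\order{\coeff}$ turns this into $\dd{}{t}v^i=g^i(x,y)+\order{\coeff}$, hence again $\dd{\yy{i}}{t}=v^i(x)+\coeff\,g^i(x,y)+\order{\coeff^2}$, matching \eref{eq:dftrl2}.

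The hard part will not be the algebra, which is just the two chain-rule expansions above, but fixing the ``right'' continuous-time limits of the two discrete schemes and checking that their normalizations agree. Concretely, I would need to justify the scaling of the discrete step sizes so that the optimism/extrapolation correction surfaces at order $\coeff$ (rather than being absorbed into the leading FTRL flow or pushed to $\order{\coeff^2}$), and to verify that the half-step amplitude in extra-gradient and the prediction horizon in optimism are chosen so that both first-order corrections equal the same $\coeff\,g^i(x,y)$; a clean way to do this is to present both continuous dynamics uniformly as ``FTRL driven by the payoff linearly predicted a time $\coeff$ ahead,'' i.e.\ $\dd{\yy{i}}{t}=v^i(x(t))+\coeff\,\dd{}{t}v^i(x(t))+\order{\coeff^2}$, and expand that single expression. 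Finally I would remark that the $\order{\coeff^2}$ remainders of the two expansions generically differ, so the equivalence is genuinely a first-order statement, consistent with the known behavioral differences between optimistic and extra-gradient FTRL (e.g.\ in time-varying games).
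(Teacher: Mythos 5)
Your proposal is correct and follows essentially the same route as the paper: the paper likewise derives the continuous optimistic dynamics $\dot{y}^i=\sum_j\payoffmat{ij}x^j+\coeff\sum_j\payoffmat{ij}\dot{x}^j$ and the continuous extra-gradient dynamics $\dot{y}^i=\sum_j\payoffmat{ij}\nabla\hdual{j}(y^j+\coeff\sum_k\payoffmat{jk}x^k)$ from the discrete update rules with $\coeff$ held fixed as $\step\to 0$, then Taylor/perturbatively expands each to order $\coeff$ and matches the result to \eref{eq:dftrl2}. The only cosmetic difference is that for the optimistic case the paper writes out the full perturbative series in $\coeff$ before truncating, whereas you substitute the leading-order flow directly into the correction term; both yield the same first-order coefficient $\coeff\,g^i(x,y)$.
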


The proof is provided in \appenref{append:cftrl}.
In the literature, the optimistic and extra-gradient FTRL dynamics are known to converge to the Nash equilibrium.
Since the perturbation coefficient is set to be small enough empirically, this proposition suggests that our DFTRL algorithm provides a unified and in-depth understanding of the mechanism behind the two.
We would like to note that the continuous negative-momentum \cite{hemmat2023lead,zhang2021suboptimality,kovachki2021continuous,gidel2019negative,polyak1964methods} FTRL dynamics reduces to the ordinary FTRL dynamics, rather than the DFTRL, which we also verify in \appenref{append:cftrl}.

\begin{figure*}[t]
    \begin{minipage}[b]{0.325\linewidth}
        \centering
        \includegraphics[keepaspectratio, scale=0.28]{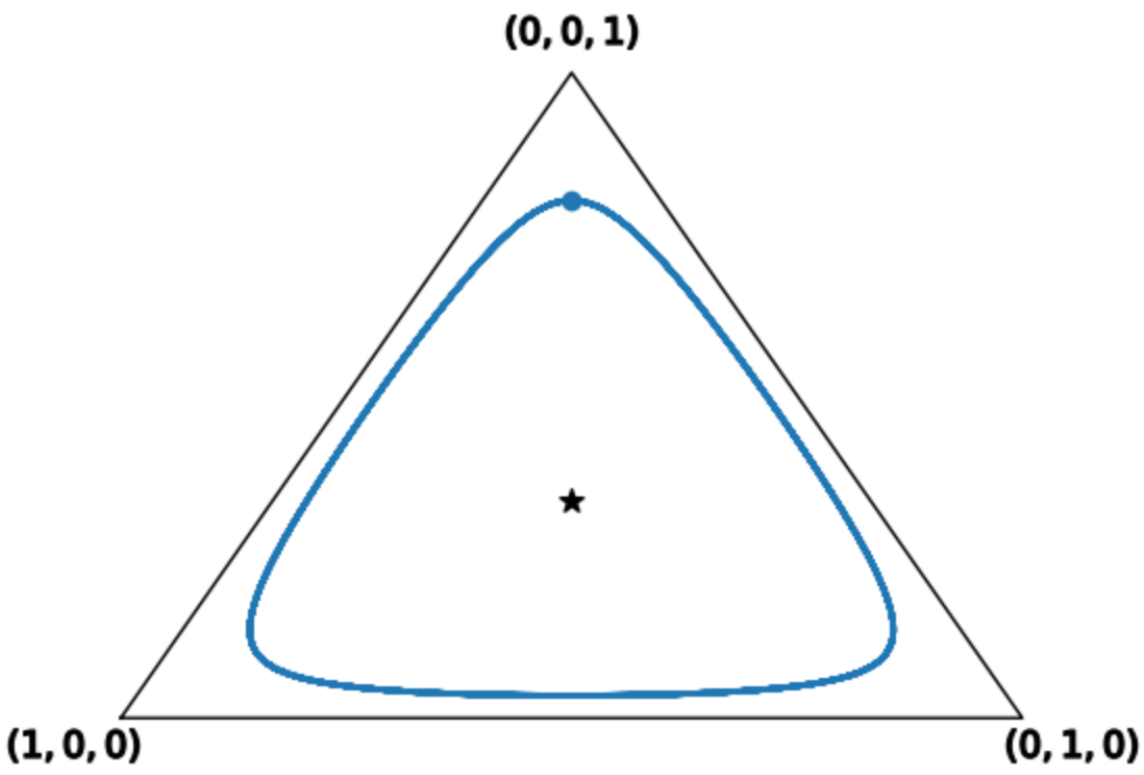}
        \subcaption{FTRL ($\coeff=0$)}
        \label{fig:rps0}
    \end{minipage}
    \begin{minipage}[b]{0.325\linewidth}
        \centering
        \includegraphics[keepaspectratio, scale=0.28]{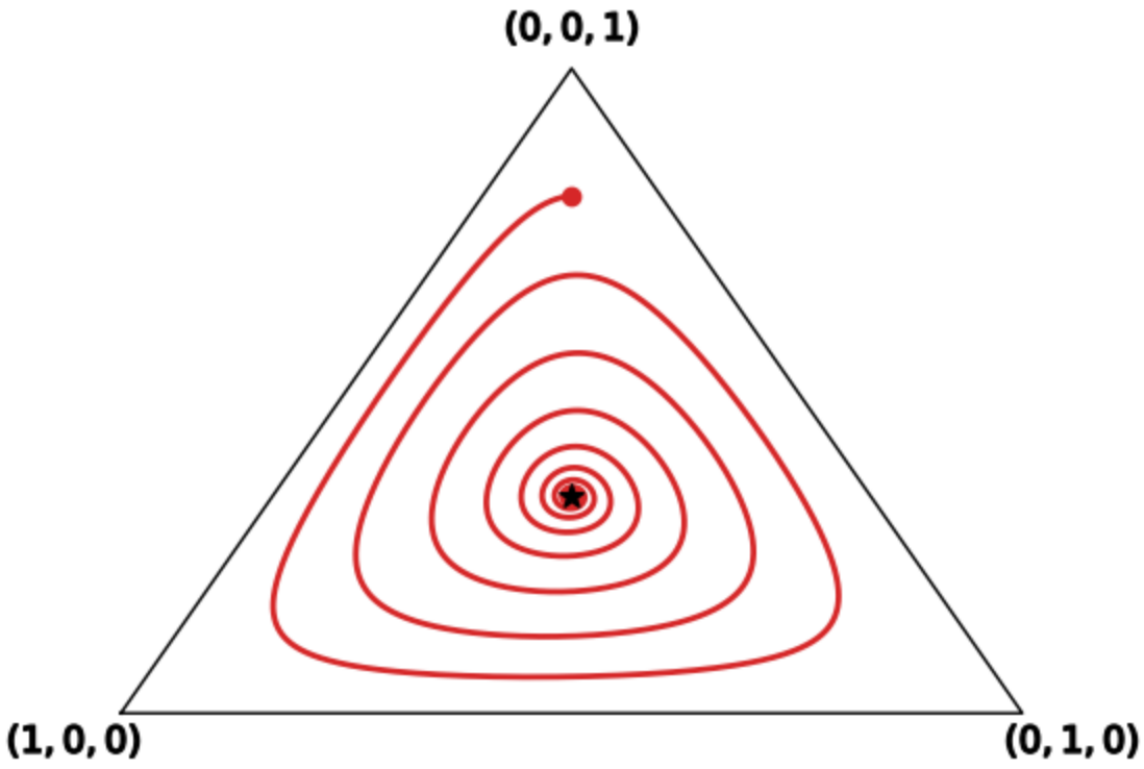}
        \subcaption{DFTRL ($\coeff=0.15$)}
        \label{fig:rps1}
    \end{minipage}
    \begin{minipage}[b]{0.33\linewidth}
        \centering
        \includegraphics[keepaspectratio, scale=0.3]{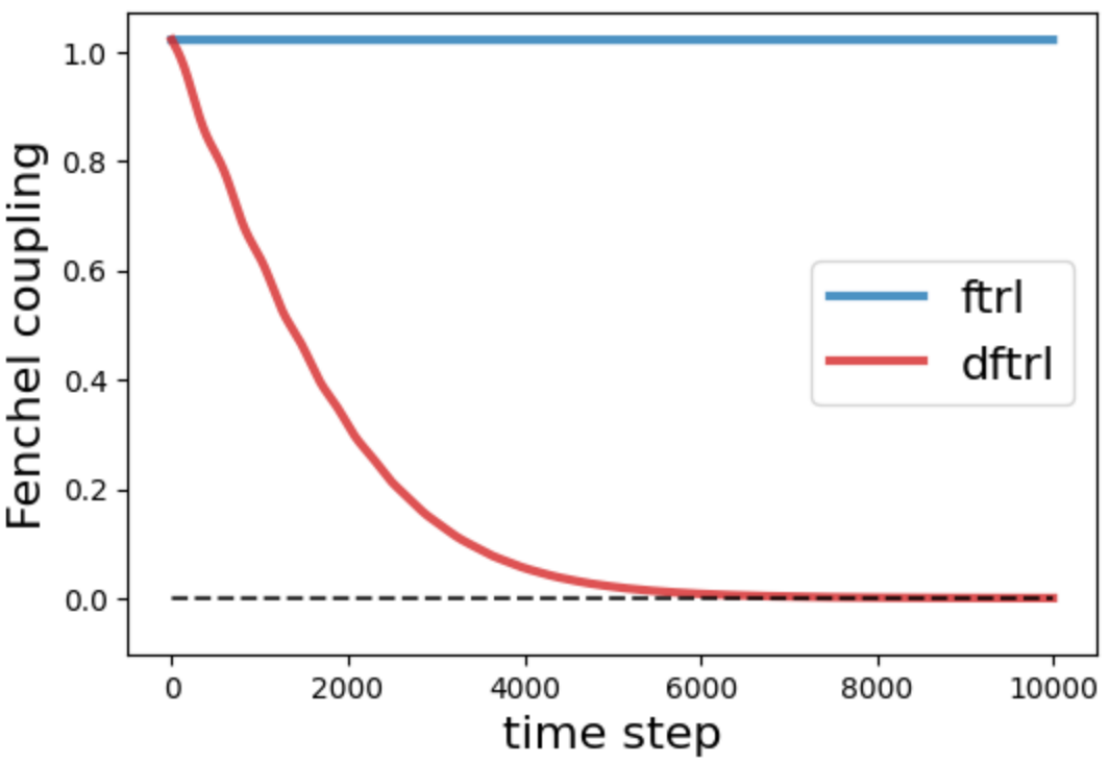}
        \subcaption{Fenchel coupling $\gfenchel$}
        \label{fig:fenchel-rps}
    \end{minipage}
    \caption{Two-player Rock-Paper-Scissors. The stars at the center of the simplices represent the Nash equilibrium.}
    \label{fig:rps}
\end{figure*}

\section{Experiments}
\label{sec:experiments}

We present two experimental simulations that demonstrate our theoretical results.%
\footnote{
    The code is available at \url{\githubrepo}.  %! available at \url{\githubrepo}.  %! provided in the supplementary materials.
}
We illustrate typical examples in this section.
Further details and additional simulations are provided in \appenref{append:experiments}.

The first example is the two-player Rock-Paper-Scissors game.
We pick the entropic regularizer for $i=1$ and $2$:
\begin{equation}
    \hreg{i}(\xx{i}) = \sum_a \xx{i}_a \log\xx{i}_a,  \quad  \hdual{i}(\yy{i}) = \mathrm{lse}(\yy{i}).
\end{equation}
We solve the dynamical system \eref{eq:dftrl} under (\ref{eq:consistency}) and (\ref{eq:perturbation}), with the initial condition $\xx{1}(0)=\xx{2}(0)=(0.1, 0.1, 0.8)$.
Figure \ref{fig:rps} shows the result.
\fref{fig:rps0} and \fref{fig:rps1} represent the solution trajectories of $\xx{1}(t)$ for $\coeff=0$ and $0.15$, respectively.
The stars at the center of the simplices represent the Nash equilibrium: $\xs{i}=(1/3, 1/3, 1/3)$.
As mentioned in the previous section, $\coeff=0$ reduces the system to the ordinary FTRL dynamics, and we see the cyclic solution in \fref{fig:rps0}, as expected.
\fref{fig:fenchel-rps} shows that the Fenchel coupling $\gfenchel$ for DFTRL monotonically decreases, while it is conserved for FTRL, as proven in \propref{prop:fenchel} and \thmref{thm:dftrl}.

\begin{comment}
\begin{figure*}[t]
    \begin{minipage}[b]{0.325\linewidth}
        \centering
        \includegraphics[keepaspectratio, scale=0.28]{fig/rps0.pdf}
        \subcaption{FTRL ($\coeff=0$)}
        \label{fig:rps0}
    \end{minipage}
    \begin{minipage}[b]{0.325\linewidth}
        \centering
        \includegraphics[keepaspectratio, scale=0.28]{fig/rps1.pdf}
        \subcaption{DFTRL ($\coeff=0.15$)}
        \label{fig:rps1}
    \end{minipage}
    \begin{minipage}[b]{0.33\linewidth}
        \centering
        \includegraphics[keepaspectratio, scale=0.3]{fig/fenchel_rps.pdf}
        \subcaption{Fenchel coupling $\gfenchel$}
        \label{fig:fenchel-rps}
    \end{minipage}
    \caption{Two-player Rock-Paper-Scissors. The stars on the center of the simplices represent the Nash equilibrium.}
    \label{fig:rps}
\end{figure*}
\end{comment}

The second case is the three-player Matching Pennies game.
We solve the dynamical system as in the previous example with the entropic regularizer, and the result is shown in \fref{fig:mp}.
These figures depict the time evolution of the tuple $(\xx{1}_1(t), \xx{2}_1(t), \xx{3}_1(t))$ in the unit cube.
This game has the continuum of Nash equilibria,
\begin{equation}
    (\xs{1}_1, \xs{2}_1, \xs{3}_1)=(p, p, p),  \qquad  0\leq p \leq 1,
\end{equation}
which the diagonal straight lines represent in \fref{fig:mp}.
The solution trajectories in \fref{fig:mp0} and \fref{fig:mp1} correspond to the cases of $\coeff=0$ and $0.1$, respectively.
Again, we observe that the solutions for $\coeff=0$ exhibit cyclic behavior, and for $\coeff=0.1$ monotonically converge to the Nash equilibrium.

\section{Discussion}
\label{sec:discussion}

This paper discussed a novel formulation and algorithms for the learning dynamics in games.
We established the Hamiltonian of poly-matrix zero-sum games and showed that the FTRL dynamics is indeed the Hamiltonian system of the proposed $\hftrl$, \eref{eq:ftrl-hamiltonian}, in \thmref{thm:ftrl-hamiltonian}.
We examined the Hamiltonian system through the lens of symmetry and found that the conserved quantities in the FTRL dynamics are intrinsically encoded in $\hftrl$ (Propositions \ref{prop:simplex}, \ref{prop:fenchel}, and \ref{prop:angular}).
The Fenchel coupling $\gfenchel$, one such conserved quantity, measures the distance to the Nash equilibrium, the conservation of which leads to the non-convergent learning dynamics.
Considering the perturbation that monotonically decreases $\gfenchel$ but preserves the conservation law of $\hftrl$, we proposed the DFTRL algorithm in \thmref{thm:dftrl}, which yields convergent dynamics to the Nash equilibrium.
We also clarified the relation between DFTRL and the known last-iterate convergent algorithms in \propref{prop:co-ceg}.

A limitation of this study is that we considered only fully-mixed strategies.
This is simply to ensure compatibility with the conservation laws.
Including a proper treatment on the boundary of the strategy space $\sspace$ would provide a more complete description of the formulation.
Beyond this, our work opens many avenues for future research.
At first, different types of perturbations may exist that yield convergent dynamics, or there might be a uniqueness theorem for our perturbation.
In fact, under certain conditions, the known last-iterate convergent FTRL algorithms reduce to the DFTRL as demonstrated in \propref{prop:co-ceg}.
Such considerations would further deepen our understanding of convergence in learning in games.
Next, since our Hamiltonian $\hftrl$ reproduces well-known conserved quantities and even generates new ones (see also \appenref{append:extra-conservation-laws}), other properties characterizing poly-matrix zero-sum games may also be derived from $\hftrl$.
The symmetries (of agents or actions) have been of interest in game theory for a long time \cite{nash1951non,tewolde2025computing}.
In connection with these symmetries, our Hamiltonian could further derive the corresponding conservation laws and potentially provide insights into solving games.
Third, the Hamiltonian $\hftrl$ could serve as a promising starting point for the quantum version of the FTRL dynamics in poly-matrix zero-sum games.
As we have the canonically conjugate variables satisfying $\{\xx{i}_a,\yy{j}_b\}_\mathrm{P}=\delta^{ij}\delta_{ab}$, we can perform the canonical quantization by replacing the variables with the operators satisfying the canonical commutation relations, $[\hat{x}^i_a,\hat{y}^j_b]=\mathrm{i}\hbar\,\delta^{ij}\delta_{ab}$.
In analogy with the standard quantum mechanics context, we can naturally define the poly-matrix zero-sum \emph{quantum} game through $\hftrl(\hat{x},\hat{y})$, and compare this system with existing works \cite{eisert1999quantum,eisert2000quantum,jain2022matrix,lotidis2023learning,lin2025learning}.
Lastly, the conditions of poly-matrix and zero-sum could be further relaxed by broadening the scope of our research.
In the study of such situations, the key would be to establish an appropriate Hamiltonian and to explore the symmetry thereof.
We believe that this paper not only reorganizes the FTRL dynamics, but also provides a compass for establishing emerging research fields.

\onecolumngrid
\appendix
%\onecolumn
%\appendix
%\pagenumbering{roman}  % or alph, or completely turn off by \pagestyle{empty}

%\textbf{\Large Appendix}  %* Supplementary Materials  %* Technical Appendices

\section{Properties and examples of FTRL dynamics}
\label{append:properties}

We here summarize some known properties and examples of learning in games relevant to our discussions in the main text.
To begin with, we recap the Karush--Kuhn--Tucker (KKT) condition, also known as the generalized method of Lagrange multipliers.
The notations overlap somewhat with those in the main text, but we believe there is no confusion.

Let $M$ be an $n$-dimensional smooth manifold and
\begin{equation}
    f: M \to \R,  \quad  g: M \to \R^m,  \quad  h: M \to \R^l
\end{equation}
are assumed to be in $C^1$.
Let us consider a nonlinear optimization problem,
\begin{equation}
    \min_{p\in M} f(p),
\end{equation}
with the constraints
\begin{equation}
    g(p) \leq 0,  \quad  h(p) = 0.
\end{equation}

\begin{quote}
    \underline{The Karush--Kuhn--Tucker condition.} \cite{bertsekas1997nonlinear,bergmann2019intrinsic} \\
    Suppose that $p^\ast\in M$ is a local minimizer of the above constrained optimization problem, and that the GCQ holds at $p^\ast$.
    Then there exist (unique) Lagrange multipliers
    \begin{equation}
        \mu \in \R^m,  \quad  \lambda \in \R^l,
    \end{equation}
    such that at $p^\ast$ the following conditions hold:
    \begin{align}
        &df + \mu^\top dg + \lambda^\top dh = 0,  \\
        &h(p^\ast) = 0,  \\
        &\mu\geq 0,  \quad  g(p^\ast)\leq 0,  \quad  \mu^\top g(p^\ast) = 0.
    \end{align}
\end{quote}

\noindent \textbf{Properties of Nash equilibrium.}
As introduced in the main text, the Nash equilibrium is defined as follows.
\begin{equation}
    \strategies^\ast = \left( \xs{1}, \dots, \xs{n} \right) \in \sspace
\end{equation}
is a Nash equilibrium if it satisfies
\begin{equation}
    {}^{\forall}i\in\agents,
    \qquad
    \xs{i} \in \argmax_{\xx{i}\in \sspace^{N_i}} u_i(\xx{i}; (\xx{j(\neq i)})^\ast).
\end{equation}
This means that the Nash equilibrium is a solution to the optimization problems of
\begin{equation}
    f(\xx{i}) = -u_i(\xx{i}; (\xx{j(\neq i)})^\ast) = -(\xx{i})^\top \sum_{j\in\agents} \payoffmat{ij} \xs{j},
\end{equation}
over $M_i=\R^{N_i+1}$ with the constraints
\begin{align}
    g(\xx{i}) &= -(\xx{i}_1, \dots, \xx{i}_{N_i+1}) \leq 0,  \\
    h(\xx{i}) &= \sum_{a\in A_i} \xx{i}_a - 1 = 0.
\end{align}
Thus from the KKT condition, at the Nash equilibrium $\xs{i}$ there exist Lagrange multipliers $\mu^i\in\R^{N_i+1}$ and $\lambda^i\in\R$, such that
\begin{align}
    &\sum_a \left( -\bigg( \sum_j \payoffmat{ij} \xs{j} \bigg)_a - \mu^i_a + \lambda^i \right) d\xx{i}_a = 0,  \\
    &\sum_a \xs{i}_a = 1,  \\
    &\mu^i_a \geq 0,  \quad  \xs{i}_a \geq 0,  \quad  \sum_a \mu^i_a \xs{i}_a = 0  .
\end{align}
From the last condition, we find that for each $a\in A_i$, either $\mu^i_a=0$ or $\xs{i}_a=0$ holds.
If $\xs{i}_a>0$ for all $a\in A_i$, then $\mu^i=0$ and the KKT condition reduces to the ordinary method of Lagrange multipliers:
\begin{equation}
    \sum_j \payoffmat{ij} \xs{j} = \lambda^i \onevec,
\end{equation}
with the constraint $\sum_a \xs{i}_a=1$.
$\onevec\coloneq (1, \dots, 1)$ is the all-one vector.
Otherwise, if $\xs{i}_{a'}=0$ for some $a'$, then $\mu^i_{a'}\geq 0$ and the condition becomes
\begin{equation}
    \bigg( \sum_j \payoffmat{ij} \xs{j} \bigg)_{a'}
        = \lambda^i - \mu^i_{a'}  \leq  \lambda^i,
\end{equation}
with the constraint $\sum_a \xs{i}_a=1$.

For a zero-sum game, the total utility for all the agents always vanishes due to the antisymmetricity of the payoff matrices:
\begin{equation}
    \begin{aligned}[b]
        \sum_{i\in\agents} u_i(\xx{i}; \xx{j(\neq i)})
            &=  \sum_{i,j} (\xx{i})^\top \payoffmat{ij} \xx{j}  \\
            &= -\sum_{i,j} (\xx{j})^\top \payoffmat{ji} \xx{i}  \\
            &= 0.
    \end{aligned}
    \label{eq:zerosum-property}
\end{equation}
This fact implies that the constants $\lambda^i$ in the aforementioned property for the Nash equilibrium satisfy the corresponding condition,
\begin{equation}
    \begin{aligned}[b]
        0   &= \sum_{i\in\agents} u_i(\xs{i}; (\xx{j(\neq i)})^\ast)  \\
            &= \sum_{i\in\agents} (\xs{i})^\top \sum_j \payoffmat{ij} \xs{j}  \\
            &= \sum_{i\in\agents} (\xs{i})^\top \lambda^i \onevec  \\
            &= \sum_{i\in\agents} \lambda^i \sum_a \xs{i}_a  \\
            &= \sum_{i\in\agents} \lambda^i,
    \end{aligned}
\end{equation}
where we assumed $\strategies^\ast$ is fully mixed.
Overall, the fully-mixed Nash equilibrium for a zero-sum game has the property that for $i=1, \dots, n$, there exists a set of constants $\lambda^i\in\R$ that satisfies
\begin{align}
    \sum_{j\in\agents} \payoffmat{ij} \xs{j} = \lambda^i \onevec,
    \qquad
    \sum_{i\in\agents} \lambda^i = 0.
    \label{eq:nash-property}
\end{align}

Next, we provide two examples of the FTRL dynamics.
To describe them, we solve the right-hand side of \eref{eq:xftrl} or \eref{eq:hdual}, a nonlinear optimization problem.
By assuming that strategies are always fully mixed, as in the Nash equilibrium properties above, this optimization problem is solved by the reduced KKT condition, namely the ordinary method of Lagrange multipliers.
Suppose the $\zeta^i$ solve the optimization problems of
\begin{equation}
    f(\xx{i}) = - \left( \inner{\xx{i}}{\yy{i}} - \hreg{i}(\xx{i}) \right),
\end{equation}
over $M_i=\R^{N_i+1}$ with the constraint $h(\xx{i})=\sum_a \xx{i}_a - 1 = 0$, for $i=1,\dots,n$.
Then there exist Lagrange multipliers $\nu^i\in\R$ which satisfy
\begin{align}
    \sum_a \left( -\yy{i}_a + \nabla\hreg{i}(\zeta^i)_a + \nu^i \right) d\xx{i}_a
        &= 0,  \label{eq:kkt-ftrl}  \\
    \sum_a \zeta^i_a
        &= 1.
\end{align}
By definition, $\zeta^i=\argmax_{z\in\sspace^{N_i}} \left( \inner{z}{\yy{i}} - \hreg{i}(z) \right)$.
In many cases, the constants $\nu^i$ are determined by the second condition on $\zeta^i$.

\noindent \textbf{Entropic regularizer and the replicator dynamics.}
One of the most widely known examples of learning dynamics is so-called the \emph{replicator dynamics} \cite{taylor1978evolutionary,taylor1979evolutionarily,schuster1983replicator},
which is given by the following entropic regularizer function:
\begin{equation}
    \hreg{i}(\xx{i}) = \sum_a \xx{i}_a \log \xx{i}_a.
\end{equation}
For this regularizer function, \eref{eq:kkt-ftrl} becomes
\begin{equation}
    %-\yy{i}_a + \log \zeta^i_a + 1 + \nu^i = 0.
    \zeta^i_a = \exp \left( \yy{i}_a - \nu^i - 1 \right),
\end{equation}
for all $a\in A_i$.
By summing over $a$, the constraint for $\zeta^i$ gives
\begin{equation}
    e^{\nu^i + 1} = \sum_a \exp(\yy{i}_a).
\end{equation}
Thus, we find
\begin{equation}
    \zeta^i_a = \frac{\exp(\yy{i}_a)}{\sum_{a'} \exp \left( \yy{i}_{a'} \right)}.
\end{equation}
By substituting this expression to the right-hand side of \eref{eq:hdual}, we obtain
\begin{equation}
    \hdual{i}(\yy{i}) = \mathrm{lse}(\yy{i}) \coloneq \log \sum_a \exp(\yy{i}_a),
\end{equation}
and
\begin{align}
    \nabla \hdual{i}(\yy{i})_a
        &= \frac{\exp(\yy{i}_a)}{\sum_{a'} \exp \left( \yy{i}_{a'} \right)} \eqcolon \softmax(\yy{i})_a,  \\
    \hess(\hdual{i}(\yy{i}))_{ab}
        &= \delta_{ab} \, \softmax(\yy{i})_a - \softmax(\yy{i})_a \, \softmax(\yy{i})_b.
\end{align}
With these at hand, we find the learning rule,
\begin{align}
    \dd{\xx{i}(t)}{t}
        &= \hess\big(\hdual{i}(\yy{i}(t))\big) \sum_{j\in\agents} \payoffmat{ij} \xx{j}(t)  \nonumber  \\
        &= \xx{i}(t) \odot \left( \sum_{j} \payoffmat{ij} \xx{j}(t) - \inner{\xx{i}(t)}{\sum_{j} \payoffmat{ij} \xx{j}(t)} \onevec \right),
\end{align}
where $\odot$ is the element-wise product, also known as the Hadamard product, and we used the relation $\xx{i}(t)=\nabla\hdual{i}(\yy{i}(t))=\softmax(\yy{i}(t))$.

\noindent \textbf{Euclidean regularizer and the projection dynamics.}
Another widely used example of learning dynamics is given by the Euclidean quadratic regularizer:
\begin{equation}
    \hreg{i}(\xx{i}) = \frac12 \norm{\xx{i}}_2^2.
\end{equation}
Similarly to the entropic regularizer case, the optimization problem is solved by \eref{eq:kkt-ftrl} and the constraint as
\begin{equation}
    \zeta^i_a = \yy{i}_a - \frac{1}{\abs{A_i}} \left( \sum_{a'} \yy{i}_{a'} - 1 \right).
\end{equation}
We now have
\begin{equation}
    \hdual{i}(\yy{i}) = \frac12 \norm{\yy{i}}_2^2 - \frac12 \norm{c^i}_2^2,
\end{equation}
where $c^i\in\R^{\abs{A_i}}$ are identical-entry vectors:
\begin{equation}
    c^i = \frac{1}{\abs{A_i}} \left( \sum_{a'} \yy{i}_{a'} - 1 \right) \onevec.
\end{equation}
The gradients and the Hessians are
\begin{align}
    \nabla \hdual{i}(\yy{i})_a
        &= \yy{i}_a - \frac{1}{\abs{A_i}} \left( \sum_{a'} \yy{i}_{a'} - 1 \right),  \\
    \hess(\hdual{i}(\yy{i}))_{ab}
        &= \delta_{ab} - \frac{1}{\abs{A_i}},
\end{align}
which lead to the projected reinforcement learning process.
The agents' mixed strategies are then known to follow the \emph{projection dynamics} \cite{friedman1991evolutionary,nagurney1997projected,lahkar2008projection,sandholm2008projection,mertikopoulos2016learning}:
\begin{align}
    \dd{\xx{i}(t)}{t}
        &= \hess\big(\hdual{i}(\yy{i}(t))\big) \sum_{j\in\agents} \payoffmat{ij} \xx{j}(t)  \nonumber  \\
        &= \sum_{j} \payoffmat{ij} \xx{j}(t) - \frac{1}{\abs{A_i}} \sum_a \bigg( \sum_{j} \payoffmat{ij} \xx{j}(t) \bigg)_a \onevec.
\end{align}

\section{Lightning review of Hamiltonian dynamics}
\label{append:hamiltonian}

For accesibility to broader readers, we describe minimal facts on Hamiltonian dynamics relevant to our main discussions.
Here again, some notations overlap with those in the main text, but there should be no confusion.
For more details on Hamiltonian dynamics, we refer the reader to \cite{libermann,arnold}.

Let $(M, \simplectic)$ be a real $2n$-dimensional symplectic manifold equipped with the symplectic form $\simplectic$.
A typical example is the space of ``position'' and ``momentum'', $M=T^\ast\R^n \simeq \R^n \times \R^n \ni (q, p)$, on which the symplectic form takes the form
\begin{equation}
    \simplectic = \sum_{a=1}^n dp_a \wedge dq_a
        \simeq
        \begin{pmatrix}
            0    &  1_n  \\
            -1_n &  0
        \end{pmatrix}
        \eqcolon J,
\end{equation}
where $1_n$ is the $n\times n$ identity matrix.
Even if the global structure of $M$ is subtle, the symplectic form can be expressed as in the above at least locally, owing to Darboux's theorem.
Due to this fact, in this appendix we mostly consider this typical example since it suffices for our purpose.

Given a real smooth function $f$ on $M$, $f\in C^{\infty}(M)$, the \defname{Hamiltonian system} induced by $f$ is defined by the following flow equation for the curve $\phi: I \rightarrow M$, accompanied by the symplectic form,
\begin{equation}
    \dd{\phi(t)}{t} = J \nabla f(z)\mid_{z=\phi(t)},  \quad  z\in M.
    \label{eq:flow1}
\end{equation}
Specifically, using the local coordinate $z=(q, p)$ and the abuse of notation $\phi(t)=(q(t), p(t))$, the flow equation becomes
\begin{equation}
    \begin{aligned}
        \dd{q(t)}{t} &=  \nabla_p f(q,p)\mid_{q=q(t),\, p=p(t)},  \\
        \dd{p(t)}{t} &= -\nabla_q f(q,p)\mid_{q=q(t),\, p=p(t)}.
    \end{aligned}
    \label{eq:flow2}
\end{equation}
Such a flow-generating function $f$ of a Hamiltonian system is itself called the \defname{Hamiltonian} and is often written as $H$.
In Hamiltonian dynamics,%
\footnote{
    In a narrow sense, the term ``Hamiltonian dynamics'' may be referred to as the solution to the flow equations (\ref{eq:flow1}) or (\ref{eq:flow2}).
    However, we use this term collectively including the phase space, the Hamiltonian system, and its solutions, etc.
}
the symplectic manifold $M$ is referred to as the \defname{phase space} or \defname{state space}.

\noindent \textbf{Energy conservation law.}
In Hamiltonian dynamics, the value of the flow-generating function $f(z)$ \underline{on the trajectory of the solution to the flow equation (\ref{eq:flow1})} is called \defname{energy}:
\begin{equation}
    E : I \rightarrow \R,
    \qquad
    E(t) \coloneq f(\phi(t)).
\end{equation}
A well-known fact is that this energy is conserved along the time interval $I$, as long as the function $f$ has no explicit dependence on $I$ (time).
In other words, the value of the flow-generating function $f(z)$ is invariant along the trajectory of the solution to the flow equation:
Using again the local coordinate $z=(q, p)$,
\begin{equation}
    \begin{aligned}[b]
        \dd{E(t)}{t}
            &= \frac{d}{dt} f(q(t), p(t))  \\
            &= \inner{\nabla_q f(q,p)\mid_{q=q(t),\, p=p(t)}}{\dd{q(t)}{t}} + \inner{\nabla_p f(q,p)\mid_{q=q(t),\, p=p(t)}}{\dd{p(t)}{t}}  \\
            &= \inner{-\dd{p(t)}{t}}{\dd{q(t)}{t}} + \inner{\dd{q(t)}{t}}{\dd{p(t)}{t}}  \\
            &= 0,
    \end{aligned}
\end{equation}
where we have used the fact that $\phi(t)=(q(t), p(t))$ satisfies the flow equation (\ref{eq:flow2}) and the symmetricity of the inner product.

\noindent \textbf{Noether's theorem in Hamiltonian dynamics.}
Conservation law in Hamiltonian dynamics is generalized by the so-called Noether's theorem.
In a Hamiltonian system, besides the energy, we can identify conserved quantities according to the symmetry of the Hamiltonian.

\begin{thm*}[Noether, in the Hamiltonian formalism]
    If a Hamiltonian function $f$ on a symplectic manifold $(M, \simplectic)$ admits the one-parameter group of canonical transformations generated by a function $G\in C^{\infty}(M)$,
    then $G$ is conserved on the Hamiltonian flow of $f$.
\end{thm*}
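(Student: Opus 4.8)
The plan is to reduce the statement to a single manipulation of the Poisson bracket together with its antisymmetry. Throughout I would work in local Darboux coordinates $z=(q,p)$: this is legitimate because the claim is local and such coordinates always exist by Darboux's theorem, and because a canonical transformation preserves $\simplectic$, so that the bracket $\pbracket{F}{G} = \sum_a\big(\del_{q_a}F\,\del_{p_a}G - \del_{p_a}F\,\del_{q_a}G\big)$ is well defined independently of the chosen Darboux chart. The one fact I would isolate first, and call the \emph{evolution lemma}, is that for any observable $F\in C^\infty(M)$ and any solution $\phi(t)=(q(t),p(t))$ of the Hamiltonian flow \eref{eq:flow2} of $f$,
\begin{equation}
    \dd{}{t}F(\phi(t))
        = \inner{\nabla_q F}{\dd{q(t)}{t}} + \inner{\nabla_p F}{\dd{p(t)}{t}}
        = \pbracket{F}{f},
\end{equation}
where the last equality substitutes the flow equations $\dd{q(t)}{t}=\nabla_p f$ and $\dd{p(t)}{t}=-\nabla_q f$. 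This is exactly the computation used to prove the energy conservation law in this appendix, carried out with a general $F$ rather than with $f$ itself in the first slot; in particular it shows that the infinitesimal generator of the flow of $f$ acts on functions by $\pbracket{\cdot}{f}$.

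Next I would unpack the hypothesis. The one-parameter group of canonical transformations generated by $G$ is, by definition, the Hamiltonian flow $\psi_s$ of $G$, and the assumption that $f$ admits it means $f\circ\psi_s=f$ for every $s$. Differentiating this identity at $s=0$ and applying the evolution lemma with $G$ now playing the role of the Hamiltonian and $f$ the role of the observable yields $\pbracket{f}{G}=0$. In other words, invariance of $f$ under the $G$-flow is equivalent to the vanishing of $\pbracket{f}{G}$.

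The conclusion is then immediate: applying the evolution lemma once more, this time to $F=G$ along the flow of $f$, gives $\dd{}{t}G(\phi(t)) = \pbracket{G}{f}$. By the antisymmetry of the Poisson bracket and the previous step, $\pbracket{G}{f} = -\pbracket{f}{G} = 0$, so $G(\phi(t))$ is constant on $I$, which is precisely the assertion that $G$ is conserved on the Hamiltonian flow of $f$.

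The only step that I expect to require genuine care is the equivalence between invariance of $f$ under the one-parameter group generated by $G$ and the vanishing of $\pbracket{f}{G}$: one direction needs that a smooth function constant along a flow has vanishing derivative along it, the converse needs that on a connected interval a vanishing derivative forces constancy, and both rely on identifying the infinitesimal action of the $G$-flow on functions with the operator $F\mapsto\pbracket{F}{G}$, which is just the evolution lemma applied to the flow of $G$. Since all of these ingredients already appear, in slightly different guise, in the energy-conservation computation reproduced in this appendix, no new machinery is needed; everything else is bookkeeping with signs.
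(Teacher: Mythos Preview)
Your proposal is correct and follows essentially the same route as the paper's own sketch: derive that invariance of $f$ under the $G$-flow forces $\pbracket{f}{G}=0$, then compute $\dd{}{t}G(\phi(t))=\pbracket{G}{f}$ along the $f$-flow and conclude via antisymmetry. The paper presents exactly this chain of implications (it calls it a ``rough procedure'' and defers full details to \cite{arnold}); your version is simply a cleaner and more explicit write-up of the same argument, with the added care of naming the evolution lemma and justifying the use of Darboux coordinates.
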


For the details, see \cite{arnold}.
In a word, this theorem states that if a Hamiltonian is invariant under a certain (canonical) transformation, the Hamiltonian system has an associated conserved quantity.
In the physics context, we refer to such a transformation (or equivalently a group action) as the \emph{symmetry} of the Hamiltonian system.
We will illustrate two examples below for intuition.
Instead of providing the proof or more details of this theorem, here we only describe a rough procedure for the construction of the conserved quantity associated with the corresponding symmetry.
Let us consider a Hamiltonian system generated by $f\in C^{\infty}(M)$ and the following infinitesimal canonical transformation on the local coordinate $(q,p)$ of $M$:
\begin{equation}
    q_a' = q_a + \delta q_a,
    \quad
    p_a' = p_a + \delta p_a,
\end{equation}
where
\begin{equation}
    \delta q_a = \epsilon \pbracket{q_a}{G},
    \quad
    \delta p_a = \epsilon \pbracket{p_a}{G},
\end{equation}
for $G\in C^{\infty}(M)$ and $\epsilon\in\R_{\geq 0}$. $\pbracket{\cdot}{\cdot}$ denotes the Poisson bracket.
What Noether's theorem tells us is that if the Hamiltonian is invariant under this canonical transformation for a certain $G$,
\begin{equation}
    f(q', p') = f(q, p),
\end{equation}
which leads to $\pbracket{G}{f}=0$, then the function $G$ is a conserved quantity in the Hamiltonian system:
Along the trajectory of the solution to the flow equation (\ref{eq:flow2}), %with $\phi(t)=(q(t), p(t))$,
\begin{equation}
    \begin{aligned}[b]
        \frac{d}{dt} G(q(t), p(t))
            &= \inner{\nabla_q G(q,p)\mid_{q=q(t),\, p=p(t)}}{\dd{q(t)}{t}} + \inner{\nabla_p G(q,p)\mid_{q=q(t),\, p=p(t)}}{\dd{p(t)}{t}}  \\
            &= \left( \inner{\nabla_q G(q,p)}{\nabla_p f(q,p)} - \inner{\nabla_p G(q,p)}{\nabla_q f(q,p)} \right)\mid_{q=q(t),\, p=p(t)}  \\
            &= \pbracket{G}{f}  \\
            &= 0.
    \end{aligned}
\end{equation}

To provide intuition, we present two well-known examples from the physics literature.
Let $M=T^\ast\R^3\simeq\R^3\times\R^3$ be the phase space of a single particle of mass $m$ moving in the three-dimensional space.
The Hamiltonian $H:M\to\R$, describing generic classical mechanical systems, takes the form
\begin{equation}
    H(q, p) = \frac{1}{2m}\norm{p}^2_2 + V(q),
\end{equation}
where $V:\R^3\to\R$ is referred to as the potential function, which characterizes the system.
\begin{itemize}
    \item Momentum conservation law: If the Hamiltonian $H$ has a translation symmetry,
        \begin{align}
            H(q', p') = H(q, p),  \\
            q' = q + e,  \quad  p' = p,
        \end{align}
        where $e$ is a constant vector, then the associated conserved quantity is
        \begin{equation}
            G^P(q, p; e) = \inner{e}{p}.
            \label{eq:momentumconservation}
        \end{equation}
        In the physical sense, this implies that if the potential function is invariant under a certain constant shift of the position coordinate, the corresponding momentum of the particle is conserved in the Hamiltonian system.
        %! e = e_i gives the conservation of p_i; e = onevec gives the conservation of center-of-mass momentum, which actually corresponds to our prop1
    \item Angular momentum conservation law: If the Hamiltonian $H$ has a rotation symmetry,
        \begin{align}
            H(q', p') = H(q, p),  \\
            q' = R q,  \quad  p' = R p,
        \end{align}
        where $R\in\mathrm{SO}(3)$, then the associated conserved quantity is the angular momentum along $R$.
        If the invariance holds for any rotation, ${}^{\forall} R\in\mathrm{SO}(3)$, the whole angular momentum vector of the particle is conserved in the Hamiltonian system,
        \begin{equation}
            G^L_a(q, p) = (q \times p)_a,  \quad a = 1, \dots, 3,
        \end{equation}
        where $\times$ is the three-dimensional exterior product.
\end{itemize}
As these examples demonstrate, the examination of symmetry of the Hamiltonian has a direct connection to the conserved quantity of the system.
Conserved quantities provide us an enormous amount of information of the system under consideration, as a sufficient number of conserved quantities even leads to the solvability of the system.
Due to this, in the study of a dynamical system, it is essential in understanding the system to establish an appropriate Hamiltonian function and to elucidate the symmetry thereof.

\section{Comments on the paper by Bailey and Piliouras}
\label{append:bp}

In this appendix, we recap the main arguments in the paper by Bailey and Piliouras \cite{bailey2019multi}, emphasizing that their use of the term ``Hamiltonian'' deviates from the standard convention, explained in \appenref{append:hamiltonian}.
It turns out that their central claim essentially amounts to the tautological statement that “the solution to the FTRL dynamics satisfies the FTRL dynamics,” and does not offer novel insights.
As a result, their analysis ultimately reduces to a study of the FTRL dynamics and previously known conserved quantities.

With the notations in \sref{sec:preliminaries}, suppose that a game $\sgame$ is given with a set of regularizer functions $\hreg{i}\in\pspace{\sspace^{N_i}}$.
One defines the ``cumulative strategies'' by
\begin{equation}
    \XX{i}(t) \coloneq \int_0^t \xx{i}(s) ds.
\end{equation}
Then, for the choice of $\hreg{i}$, the FTRL dynamics describes the time evolution of the learning dynamics $(X(t), y(t))$ in $\sgame$ by
\begin{equation}
    \begin{aligned}
        \dd{\XX{i}(t)}{t} &= \nabla\hdual{i}(\yy{i})\mid_{\yy{i}=\yy{i}(t)},  \\
        \dd{\yy{i}(t)}{t} &= \sum_{j(\neq i)} \payoffmat{ij} \xx{j}(t),
    \end{aligned}
    \label{eq:ftrl-bp}
\end{equation}
with the relations $\xx{i}(t)=\dot{X}^i(t)=\nabla\hdual{i}(\yy{i}(t))$.
One introduces the ``free Hamiltonian'' for the agents by
\begin{equation}
    H_0(X, y) = \sum_{i\in\agents} \hdual{i}(\yy{i}),
\end{equation}
and the ``interaction Hamiltonian'',
\begin{equation}
    H^{\sigma}_{\text{int}}(X, y) = -\sigma \sum_{i\in\agents} \hdual{i} \bigg(\beta^i + \sum_{k(\neq i)} \payoffmat{ik} \XX{k} \bigg),
\end{equation}
where $\XX{i},\, \yy{i}\in\R^{\abs{A_i}}$, and $\beta^i\in\R^{\abs{A_i}}$ are some constants.
The total ``Hamiltonian'' of the system is a function on $\R^{\sum_i\abs{A_i}}\times\R^{\sum_i\abs{A_i}}$ defined by their sum,
\begin{equation}
    \begin{aligned}
        H^\sigma(X, y)
            &= H_0(X, y) + H^{\sigma}_{\text{int}}(X, y)  \\
            &= \sum_{i\in\agents} \left[ \hdual{i}(\yy{i}) -\sigma \hdual{i} \bigg(\beta^i + \sum_k \payoffmat{ik} \XX{k} \bigg) \right]  \\
            &= \sum_{i\in\agents} \hdual{i}(\yy{i}) -\sigma \sum_{j\in\agents} \hdual{j} \bigg(\beta^j + \sum_{i(\neq j)} \payoffmat{ji} \XX{i} \bigg).
    \end{aligned}
    \label{eq:hamiltonian-bp}
\end{equation}

\begin{thm*}[Bailey and Piliouras \cite{bailey2019multi}, Theorem 4.3]
    The function $H^\sigma$, \eref{eq:hamiltonian-bp}, with $\sigma=-1$ and $\beta^i=\yy{i}(0)$ for all $i\in\agents$, is invariant along the solution trajectory of the FTRL dynamics \eref{eq:ftrl-bp}.
\end{thm*}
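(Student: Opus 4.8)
The plan is to verify directly that $\dd{}{t}H^{-1}(X(t),y(t)) = 0$ along every solution of the FTRL dynamics \eref{eq:ftrl-bp}, in the same spirit as the energy-conservation computation leading to \eref{eq:energy-conservation}. The only place where the special choice $\beta^i = \yy{i}(0)$ is used is to identify the inner argument $w^j(t) := \beta^j + \sum_{i(\neq j)}\payoffmat{ji}\XX{i}(t)$ of the ``interaction'' part of \eref{eq:hamiltonian-bp} with $\yy{j}(t)$ itself; I would prove this first and then differentiate \eref{eq:hamiltonian-bp} term by term.

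For the identification $w^j = \yy{j}$: differentiating, $\dot w^j(t) = \sum_{i(\neq j)}\payoffmat{ji}\dot{X}^i(t) = \sum_{i(\neq j)}\payoffmat{ji}\xx{i}(t)$, which is exactly $\dd{\yy{j}(t)}{t}$ from \eref{eq:ftrl-bp}, while $w^j(0) = \beta^j = \yy{j}(0)$; thus $w^j - \yy{j}$ has vanishing time derivative and vanishes at $t=0$, so $w^j(t) = \yy{j}(t)$ for all $t$, and hence $\nabla\hdual{j}(w^j(t)) = \nabla\hdual{j}(\yy{j}(t)) = \xx{j}(t)$. Now differentiate $H^{-1}$ along the trajectory. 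The free part $\sum_i\hdual{i}(\yy{i})$ contributes, by the chain rule and the $y$-update of \eref{eq:ftrl-bp}, $\sum_{i}\inner{\nabla\hdual{i}(\yy{i})}{\dd{\yy{i}}{t}} = \sum_{i,j\in\agents}\inner{\xx{i}}{\payoffmat{ij}\xx{j}}$, where the $i=j$ terms drop out because $\payoffmat{ii}\equiv 0$. The interaction part $\sum_j\hdual{j}(w^j)$ (recall $\sigma=-1$, so its coefficient in \eref{eq:hamiltonian-bp} is $+1$) contributes $\sum_j\inner{\nabla\hdual{j}(w^j)}{\dot w^j} = \sum_j\inner{\xx{j}}{\sum_{i}\payoffmat{ji}\xx{i}}$, which after relabelling $i\leftrightarrow j$ is again $\sum_{i,j\in\agents}\inner{\xx{i}}{\payoffmat{ij}\xx{j}}$. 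Therefore $\dd{}{t}H^{-1}(X(t),y(t)) = 2\sum_{i,j\in\agents}\inner{\xx{i}(t)}{\payoffmat{ij}\xx{j}(t)}$.

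To finish, observe that this double sum is precisely the total utility of the game, which vanishes identically for a zero-sum (that is, $\sigma=-1$) poly-matrix game by \eref{eq:total-utility}; hence $\dd{}{t}H^{-1}(X(t),y(t)) = 0$ and $H^{-1}$ is invariant along the FTRL trajectory. I do not expect a genuine obstacle here: the computation is short and parallels \thmref{thm:ftrl-hamiltonian}. The one subtlety worth flagging is that both hypotheses are essential and enter at different points — $\beta^i=\yy{i}(0)$ is exactly what collapses the interaction argument onto $\yy{j}(t)$ (without it, $w^j$ is a nonzero constant shift of $\yy{j}$, so $\nabla\hdual{j}(w^j)\neq\xx{j}$), while $\sigma=-1$ is exactly what kills the residual total-utility term at the end rather than leaving twice a ``coordination energy.'' An equivalent but slightly less transparent route is to first note that on the trajectory $H^{-1}(X(t),y(t)) = 2\sum_i\hdual{i}(\yy{i}(t))$ and then to prove $\sum_i\hdual{i}(\yy{i}(t))$ is conserved by the same total-utility cancellation; I would prefer the term-by-term version since it makes the structural analogy with the main text's Hamiltonian picture explicit.
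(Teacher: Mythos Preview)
Your proposal is correct and follows essentially the same approach as the paper: differentiate $H^{-1}$ along the trajectory, identify the interaction argument $\beta^j+\sum_i\payoffmat{ji}\XX{i}(t)$ with $\yy{j}(t)$ via the choice $\beta^j=\yy{j}(0)$, and reduce to $2\sum_{i,j}\inner{\xx{i}}{\payoffmat{ij}\xx{j}}=0$ by the zero-sum property. The paper's version is more terse (it performs the identification implicitly in one line), while you spell out the $w^j=\yy{j}$ step and the role of each hypothesis, but there is no substantive difference.
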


Using \eref{eq:ftrl-bp} and the assumption, the invariance of $H^{\sigma=-1}$ along the solution trajectory is readily proven as
\begin{equation}
    \begin{aligned}[b]
        &\frac{d}{dt}H^{\sigma=-1}(X(t), y(t))  \\
        &= \sum_{i\in\agents} \left[ \inner{\nabla\hdual{i}(\yy{i})\mid_{\yy{i}=\yy{i}(t)}}{\dd{\yy{i}(t)}{t}}
            + \inner{\nabla\hdual{i}(z^i)\mid_{z^i=\yy{i}(0) + \sum_k \payoffmat{ik} \XX{k}(t)}}{\sum_j \payoffmat{ij} \dd{\XX{j}(t)}{t}} \right]  \\
        &= 2\sum_{i\in\agents} \inner{\xx{i}(t)}{\sum_j \payoffmat{ij} \xx{j}(t)}  \\
        &= 0,
    \end{aligned}
\end{equation}
where at the last line we have used the zero-sum property of the game, \eref{eq:zerosum-property}.
From this theorem, we recognize that the function $H^{\sigma=-1}$ is a conserved quantity for poly-matrix zero-sum games along the solution trajectories of the FTRL dynamics.
However, the function $H^{\sigma=-1}$ is naively not regarded as a Hamiltonian function for the FTRL dynamics.
Let $M'=T^\ast\R^{\sum_i\abs{A_i}}\simeq\R^{\sum_i\abs{A_i}}\times\R^{\sum_i\abs{A_i}}$ be a phase space and $(X, y)$ be the (global) canonical coordinate of $M'$.
Suppose we have a function $H^{\sigma=-1}:M'\to\R$ given as in \eref{eq:hamiltonian-bp}.
Then, the Hamiltonian system induced by $H^{\sigma=-1}$ is described by the flow equations (\ref{eq:flow2}) as
\begin{align}
    \dd{\XX{k}(t)}{t}
        &= \nabla_{\yy{k}} H^{\sigma=-1}(X, y)\mid_{X=X(t),\, y=y(t)}  \nonumber  \\
        &= \nabla \hdual{k}(\yy{k})\mid_{\yy{k}=\yy{k}(t)},  \\
    \dd{\yy{k}(t)}{t}
        &= -\nabla_{\XX{k}} H^{\sigma=-1}(X, y)\mid_{X=X(t),\, y=y(t)}  \nonumber  \\
        &= \sum_j \payoffmat{kj} \nabla \hdual{j}(\yy{j})\mid_{\yy{j}=\beta^j + \sum_l \payoffmat{jl} \XX{l}(t)}.  \label{eq:yftrl-not}
\end{align}
As can be seen from the right-hand side of \eref{eq:yftrl-not}, this Hamiltonian system is generally not equivalent to the FTRL dynamics (\ref{eq:ftrl-bp}).
If one imposes a constraint by hand on the variables $\XX{i}$ and $\yy{i}$ such as the projection onto the submanifold
\begin{equation}
    \tilde{M} = \{ \yy{i}=\beta^i+\sum_j \payoffmat{ij} \XX{j} \,\mid\, i\in\agents \}\subset M',
    \label{eq:submfd}
\end{equation}
then the right-hand side becomes
\begin{equation}
    \sum_j \payoffmat{kj} \nabla \hdual{j}(\yy{j})\mid_{\yy{j}=\yy{j}(t)} = \sum_j \payoffmat{kj} \dd{\XX{j}(t)}{t},
\end{equation}
and this Hamiltonian system could be thought of as an equivalent of \eref{eq:ftrl-bp}.
Nonetheless, imposing constraints on the variables, such as \eref{eq:submfd}, implies that ``the FTRL dynamics is the Hamiltonian system on the manifold on which the solution is the FTRL dynamics,'' which is a tautological statement and does not yield any new insight.
Overall, the function $H^{\sigma=-1}$ is merely a conserved quantity, and not the Hamiltonian function for the FTRL dynamics on the total phase space $(M'; X, y)$.

As Bailey and Piliouras mentioned in \cite[Sec.~5]{bailey2019multi}, the function $H^{\sigma=-1}$ restricted to the submanifold (\ref{eq:submfd}),
\begin{equation}
    H^{\sigma=-1}\mid_{\tilde{M}}(X, y) = 2 \sum_{i\in\agents} \hdual{i}(\yy{i}),
\end{equation}
is essentially a certain shift of the Fenchel coupling.
As a byproduct of the FTRL Hamiltonian (\ref{eq:ftrl-hamiltonian}), this result emerges in our Hamiltonian system as simply another conserved quantity through the corresponding symmetry.
From the proof of \propref{prop:fenchel} in \appenref{append:proof-fenchel}, it follows that the FTRL Hamiltonian $\hftrl$ is also invariant under the following canonical transformation for all $i\in\agents$, which is a variant of \eref{eq:symmetry-fenchel},
\begin{equation}
    x'^i = x^i + \nabla\hdual{i}(\yy{i}),  \quad  y'^i = y^i.
    %\delta\xx{i} = \nabla\hdual{i}(\yy{i}),  \quad  \delta\yy{i} = 0,
\end{equation}
A generator of this canonical transformation, $g\in C^\infty(M_{\zgame})$, satisfies the following for all $i\in\agents$ and for all $a\in A_i$,
\begin{align}
    \pbracket{\xx{i}_a}{g} &=  \dpdp{g(x,y)}{\yy{i}_a} = \dpdp{\hdual{i}(\yy{i})}{\yy{i}_a},  \\
    \pbracket{\yy{i}_a}{g} &= -\dpdp{g(x,y)}{\xx{i}_a} = 0.
\end{align}
Integration of these equations gives us the conserved quantity in the Hamiltonian system of $\hftrl$,
\begin{equation}
    g(\strategies, y) = \sum_{i\in\agents} \hdual{i}(\yy{i}) + \text{const.},
\end{equation}
which is a part of the Fenchel coupling.
This result is consistent with that of \cite[Sec.~5]{bailey2019multi} and provides a natural explanation from the perspective of the symmetry of our Hamiltonian.

\section{Proofs}  %Technical details
\label{append:proofs}

We show the complete proofs for \lemref{lem:simplex}, \propref{prop:fenchel}, \propref{prop:angular}, and \thmref{thm:dftrl} in order.
In the proofs of \lemref{lem:simplex}, \propref{prop:fenchel} and \propref{prop:angular}, we use Noether's theorem.
For the details on Hamiltonian dynamics, see \appenref{append:hamiltonian}.

\subsection{Proof of \lemref{lem:simplex}}
\label{append:proof-simplex}

\begin{proof}
    Let us suppose that a Hamiltonian function $f$ on $M_{\zgame}$ has translation symmetry in $\yy{i}\in\R^{\abs{A_i}}$, that is, $f$ is invariant under the following canonical transformation:
    \begin{align}
        &f(\strategies', y') = f(\strategies, y),  \\
        &\strategies' = \strategies, \quad y'^i = y^i + \epsilon^i,
    \end{align}
    where the other $\yy{j(\neq i)}$ are kept intact and $\epsilon^i$ is an identical-entry constant vector, $\epsilon^i=\epsilon\onevec$.
    $\onevec\coloneq (1, \dots, 1)$ is the all-one vector.
    Then for all $a\in A_i$, a generator $g\in C^\infty(M_{\zgame})$ of this canonical transformation satisfies
    \begin{align}
        \pbracket{\xx{i}_a}{g} &=  \dpdp{g(x,y)}{\yy{i}_a} = 0,  \\
        \pbracket{\yy{i}_a}{g} &= -\dpdp{g(x,y)}{\xx{i}_a} = 1.
    \end{align}
    We thus obtain $g(x,y)=-\sum_a \xx{i}_a + \text{const.}$
    By Noether's theorem, the function $g$ is conserved in the Hamiltonian system generated by the Hamiltonian $f$, implying that $\sum_a \xx{i}_a$ is conserved.
\end{proof}

\lemref{lem:simplex} holds for any Hamiltonian functions that possess translation symmetry in the $\yy{i}$ direction, not only for ours.
Roughly speaking, this lemma corresponds to the conservation law of the center-of-mass momentum, which is obtained from \eref{eq:momentumconservation} with $e=\onevec$.
It is noted, however, that the role of canonical variables is interchanged between \lemref{lem:simplex} and \eref{eq:momentumconservation}.

\subsection{Proof of \propref{prop:fenchel}}
\label{append:proof-fenchel}

\begin{proof}
    The proof is straightforward by applying Noether's theorem to the Hamiltonian (\ref{eq:ftrl-hamiltonian}).
    We aim to show that our Hamiltonian (\ref{eq:ftrl-hamiltonian}) is invariant under the canonical transformation below, and that its generator is given by $\gfenchel$.
    Let us consider the canonical transformation,
    \begin{equation}
        x'^i = x^i + \nabla\hdual{i}(\yy{i}) - \xs{i},  \quad  y'^i = y^i,
        %\delta\xx{i} = \nabla\hdual{i}(\yy{i}) - \xs{i},  \quad  \delta\yy{i} = 0,
        \label{eq:symmetry-fenchel-proof}
    \end{equation}
    for all $i\in\agents$. Under this transformation, the FTRL Hamiltonian is invariant as:
    \begin{equation}
        \begin{aligned}[b]
            %\delta \hftrl(\strategies, y)
            %    &= \sum_{i,j\in\agents} \inner{\nabla\hdual{j}(\yy{j})}{\payoffmat{ji} \delta\xx{i}}  \\
            \hftrl(\strategies', y') - \hftrl(\strategies, y)
                &= \sum_{i,j} \inner{\nabla\hdual{j}(\yy{j})}{\payoffmat{ji} \nabla\hdual{i}(\yy{i}) - \payoffmat{ji} \xs{i}}  \\
                &= -\sum_j \inner{\nabla\hdual{j}(\yy{j})}{\sum_i \payoffmat{ji} \xs{i}}  \\
                &= -\sum_j \lambda^j \sum_a \nabla\hdual{j}(\yy{j})_a  \\
                &= - C \sum_j \lambda^j = 0.
        \end{aligned}
    \end{equation}
    From the first line to the second, we used the zero-sum property of the game.
    From the second line to the third, we apply the property of the Nash equilibrium \eref{eq:nash-property}.
    Finally, we used the assumption that the elements of $\nabla\hdual{j}(\yy{j})$ sum to a constant, and again \eref{eq:nash-property}.
    A generator of this canonical transformation, $g\in C^\infty(M_{\zgame})$, satisfies the following for all $i\in\agents$ and for all $a\in A_i$,
    \begin{align}
        \pbracket{\xx{i}_a}{g} &=  \dpdp{g(x,y)}{\yy{i}_a} = \dpdp{\hdual{i}(\yy{i})}{\yy{i}_a} - \xs{i}_a,  \\
        \pbracket{\yy{i}_a}{g} &= -\dpdp{g(x,y)}{\xx{i}_a} = 0.
    \end{align}
    Thus, integrating these equations, we obtain the generator of this canonical transformation,
    \begin{equation}
        g(\strategies, y) = \sum_{i\in\agents} \left( \hdual{i}(\yy{i}) - \inner{\yy{i}}{\xs{i}} \right) + \text{const.}
    \end{equation}
    Since $\sum_i \hreg{i}(\xs{i})$ is a constant, it follows that the Fenchel coupling $\gfenchel$ is the conserved quantity generating the symmetry \eref{eq:symmetry-fenchel-proof} of the FTRL Hamiltonian.
\end{proof}

\subsection{Proof of \propref{prop:angular}}
\label{append:proof-angular}

\begin{proof}
    \begin{enumerate}
        \item We show that the FTRL Hamiltonian (\ref{eq:ftrl-hamiltonian}) is invariant under the canonical transformation below, and that its generator is given by $\gangular$.
            Let us consider the canonical transformation,
            \begin{align}
                x'^i = \rot_i x^i,  \quad  y'^i = \rot_i y^i,
            \end{align}
            for all $i\in\agents$.
            Under this transformation, the FTRL Hamiltonian is invariant:
            \begin{equation}
                \begin{aligned}[b]
                    \hftrl(\strategies', y')
                        &= \sum_{i,j} \inner{\nabla\hdual{j}(y'^j)}{\payoffmat{ji} x'^i}  \\
                        &= \sum_{i,j} \inner{\rot_j \nabla\hdual{j}(y^j)}{\payoffmat{ji} \rot_i x^i}  \\
                        &= \sum_{i,j} \inner{\nabla\hdual{j}(y^j)}{\rot_j^\top \payoffmat{ji} \rot_i x^i}  \\
                        &= \hftrl(\strategies, y).
                \end{aligned}
            \end{equation}
            From the first line to the second, we used the equivariance assumption for $\nabla\hdual{j}(y^j)$.
            From the third line to the fourth, we used the facts that $\rot_j^\top = \rot_j^{-1}$ and rotation matrices and the payoff matrices commute.
            Since $\rot_i\in\SO(\abs{A_i})$ are rotation matrices, there exist skew-symmetric matrices $\rotd_i\in\sof(\abs{A_i})$ such that $\rot_i = e^{\epsilon \rotd_i}$ and $\rotd_i^\top = -\rotd_i$.
            So a generator of this canonical transformation, $g\in C^\infty(M_{\zgame})$, satisfies
            \begin{align}
                \pbracket{\xx{i}_a}{g} &=  \dpdp{g(x,y)}{\yy{i}_a} = \sum_b (\rotd_i)_{ab} \xx{i}_b,  \\
                \pbracket{\yy{i}_a}{g} &= -\dpdp{g(x,y)}{\xx{i}_a} = \sum_b (\rotd_i)_{ab} \yy{i}_b.
            \end{align}
            Integrating these equations, we obtain the generator of this canonical transformation,
            \begin{equation}
                g(\strategies, y) = -\sum_{i\in\agents} \inner{\xx{i}}{\rotd_i \yy{i}} + \text{const.}
            \end{equation}
            From Noether's theorem it follows that the function $\gangular$ is the conserved quantity in the Hamiltonian system of $\hftrl$.
        \item Similarly to the first statement, we can show the FTRL Hamiltonian (\ref{eq:ftrl-hamiltonian}) is invariant under the canonical transformation,
            \begin{align}
                \strategies' = \rott \strategies,  \quad  y' = \rott y,
            \end{align}
            where $\mathcal{R}\in\SO(n)$ acts on the index of $\agents$ and rotates the whole spaces $\R_{\geq 0}^{\sum_i \abs{A_i}}$ and $\R^{\sum_i \abs{A_i}}$ for $\strategies$ and $y$, respectively.
            Using the notation $\llangle \alpha, \beta \rrangle \coloneq \sum_{i\in\agents} \inner{\alpha^i}{\beta^i}$, we have
            \begin{equation}
                \begin{aligned}[b]
                    \hftrl(\strategies', y')
                        &= \llangle \nabla\hdual{}(y'), U \strategies' \rrangle  \\
                        &= \llangle \rott \nabla\hdual{}(y), U \rott \strategies \rrangle  \\
                        &= \llangle \nabla\hdual{}(y), \rott^{-1} U \rott \strategies \rrangle  \\
                        &= \hftrl(\strategies, y).
                \end{aligned}
            \end{equation}
            Thus, with $\rott = e^{\epsilon \rottd},~ \rottd^\top = -\rottd$, we find from Noether's theorem that the function $\gangularr$ is the conserved quantity in the Hamiltonian system of $\hftrl$:
            \begin{equation}
                \gangularr(\strategies, y)
                    = \llangle \xx{}, \rottd \yy{} \rrangle
                    = \sum_{i,j\in\agents} \rottd_{ij} \inner{\xx{i}}{\yy{j}}.
            \end{equation}
    \end{enumerate}
\end{proof}

As mentioned in the main text, these two conservation laws correspond to the conservation of the total ``angular momentum'' of each agent and the total angular momentum of the system, respectively.
To see this more explicitly, one can consider the cases of $\abs{A_i}=3$ and/or $n=3$;
\begin{equation}
    \rotd_i,\, \rottd = L_a,
\end{equation}
where $L_a$ is either of the standard basis vectors of $\sof(3)$:
\begin{equation}
    L_1 = \begin{pmatrix}
        0 &  0 & 0 \\
        0 &  0 & 1 \\
        0 & -1 & 0
    \end{pmatrix},
    \quad
    L_2 = \begin{pmatrix}
        0 & 0 & -1 \\
        0 & 0 &  0 \\
        1 & 0 &  0
    \end{pmatrix},
    \quad
    L_3 = \begin{pmatrix}
        0  & 1 & 0 \\
        -1 & 0 & 0 \\
        0  & 0 & 0
    \end{pmatrix}.
\end{equation}
Then, these ``angular momentum'' conservation laws show that the following quantities are conserved:
\begin{align}
    \gangular(\strategies, y)
        &= \sum_{i\in\agents} \inner{\xx{i}}{L_a \yy{i}}
        = \sum_{i\in\agents} \left(\xx{i} \times \yy{i}\right)_a,  \\
    \gangularr(\strategies, y)
        &= \llangle \xx{}, L_a \yy{} \rrangle
        \eqcolon \llangle \xx{} \times \yy{} \rrangle_a.
\end{align}
where $\times$ is the three-dimensional exterior product.

One can indeed find an angular momentum-like conserved quantity in the FTRL dynamics of the two-player Rock-Paper-Scissors game with the Euclidean regularizer.
The two-player Rock-Paper-Scissors game has $\agents=\set{1,2}$, $A_i=\set{\mathrm{R},\mathrm{P},\mathrm{S}}\simeq\set{1,2,3}$, and the payoff matrices are given by
\begin{equation}
    \payoffmat{12} = \payoffmat{21} =
    \begin{pmatrix}
        0  & -1 &  1 \\
        1  &  0 & -1 \\
        -1 &  1 &  0
    \end{pmatrix}.
\end{equation}
In this case, the symmetry transformation is given by the following rotation matrices:
\begin{equation}
    \rot_i = e^{\epsilon \rotd},
    \quad
    \rotd = L_1 + L_2 + L_3 = \begin{pmatrix}
        0  &  1 & -1 \\
        -1 &  0 &  1 \\
        1  & -1 &  0
    \end{pmatrix},
\end{equation}
which represent rotations around the $\onevec=(1, 1, 1)^\top$ axis, that is,
\begin{equation}
    \rot_i \onevec = \onevec,  \quad  \rotd \onevec = 0,  \quad \onevec^\top \rotd = 0.
\end{equation}
To see this is a symmetry, we can check that this rotation matrices commute with the payoff matrices, since $\payoffmat{ij}=-\rotd$:
\begin{equation}
    \payoffmat{ij} \rot_j = \rot_i \payoffmat{ij}.
\end{equation}
In addition, the equivariance condition is also satisfied.
A point is that for the Euclidean regularizer we have
\begin{equation}
    \begin{aligned}[b]
        \nabla \hdual{i}(\yy{i})
            &= \yy{i} - \frac{1}{3} \onevec \left( \inner{\onevec}{\yy{i}} - 1 \right)  \\
            &= P_i \yy{i} + \frac{1}{3} \onevec,
    \end{aligned}
\end{equation}
where $P_i = I_3 - \frac{1}{3} \onevec \onevec^\top$ is the projection matrices and commute with the rotation matrices, $P_i \rot_i = \rot_i P_i$.
In this form, we can readily see the equivariance condition with the aforementioned properties for the rotation matrices as
\begin{equation}
    \begin{aligned}[b]
        \rot_i \nabla \hdual{i}(\yy{i})
        &= \rot_i P_i \yy{i} + \frac{1}{3} \rot_i \onevec  \\
        &= P_i \rot_i \yy{i} + \frac{1}{3} \onevec  \\
        &= \nabla \hdual{i}(\rot_i \yy{i}).
    \end{aligned}
\end{equation}
Thus, the FTRL Hamiltonian is invariant under this symmetry transformation, and the conserved quantity is given by
\begin{equation}
    \begin{aligned}[b]
        \gangular(\strategies, y)
        &= \sum_{i=1,2} \inner{\xx{i}}{\rotd \yy{i}}  \\
        &= \sum_{i=1,2} \left( \xx{i}_1 (\yy{i}_2 - \yy{i}_3) + \xx{i}_2 (\yy{i}_3 - \yy{i}_1) + \xx{i}_3 (\yy{i}_1 - \yy{i}_2) \right).
    \end{aligned}
\end{equation}

We also have the other angular momentum-like conserved quantity in the FTRL dynamics of the two-player Matching Pennies game with the Euclidean regularizer.
This game has $\agents=\set{1,2}$, $A_i=\set{\mathrm{H},\mathrm{T}}\simeq\set{1,2}$, and the payoff matrices are given by
\begin{equation}
    \payoffmat{12} = - \payoffmat{21} =
    \begin{pmatrix}
        1  & -1 \\
        -1 &  1
    \end{pmatrix}.
\end{equation}
One finds that this is the case of,
\begin{equation}
    U = \Sigma \otimes S,
    \quad
    \Sigma =
    \begin{pmatrix}
        0  & 1 \\
        -1 & 0
    \end{pmatrix},
    \quad
    S =
    \begin{pmatrix}
        1  & -1 \\
        -1 & 1
    \end{pmatrix}.
\end{equation}
We then have the following rotation matrix as a symmetry transformation:
\begin{equation}
    \rott = e^{\epsilon \Sigma},
    \quad
    \Sigma =
    \begin{pmatrix}
        0 & 1 \\
        -1 & 0
    \end{pmatrix}.
\end{equation}
Thus, the FTRL Hamiltonian is invariant under this symmetry transformation, and the conserved quantity is given by
\begin{align}
    \gangularr(\strategies, y)
        &= \llangle \xx{}, \Sigma \yy{} \rrangle  \\
        &= \sum_{i,j=1,2} \sigma_{ij} \inner{\xx{i}}{\yy{j}}  \\
        &= \inner{\xx{1}}{\yy{2}} - \inner{\xx{2}}{\yy{1}}.
\end{align}

\subsection{Proof of \thmref{thm:dftrl}}
\label{append:proof-dftrl}

\begin{proof}
    From the relations (\ref{eq:consistency}), the learning dynamics is found to satisfy $\xx{i}(t) = \nabla\hdual{i}(\yy{i}(t))$.
    Therefore, it suffices to show that the Fenchel coupling $\gfenchel$ is monotonically decreasing under the modified FTRL dynamics given by Eqs.~(\ref{eq:dftrl}) and (\ref{eq:xftrl-dual}) together with (\ref{eq:perturbation}).
    \begin{equation}
        \begin{aligned}[b]
            \frac{d}{dt}\gfenchel\left( \strategies(t), y(t) \right)
            &= \sum_{i\in\agents} \inner{\dd{\yy{i}(t)}{t}}{\nabla\hdual{i}(\yy{i}(t)) - \xs{i}}  \\
            &= \sum_{i,j} \inner{\payoffmat{ij} \xx{j}(t)}{\xx{i}(t) - \xs{i}} + \coeff \sum_{i} \inner{g^i(\strategies(t), y(t))}{\xx{i}(t) - \xs{i}}
        \end{aligned}
    \end{equation}
    Note that $\strategies(t)$ and $y(t)$ here are the solution to \eref{eq:dftrl}.
    Using the assumptions, we know that the first term vanishes:
    \begin{equation}
        \begin{aligned}[b]
            \sum_{i,j} \inner{\payoffmat{ij} \xx{j}(t)}{\xx{i}(t) - \xs{i}}
                &= \sum_{i,j} \inner{\xx{j}(t)}{\payoffmat{ji} \xs{i}}  \\
                &= \sum_j \lambda^j \sum_a \xx{j}_a(t)  \\
                &= 0,
        \end{aligned}
        \label{eq:fenchel-conserved}
    \end{equation}
    where we used the zero-sum property of the game in the first line, and the property of the Nash equilibrium \eref{eq:nash-property} in the last.
    This is nothing but the consequence of \propref{prop:fenchel}.
    Then, the time evolution of the Fenchel coupling reads as follows:
    \begin{equation}
        \begin{aligned}[b]
            \frac{d}{dt}\gfenchel\left( \strategies(t), y(t) \right)
                &= \coeff \sum_{i\in\agents} \inner{g^i(\strategies(t), y(t))}{\xx{i}(t) - \xs{i}}  \\
                &= \coeff \sum_{i,j} \inner{\payoffmat{ij} \hess\big(\hdual{j}(\yy{j}(t))\big) \sum_{k\in\agents} \payoffmat{jk} \xx{k}(t)}{\xx{i}(t) - \xs{i}}  \\
                &= -\coeff \sum_{j} \inner{\hess\big(\hdual{j}(\yy{j}(t))\big) \sum_{k} \payoffmat{jk} \xx{k}}{\sum_{i} \payoffmat{ji} \left(\xx{i}(t) - \xs{i}\right)}  \\
                &= -\coeff \sum_{j} \inner{\hess\big(\hdual{j}(\yy{j}(t))\big) \sum_{k} \payoffmat{jk} \left(\xx{k}(t) - \xs{k}\right)}{\sum_{i} \payoffmat{ji} \left(\xx{i}(t) - \xs{i}\right)} \\
                &\leq 0,
        \end{aligned}
    \end{equation}
    since $\coeff\geq 0$ and the Hessian of the dual regularizer $h^*$ is positive-definite.
    The equality is attained when $\strategies(t)=\strategies^\ast$ or $\coeff=0$, which reduces to the ordinary FTRL dynamics.
    From the second line to the third, we used the zero-sum property, $\left(\payoffmat{ij}\right)^\top=-\payoffmat{ji}$.
    From the third line to the fourth, we apply the assumptions and use the fact that
    \begin{equation}
        \begin{aligned}[b]
            &\sum_{j} \inner{\hess\big(\hdual{j}(\yy{j}(t))\big) \sum_k \payoffmat{jk} \xs{k}}{\sum_{i} \payoffmat{ji} \left(\xx{i}(t) - \xs{i}\right)}  \\
            &= \sum_{j} \inner{\hess\big(\hdual{j}(\yy{j}(t))\big) \lambda^j\onevec}{\sum_{i} \payoffmat{ji} \left(\xx{i}(t) - \xs{i}\right)}  \\
            &= \sum_{j} \lambda^j \inner{\nabla_{\yy{j}}\left( \sum_a \nabla_{\yy{j}}\hdual{j}(\yy{j}(t))_a \right)}{\sum_{i} \payoffmat{ji} \left(\xx{i}(t) - \xs{i}\right)}  \\
            &= 0.
        \end{aligned}
    \end{equation}
\end{proof}

\section{Continuous FTRL algorithms and perturbative expansion}
\label{append:cftrl}

We derive the discrete-to-continuum limit of the last-iterate convergent FTRL algorithms known in the literature.
We also demonstrate interesting perturbative expansions thereof and show the proof of \propref{prop:co-ceg} along the way.

First, let us describe the setup for the discrete FTRL dynamics.
Let $I= \left[ 0, T \right]$ be the time interval and $L = \set{0, \step, 2\step, \dots, N\step=T}$ be a discrete lattice of $I$ with the step size $\step$;
\begin{equation}
    L \hookrightarrow I \hookrightarrow \R.
\end{equation}
The step size $\step$ is by definition scaled in terms of the number of lattice points as $T/N$ for the fixed $T$.
We denote the discrete learning dynamics by hatted variables such as $\hat{\phi}: L\to \sspace^{N_i} \text{ or } \R^{\abs{A_i}}$.
We assume that there exists a continuous learning dynamics $\phi$, to which $\hat{\phi}$ converges at large $N$ ($\step\to 0$ limit) in an appropriate sense.

\noindent \textbf{Optimistic FTRL.}
The optimistic FTRL algorithm \cite{cai2022finitetime,daskalakis2019last,daskalakis2018training,syrgkanis2015fast,rakhlin2013optimization} is defined by the following update rule:%
\footnote{In the literature, each time step of $\step$ is implicitly assumed. Compared to our notation, our $\hat{\phi}(n\step)$ corresponds to their $x_n$, etc.}
\begin{equation}
    \begin{aligned}
        \xd{i}(t)
            &= \nabla\hdual{i}(\yy{i})\mid_{\yy{i}=\yd{i}(t)},  \\
        \yd{i}(t+\step)
            &= \yd{i}(t) + \step \sum_{j\in\agents} \payoffmat{ij} \xd{j}(t)
                + \coeff \sum_{j\in\agents} \payoffmat{ij} \left( \xd{j}(t) - \xd{j}(t-\step) \right),
    \end{aligned}
\end{equation}
where we assume $\coeff\in\R_{\geq 0}$.
By taking the step size $\step\to 0$ while keeping the combination $N\step=T$, i.e., $n\step=t$, we obtain the continuous optimistic (CO) FTRL dynamics,
\begin{equation}
    \begin{aligned}
        \xx{i}(t)
            &= \nabla\hdual{i}(\yy{i})\mid_{\yy{i}=\yy{i}(t)},  \\
        \dd{\yy{i}(t)}{t}
            &= \sum_{j\in\agents} \payoffmat{ij} \xx{j}(t) + \coeff \sum_{j\in\agents} \payoffmat{ij} \dd{\xx{j}(t)}{t}.
    \end{aligned}
    \label{eq:co-ftrl}
\end{equation}

For $\coeff \ll 1$,%
\footnote{To be more precise, it should be $\coeff/T \ll 1$, since the perturbative coefficient $\coeff$ is a dimensionful parameter.}
we consider the perturbative expansion of $\yy{i}(t)$ in powers of $\coeff$,
\begin{align}
    %\xx{i}(t) &= \sum_{n=0}^\infty \coeff^n \xx{i}_{(n)}(t),  \\
    \yy{i}(t) = \sum_{n=0}^\infty \coeff^n \yy{i}_{(n)}(t),
\end{align}
while $\xx{i}(t)$ are kept untouched.
Substituting this expression into the CO FTRL dynamics \eref{eq:co-ftrl}, we obtain (assuming the term-wise differentiability)
\begin{align}
    \xx{i}(t)
        &= \nabla\hdual{i}(\yy{i}(t)),  \\
    \dd{\yy{i}(t)}{t}
        &= \sum_{n=0}^\infty \coeff^n \dd{\yy{i}_{(n)}(t)}{t}  \nonumber  \\
        %&= \sum_{n=0}^\infty \coeff^n \payoffmat{i \blacktriangle} \left( H U \right)^n \xx{\blacktriangledown}(t)  \nonumber  \\
        &= \sum_{n=0}^\infty \coeff^n \payoffmat{ij_1} H_{j_1} \payoffmat{j_1 j_2}  \cdots  H_{j_n} \payoffmat{j_n j_{n+1}} \xx{j_{n+1}}(t),
\end{align}
where $H_{j_l}\coloneq\hess(\hdual{j_l}(\yy{j_l}))$ and the repeated indices are summed over all possible values.
At order $\coeff^1$, we have
\begin{equation}
    \begin{aligned}
        \xx{i}(t)
            &= \nabla\hdual{i}(\yy{i}(t)),  \\
        \dd{\yy{i}(t)}{t}
            &\simeq \sum_{j\in\agents} \payoffmat{ij} \xx{j}(t) + \coeff \sum_{j,k} \payoffmat{ij} \hess(\hdual{j}(\yy{j}(t))) \payoffmat{jk} \xx{k}(t),
    \end{aligned}
\end{equation}
which is equivalent to our DFTRL algorithm, \eref{eq:dftrl2}.
This proves half of \propref{prop:co-ceg}.

\noindent \textbf{Extra-gradient FTRL.}
The extra-gradient FTRL algorithm \cite{cai2022finitetime,lee2021fast,mertikopoulos2018optimistic,Korpelevich1976TheEM} is defined through
\begin{equation}
    \begin{aligned}
        \xd{i}(t)
            &= \nabla\hdual{i}(\yy{i})\mid_{\yy{i}=\yd{i}(t)},  \\
        \yd{i}(t+\step)
            &= \yd{i}(t) + \step \sum_{j\in\agents} \payoffmat{ij} \nabla\hdual{j}(\yy{j})\mid_{\yy{j}=\yd{j}(t)+\coeff\sum_k\payoffmat{jk}\xd{k}(t)}.
    \end{aligned}
    \label{eq:eg-ftrl}
\end{equation}
In the literature, this update rule is often expressed using auxiliary variables $\hat{b}^i$ and $\hat{c}^i$ as follows:
\begin{align}
    \xd{i}(t)
        &= \nabla\hdual{i}(\yy{i})\mid_{\yy{i}=\yd{i}(t)},  \\
    \hat{b}^i(t)
        &= \yd{i}(t) + \coeff \sum_{j\in\agents} \payoffmat{ij} \xd{j}(t),  \\
    \hat{c}^i(t)
        &= \nabla\hdual{i}(\yy{i})\mid_{\yy{i}=\hat{b}^i(t)},  \\
    \yd{i}(t+\step)
        &= \yd{i}(t) + \step \sum_{j\in\agents} \payoffmat{ij} \hat{c}^j(t).
\end{align}
By eliminating the auxiliary variables $\hat{b}^i$ and $\hat{c}^i$, one finds that these equations reduce to the aforementioned extra-gradient FTRL update rule \eref{eq:eg-ftrl}.
In the above four equations, the second and fourth equations imply that during one time step $\step$ from $\yd{i}(t)$ to $\yd{i}(t+\step)$, it acquires an ``extra gradient'' due to the additional term proportional to $\coeff$:
\begin{equation}
    \coeff \sum_{j\in\agents} \payoffmat{ij} \xd{j}(t).
\end{equation}

Now, by taking the step size $\step\to 0$, we derive the continuous extra-gradient (CEG) FTRL dynamics from \eref{eq:eg-ftrl},
\begin{equation}
    \begin{aligned}
        \xx{i}(t)
            &= \nabla\hdual{i}(\yy{i})\mid_{\yy{i}=\yy{i}(t)},  \\
        \dd{\yy{i}(t)}{t}
            &= \sum_{j\in\agents} \payoffmat{ij} \nabla\hdual{j}(\yy{j})\mid_{\yy{j}=\yy{j}(t)+\coeff\sum_k\payoffmat{jk}\xx{k}(t)}.
    \end{aligned}
    \label{eq:ceg-ftrl}
\end{equation}
For $\coeff\ll 1$, we again consider the perturbative expansion of $\yy{i}(t)$ in powers of $\coeff$.
It should be noted that for a multivariable function $f$, its Taylor expansion is given by
\begin{align}
    f(x + c) = \sum_{n=0}^\infty \frac{1}{n!} \sum_{a_1,\dots,a_n} \left( \frac{\partial^n}{\partial x_{a_1} \cdots \partial x_{a_n}} f(x) \right) c_{a_1}\cdots c_{a_n}.
\end{align}
Then, we find that
\begin{equation}
    %\begin{aligned}[b]
    \dd{\yy{i}_a(t)}{t}
        = \sum_j \sum_{a'} \payoffmat{ij}_{aa'} \sum_{n=0}^\infty \frac{1}{n!} \sum_{a_1,\dots,a_n} \left( \frac{\partial^{n+1}}{\partial \yy{j}_{a_1} \cdots \partial \yy{j}_{a_n} \partial \yy{j}_{a'}} \hdual{j}(\yy{j}(t)) \right)
        \left( \coeff\sum_k\payoffmat{jk}\xx{k}(t) \right)_{a_1}  \cdots  \left( \coeff\sum_k\payoffmat{jk}\xx{k}(t) \right)_{a_n}.
    %\end{aligned}
\end{equation}
Thus, at order $\coeff^1$, the CEG FTRL dynamics \eref{eq:ceg-ftrl} becomes
\begin{equation}
    \begin{aligned}
        \xx{i}(t)
            &= \nabla\hdual{i}(\yy{i}(t)),  \\
        \dd{\yy{i}(t)}{t}
            &\simeq \sum_{j\in\agents} \payoffmat{ij} \nabla\hdual{j}(\yy{j}(t)) + \coeff \sum_{j,k} \payoffmat{ij} \hess(\hdual{j}(\yy{j}(t))) \payoffmat{jk} \xx{k}(t),
    \end{aligned}
\end{equation}
which is equivalent to our DFTRL algorithm, \eref{eq:dftrl2}.
This proves the other half of \propref{prop:co-ceg}.

\noindent \textbf{Negative-momentum FTRL.}
The negative-momentum FTRL algorithm \cite{hemmat2023lead,zhang2021suboptimality,kovachki2021continuous,gidel2019negative,polyak1964methods} is defined by the following update rule:
\begin{equation}
    \begin{aligned}
        \xd{i}(t)
            &= \nabla\hdual{i}(\yy{i})\mid_{\yy{i}=\yd{i}(t)},  \\
        \yd{i}(t+\step)
            &= \yd{i}(t) + \step \sum_{j\in\agents} \payoffmat{ij} \xd{j}(t) - \coeff \left( \yd{i}(t) - \yd{i}(t-\step) \right).
    \end{aligned}
\end{equation}
By taking the step size $\step\to 0$, we obtain the continuous negative-momentum (CNM) FTRL dynamics,
\begin{equation}
    \begin{aligned}
        \xx{i}(t)
            &= \nabla\hdual{i}(\yy{i})\mid_{\yy{i}=\yy{i}(t)},  \\
        \dd{\yy{i}(t)}{t}
            &= \left( 1 + \coeff \right)^{-1} \sum_{j\in\agents} \payoffmat{ij} \xx{j}(t).
    \end{aligned}
    \label{eq:cnm-ftrl}
\end{equation}
This dynamics is essentially the same as the ordinary FTRL dynamics, since an overall constant to $\yy{i}$ does not affect the qualitative behavior of the dynamics.
In fact, one can readily confirm that the Fenchel coupling $\gfenchel$ is conserved under \eref{eq:cnm-ftrl}:
\begin{equation}
    \begin{aligned}[b]
        \frac{d}{dt} \gfenchel(\strategies(t), y(t))
            &= \sum_{i\in\agents} \inner{\dd{\yy{i}(t)}{t}}{\nabla\hdual{i}(\yy{i}(t)) - \xs{i}}  \\
            &= \left( 1 + \coeff \right)^{-1} \sum_{i,j} \inner{\payoffmat{ij} \xx{j}(t)}{\xx{i}(t) - \xs{i}}  \\
            &= 0,
    \end{aligned}
\end{equation}
where in the last step we used the fact \eref{eq:fenchel-conserved}.
Thus, the solution to \eref{eq:cnm-ftrl} does not converge in general, regardless of the value of $\alpha$.
As this example demonstrates, it is generically non-trivial whether the continuum limit of a discrete convergent dynamics turns out to be convergent again, and vice versa.

\section{Additional results}
\label{append:additional-results}

\subsection{Extra conservation laws in the FTRL Hamiltonian system}
\label{append:extra-conservation-laws}

We provide two extra conservation laws in the FTRL Hamiltonian system.
These two represent the conservation laws under the translation symmetry and the dilation symmetry of the FTRL Hamiltonian, respectively.
%log prob conservation law for entropy regularizer \cite{itoh1987integrals,itoh2008combinatorial,paik2023completely}
In this subsection, we always have in mind that we have a game $\zgame$ with the Hamiltonian (\ref{eq:ftrl-hamiltonian}).

\begin{prop*}
    If there exist $c^i\in\R^{\abs{A_i}},~ i\in\agents$, such that
    \begin{equation}
        \sum_{i\in\agents} \payoffmat{ji} c^i = 0,
    \end{equation}
    for all $j\in\agents$, then the following function is a conserved quantity of the Hamiltonian system:
    \begin{equation}
        \gcumul(\strategies, y) = \sum_{i\in\agents} \inner{c^i}{\yy{i}}.
    \end{equation}
\end{prop*}

\begin{proof}
    The FTRL Hamiltonian (\ref{eq:ftrl-hamiltonian}) is invariant under the canonical transformation,
    \begin{equation}
        x'^i = x^i + \epsilon c^i,  \quad  y'^i = y^i,
    \end{equation}
    for all $i\in\agents$:
    \begin{equation}
        \begin{aligned}[b]
            \hftrl(\strategies', y') - \hftrl(\strategies, y)
                &= \epsilon \sum_j \inner{\nabla\hdual{j}(\yy{j})}{\sum_i \payoffmat{ji} c^i}  \\
                &= 0.
        \end{aligned}
    \end{equation}
    A generator $g\in C^\infty(M_{\zgame})$ of this canonical transformation satisfies
    \begin{align}
        \pbracket{\xx{i}_a}{g} &=  \dpdp{g(x,y)}{\yy{i}_a} = c^i_a,  \\
        \pbracket{\yy{i}_a}{g} &= -\dpdp{g(x,y)}{\xx{i}_a} = 0.
    \end{align}
    We thus obtain $g(x,y)=\sum_{i\in\agents} \inner{c^i}{\yy{i}} + \text{const.}$
    By Noether's theorem, the function $g$ is conserved in the FTRL Hamiltonian system, implying that $\gcumul$ is conserved.
\end{proof}

When $c^i=\onevec$, this proposition shows the conservation of the total cumulative payoff, which is quite reasonable for zero-sum games.

\begin{prop*}
    If the regularizers satisfy the ``homogeneity'' condition,
    \begin{equation}
        \hess\left( \hdual{i}(\yy{i}) \right) \yy{i} = \nabla\hdual{i}(\yy{i}),
    \end{equation}
    for all $i$ and $\yy{i}$, then the following function is a conserved quantity of the FTRL Hamiltonian system:
    \begin{equation}
        \gdilation(\strategies, y) = \sum_{i\in\agents} \inner{\xx{i}}{\yy{i}}.
    \end{equation}
\end{prop*}

\begin{proof}
    Under the assumption, the FTRL Hamiltonian (\ref{eq:ftrl-hamiltonian}) has the dilation symmetry, namely, it is invariant under the canonical transformation,
    \begin{equation}
        x'^i = x^i + \epsilon x^i,  \quad  y'^i = y^i - \epsilon y^i,
    \end{equation}
    for all $i\in\agents$:
    \begin{equation}
        \begin{aligned}[b]
            \hftrl(\strategies', y') - \hftrl(\strategies, y)
                &= \sum_{i,j} \bigg[ \inner{\nabla\hdual{j}(\yy{j})}{\payoffmat{ji} \epsilon \xx{i}} + \inner{\hess\left( \hdual{j}(\yy{j}) \right) (-\epsilon \yy{i})}{\payoffmat{ji} \xx{i}} \bigg]  \\
                &= \epsilon \sum_{i,j} \inner{\nabla\hdual{j}(\yy{j}) - \hess\left( \hdual{j}(\yy{j}) \right) \yy{i}}{\payoffmat{ji} \xx{i}}  \\
                &= 0.
        \end{aligned}
    \end{equation}
    A generator $g\in C^\infty(M_{\zgame})$ of this canonical transformation satisfies
    \begin{align}
        \pbracket{\xx{i}_a}{g} &=  \dpdp{g(x,y)}{\yy{i}_a} = \xx{i}_a,  \\
        \pbracket{\yy{i}_a}{g} &= -\dpdp{g(x,y)}{\xx{i}_a} = -\yy{i}_a.
    \end{align}
    We thus obtain $g(x,y)=\sum_{i\in\agents} \inner{\xx{i}}{\yy{i}} + \text{const.}$
    By Noether's theorem, the function $g$ is conserved in the FTRL Hamiltonian system, implying that $\gdilation$ is conserved.
\end{proof}

\subsection{A generalization of DFTRL}
\label{append:dftrl-general}

By extending the argument in the proof of \thmref{thm:dftrl}, i.e., \appenref{append:proof-dftrl}, we identify a potential generalization of the DFTRL algorithm.

\begin{prop*}
    \label{prop:dftrl-general}
    For $m\in\Z_{\geq 0}$, the DFTRL algorithm (\thmref{thm:dftrl}) can be generalized to the $(4m+1)$-power perturbations:
    \begin{equation}
        \begin{aligned}[b]
            g^i(x, y)
                &= \payoffmat{i \blacktriangle} \left( H U \right)^{4m+1} \xx{\blacktriangledown}  \\
                &\coloneq \payoffmat{ij_1} H_{j_1} \payoffmat{j_1 j_2}  \cdots  H_{j_{4m+1}} \payoffmat{j_{4m+1} j_{4m+2}} \xx{j_{4m+2}},
        \end{aligned}
        \label{eq:perturbation-general}
    \end{equation}
    where $H_{j_l}\coloneq\hess(\hdual{j_l}(\yy{j_l}))$ and the repeated indices are summed over all possible values.
\end{prop*}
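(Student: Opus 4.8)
The plan is to follow the proof of \thmref{thm:dftrl} in \appenref{append:proof-dftrl} up to the point where the time derivative of the Fenchel coupling is reduced to a single bilinear form, and then to check that the longer perturbation \eref{eq:perturbation-general} still makes that form non-positive. Exactly as there, the consistency relation \eref{eq:consistency} forces $\xx{i}(t)=\nabla\hdual{i}(\yy{i}(t))$ along the modified dynamics \eref{eq:dftrl}, so that
\begin{equation}
    \frac{d}{dt}\gfenchel(\strategies(t),y(t))
        = \coeff\sum_{i\in\agents}\inner{g^i(\strategies(t),y(t))}{\xx{i}(t)-\xs{i}},
\end{equation}
the unperturbed FTRL term vanishing as in \eref{eq:fenchel-conserved}; this reduction does not depend on the form of $g^i$ at all.

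Next I would rewrite the right-hand side in terms of the displacements $v^i = \xx{i}(t)-\xs{i}$. The hypothesis that $\sum_a\nabla\hdual{j}(\yy{j})_a$ is constant gives $\hess(\hdual{j}(\yy{j}))\,\onevec=0$ (the same identity already used in \appenref{append:proof-dftrl}), and the Nash property \eref{eq:nash-property} gives $\sum_k\payoffmat{jk}\xs{k}=\lambda^j\onevec$. Because $4m+1\ge 1$, there is at least one Hessian factor immediately to the left of the rightmost payoff matrix in \eref{eq:perturbation-general}, so replacing the trailing $\xx{j_{4m+2}}$ by $v^{j_{4m+2}}$ changes the sum only by a term proportional to $\hess(\hdual{j_{4m+1}}(\yy{j_{4m+1}}))\,\lambda^{j_{4m+1}}\onevec=0$. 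Hence $\tfrac{d}{dt}\gfenchel=\coeff\,S$, where $S$ is the sum over all repeated indices of $(v^i)^\top\payoffmat{ij_1}H_{j_1}\payoffmat{j_1 j_2}\cdots H_{j_{4m+1}}\payoffmat{j_{4m+1}j_{4m+2}}v^{j_{4m+2}}$, with $H_{j_l}=\hess(\hdual{j_l}(\yy{j_l}))$.

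The heart of the argument is to fold this chain about its central factor. It contains $4m+2$ payoff matrices and $4m+1$ Hessians, arranged symmetrically about the middle Hessian $H_{j_{2m+1}}$, with $2m+1$ payoff matrices and $2m$ Hessians on each side. Transposing the left half onto the right, using $H^\top=H$ and the zero-sum identity $\big(\payoffmat{ab}\big)^\top=-\payoffmat{ba}$, produces the sign $(-1)^{2m+1}=-1$ from the $2m+1$ transposed payoff matrices, and after relabelling the summed indices the transposed left half is literally the negative of the right half. Denoting by $W^{j_{2m+1}}$ the right half-chain acting on $v$, one obtains $S=-\sum_{j_{2m+1}}\inner{H_{j_{2m+1}}W^{j_{2m+1}}}{W^{j_{2m+1}}}\le 0$, since every $H_j$ is positive semidefinite. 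Thus $\gfenchel$ is monotonically non-increasing along the dynamics, with equality at $\coeff=0$ or $\strategies(t)=\strategies^\ast$, and convergence to the Nash equilibrium follows exactly as in \thmref{thm:dftrl}; setting $m=0$ recovers \eref{eq:perturbation}.

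The main obstacle is the combinatorial bookkeeping in the folding step: one must check that transposing the left half-chain and renaming its summation indices reproduces \emph{precisely} the right half-chain — so that the two halves collapse into a genuine quadratic form in $W^{j_{2m+1}}$ rather than some other contraction — and that the number of transposed payoff matrices is odd, so the overall sign is negative. This is exactly what singles out the exponent $4m+1$: it is the unique family for which the central factor is a Hessian, so the form is controlled by the positive-semidefinite $H_j$ rather than by the antisymmetric $\payoffmat{}$, and for which each half carries an odd number of payoff matrices, forcing the favourable sign. Odd exponents of the form $4m+3$ give an even count on each side and hence the wrong sign, while even exponents place a payoff matrix at the center.
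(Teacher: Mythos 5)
Your proposal is correct and follows essentially the same route as the paper's own proof: reduce $\tfrac{d}{dt}\gfenchel$ to the single $\coeff$-term, fold the chain of $4m+2$ payoff matrices and $4m+1$ Hessians about the central Hessian $H_{j_{2m+1}}$, and extract the sign $(-1)^{2m+1}=-1$ from the transposed payoff matrices so that positive-definiteness of $H_{j_{2m+1}}$ gives monotone decrease. Your explicit justification for replacing the trailing $\xx{j_{4m+2}}$ by the displacement $\xx{j_{4m+2}}-\xs{j_{4m+2}}$ (via $\hess(\hdual{j})\onevec=0$ and $\sum_k\payoffmat{jk}\xs{k}=\lambda^j\onevec$) is a step the paper carries out only in the $m=0$ case, so spelling it out here is a welcome addition rather than a deviation.
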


\begin{proof}
    The proof follows straightforwardly by expanding the argument in the proof of \thmref{thm:dftrl}.
    Under the modified FTRL dynamics Eqs.~(\ref{eq:dftrl}) and (\ref{eq:xftrl-dual}) together with (\ref{eq:perturbation-general}), the time evolution of the Fenchel coupling reads as
    \begin{equation}
        \begin{aligned}
            \frac{d}{dt}\gfenchel\left( \strategies(t), y(t) \right)
                &= \coeff \sum_{i\in\agents} \inner{g^i(\strategies(t), y(t))}{\xx{i}(t) - \xs{i}}  \\
                &= \coeff \inner{\payoffmat{ij_1} H_{j_1} \payoffmat{j_1 j_2}  \cdots  H_{j_{4m+1}} \payoffmat{j_{4m+1} j_{4m+2}} \xx{j_{4m+2}}(t)}{\xx{i}(t) - \xs{i}}.
        \end{aligned}
    \end{equation}
    It is understood that the repeated indices are summed over all possible values.
    By moving the $2m+1$ payoff matrices $\payoffmat{ij_1},\dots,\payoffmat{j_{2m} j_{2m+1}}$ to the opposite side in the inner product, along with the $2m$ Hessians, we have
    \begin{equation}
        \frac{d}{dt}\gfenchel\left( \strategies(t), y(t) \right)
            = -\coeff \inner{H_{j_{2m+1}} (UH)^{2m} \payoffmat{\blacktriangle j} \left(\xx{j}(t) - \xs{j}\right)}{(UH)^{2m} \payoffmat{\blacktriangledown i}\left(\xx{i}(t) - \xs{i}\right)}.
    \end{equation}
    The overall minus sign arises from the zero-sum property of the $2m+1$ payoff matrices.
    The following abbreviations are introduced:
    \begin{align}
        (UH)^{2m} \payoffmat{\blacktriangle j}
            &= \payoffmat{j_{2m+1} j_{2m+2}} H_{j_{2m+2}}  \cdots  \payoffmat{j_{4m} j_{4m+1}} H_{j_{4m+1}} \payoffmat{j_{4m+1} j},  \\
        (UH)^{2m} \payoffmat{\blacktriangledown i}
            &= \payoffmat{j_{2m+1} j_{2m}} H_{j_{2m}}  \cdots  \payoffmat{j_2 j_1} H_{j_1} \payoffmat{j_1 i}.
    \end{align}
    Since the repeated indices are summed over and hence act as the dummies, one finds that the vectors on both sides of the Hessian $H_{j_{2m+1}}$ are the same.
    Thus, from the positive-definiteness of the Hessian, we obtain
    \begin{equation}
        \frac{d}{dt}\gfenchel\left( \strategies(t), y(t) \right) \leq 0.
    \end{equation}
    The equality is attained when $\strategies(t)=\strategies^\ast$ or $\coeff=0$, for which the system reduces to the ordinary FTRL dynamics.
\end{proof}

This generalized DFTRL algorithm reduces to \thmref{thm:dftrl} when $m=0$.
As discussed in the main text, the Hamiltonian (\ref{eq:ftrl-hamiltonian}) admits a one-parameter deformation of the dynamics that preserves the Hamiltonian.
As an example, we examined the damping of the Fenchel coupling, for which \eref{eq:perturbation} was designed.
While we have not yet been able to demonstrate its uniqueness, it is likely that any one-parameter deformation without derivative terms that damps the Fenchel coupling while conserving the Hamiltonian (\ref{eq:ftrl-hamiltonian}) will, up to a constant, be equivalent to \eref{eq:perturbation-general}.
The generalized perturbation (\ref{eq:perturbation-general}) might have additional nice properties for learning in games, such as an improvement in the convergence rate.
However, we do not discuss these aspects here and leave them for future work.

\section{Experimental details and additional simulations}
\label{append:experiments}

In \sref{sec:experiments}, we numerically solve the FTRL/DFTRL dynamics for the well-known examples.
Beyond the typical setups for experimental simulations given in the main text, we provide two additional examples to validate our theoretical results.
To this end, we begin with more detailed descriptions of the experimental setups.

\begin{figure}[t]
    \begin{minipage}[b]{0.246\linewidth}
        \centering
        \includegraphics[keepaspectratio, scale=0.22]{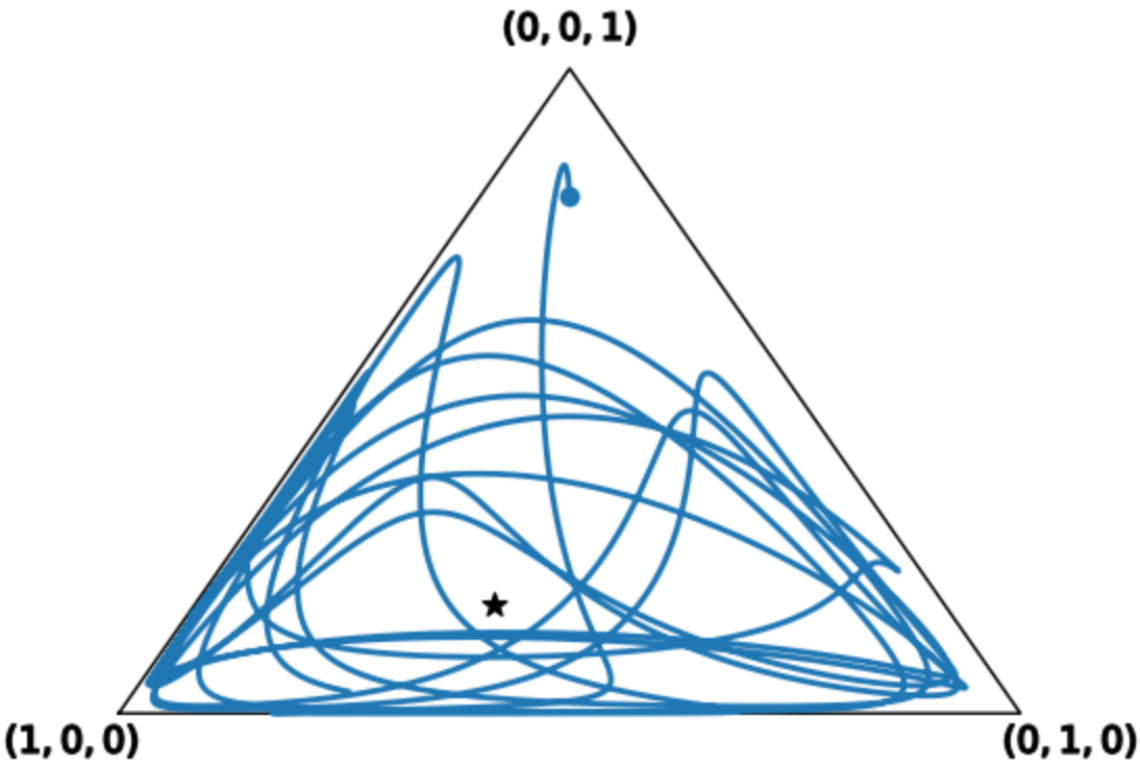}
        \subcaption{FTRL ($\coeff=0$)}
        \label{fig:rpsw0}
    \end{minipage}
    \begin{minipage}[b]{0.246\linewidth}
        \centering
        \includegraphics[keepaspectratio, scale=0.22]{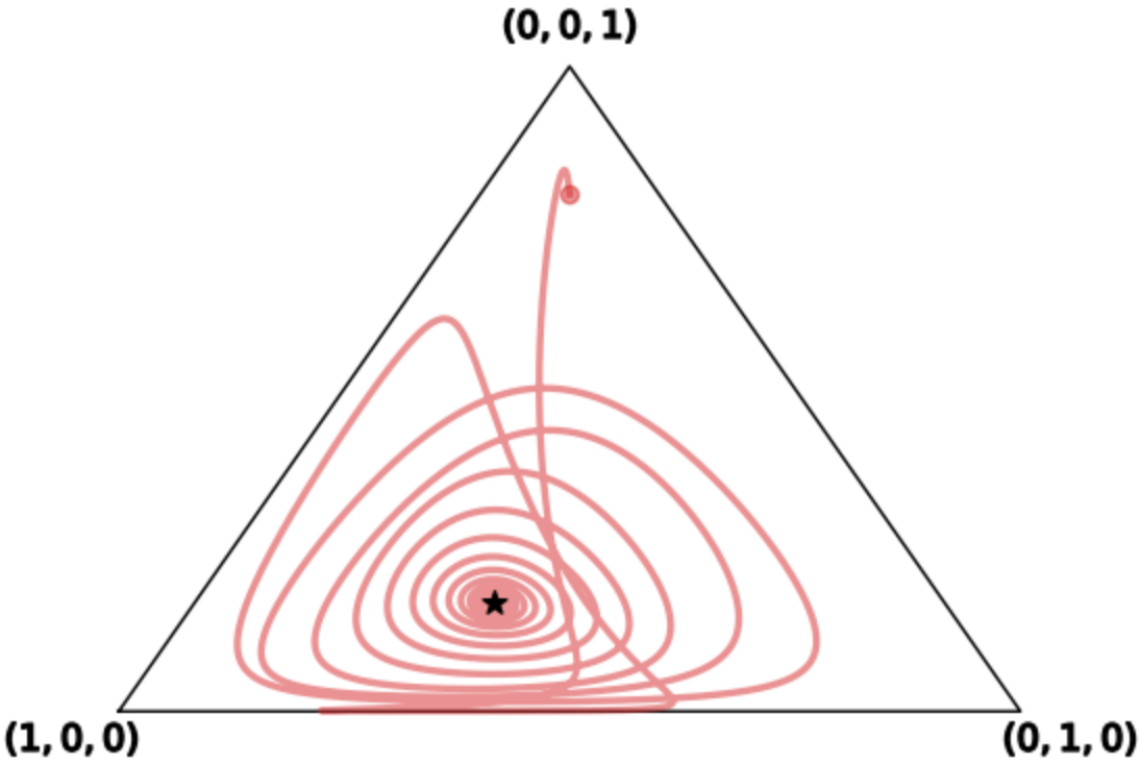}
        \subcaption{DFTRL ($\coeff=0.05$)}
        \label{fig:rpsw05}
    \end{minipage}
    \begin{minipage}[b]{0.246\linewidth}
        \centering
        \includegraphics[keepaspectratio, scale=0.22]{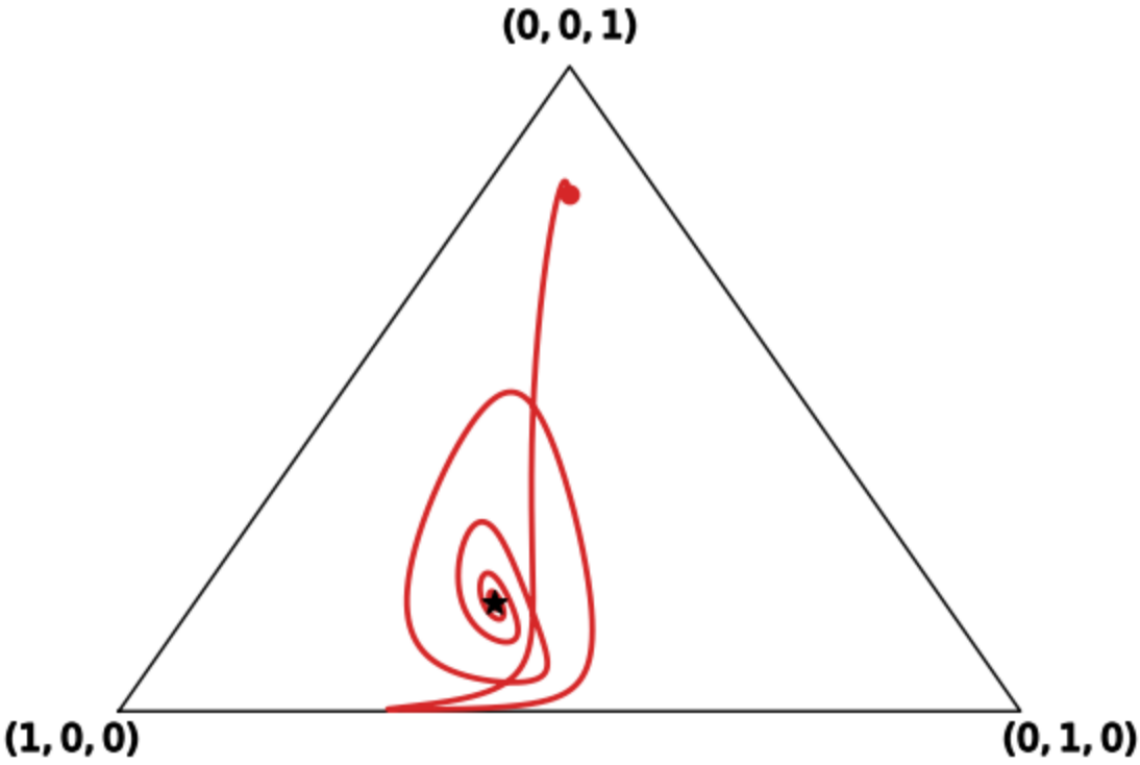}
        \subcaption{DFTRL ($\coeff=0.15$)}
        \label{fig:rpsw1}
    \end{minipage}
    \begin{minipage}[b]{0.246\linewidth}
        \centering
        \includegraphics[keepaspectratio, scale=0.23]{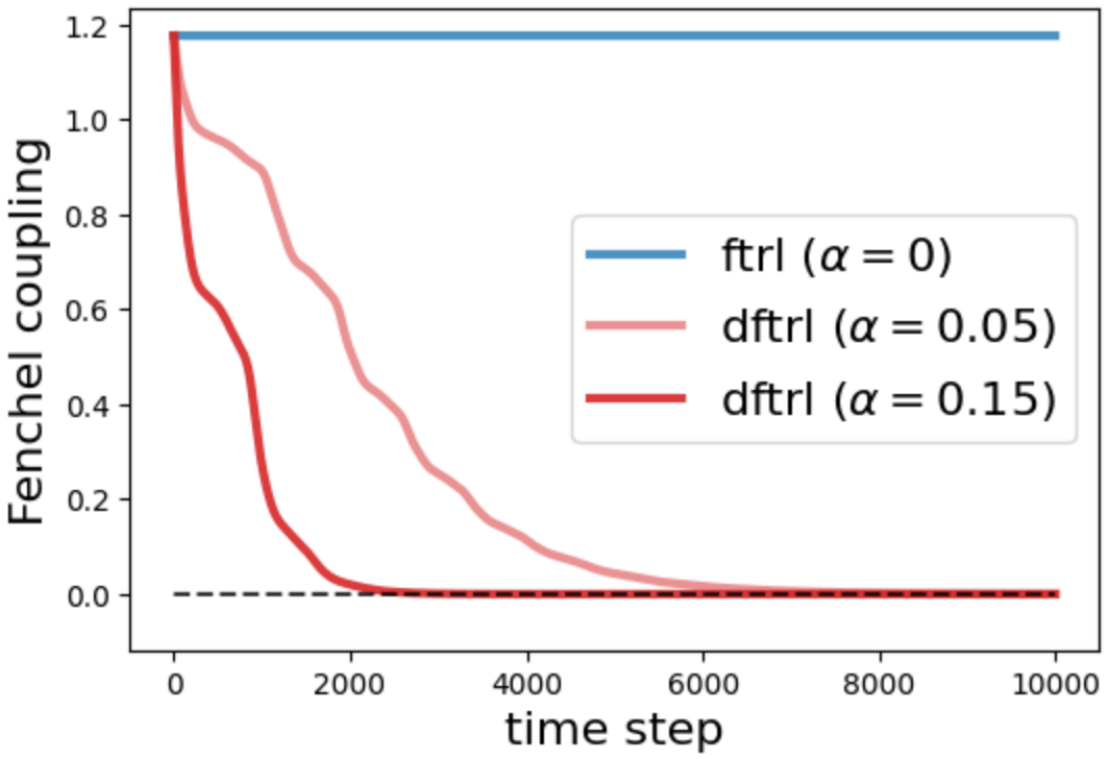}
        \subcaption{Fenchel coupling $\gfenchel$}
        \label{fig:fenchel-rpsw}
    \end{minipage}
    \caption{Two-player weighted Rock-Paper-Scissors.}
    \label{fig:rpsw}
\end{figure}

Let us examine the two-player \emph{weighted} Rock-Paper-Scissors game, where $\agents=\set{1,2}$ and $A_i=\set{\mathrm{R},\mathrm{P},\mathrm{S}}\simeq\set{1,2,3}$.%
\footnote{
    Since this paper has established the Hamiltonian structure for learning dynamics in games, also known as evolutionary dynamics,
    applying or extending our results to various analyses of Rock-Paper-Scissors games using physics-based approaches (such as \cite{kabir2021role,tenorio2022adaptive,justino2025critical}) could yield new insights.
}
This game is characterized by the payoff matrices,
\begin{equation}
    \payoffmat{12} = \payoffmat{21} =
    \begin{pmatrix}
        0  & -a  &  b \\
        a  &  0  & -c \\
        -b &  c  &  0
    \end{pmatrix},
    %\quad
    %\payoffmat{21} = -\left( \payoffmat{12} \right)^\top,
\end{equation}
where $a,b,c\in\R$.
In this game, the Nash equilibrium is explicitly given by
\begin{equation}
    \xs{i} = \left( \frac{c}{a+b+c}, \frac{b}{a+b+c}, \frac{a}{a+b+c} \right)
    \qquad
    i=1,2.
\end{equation}
In addition to these data, a choice of regularizer functions defines the dynamical system \eref{eq:dftrl}.
Given initial conditions $\xx{1}(0)$ and $\xx{2}(0)$, the dynamical system with the relations (\ref{eq:consistency}) and (\ref{eq:perturbation}) generates the solution.
In \sref{sec:experiments}, we take $a=b=c=1$, the entropic regularizer $\hdual{i}(\yy{i})=\mathrm{lse}(\yy{i})$, and the initial conditions $\xx{1}(0)=\xx{2}(0)=(0.1, 0.1, 0.8)$, which result in \fref{fig:rps}.
We now consider the case of $a=1$, $b=2$, $c=3$, and the initial conditions $\xx{1}(0)=(0.1, 0.1, 0.8)$ and $\xx{2}(0)=(0.2, 0.6, 0.2)$, where the entropic regularizer is kept.
Figure \ref{fig:rpsw} depicts the results.
As one can see, the solution trajectory for $\coeff=0$ (FTRL) exhibits non-convergent behavior, and its Fenchel coupling is conserved over time.
Our DFTRL algorithm yields convergent dynamics and we see that the larger perturbation coefficient $\coeff$ promotes the faster convergence, as expected.

\begin{figure}[t]
    \begin{minipage}[b]{0.329\linewidth}
        \centering
        \includegraphics[keepaspectratio, scale=0.16]{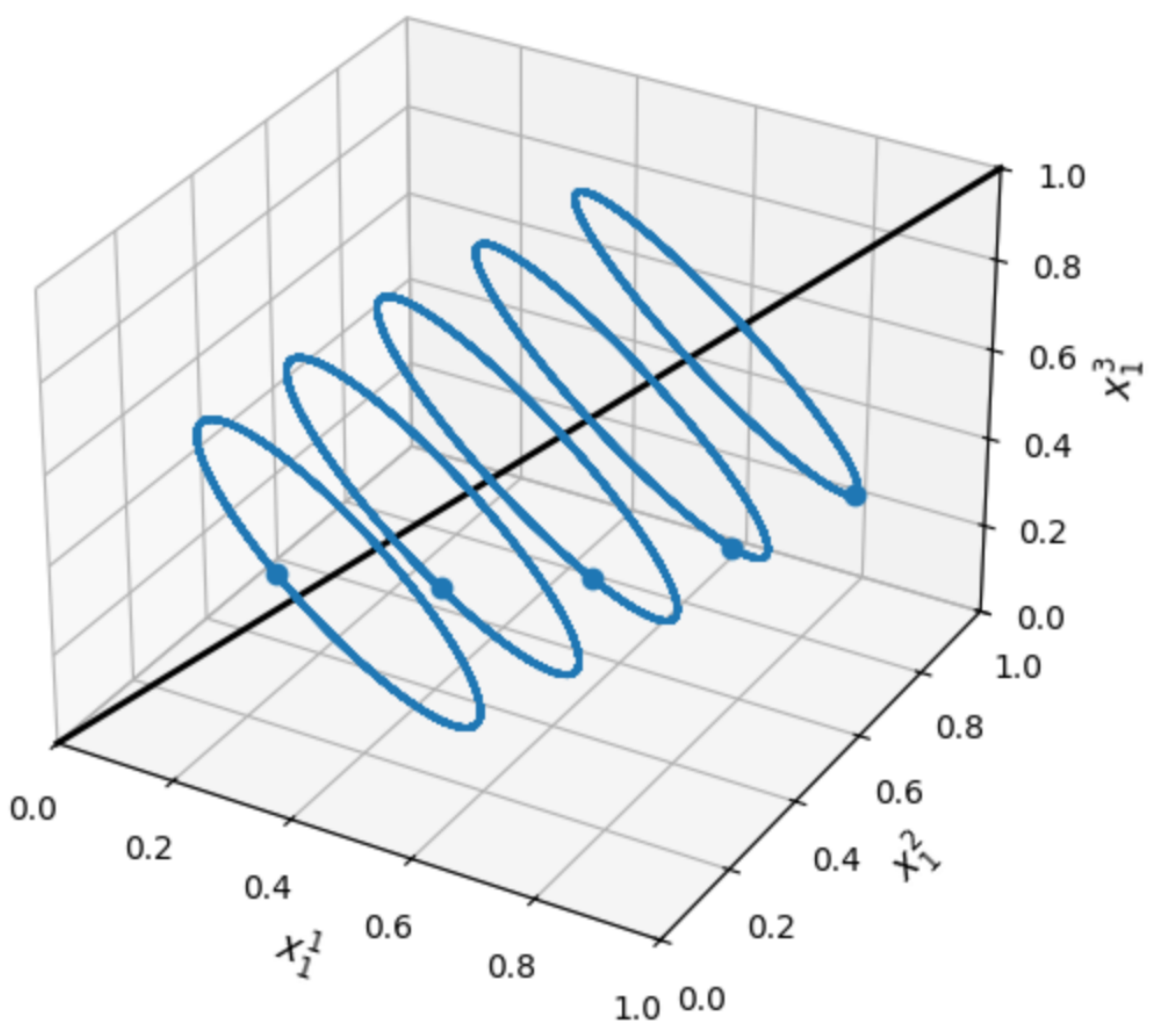}
        \subcaption{FTRL ($\coeff=0$)}
        \label{fig:mpe0}
    \end{minipage}
    \begin{minipage}[b]{0.329\linewidth}
        \centering
        \includegraphics[keepaspectratio, scale=0.16]{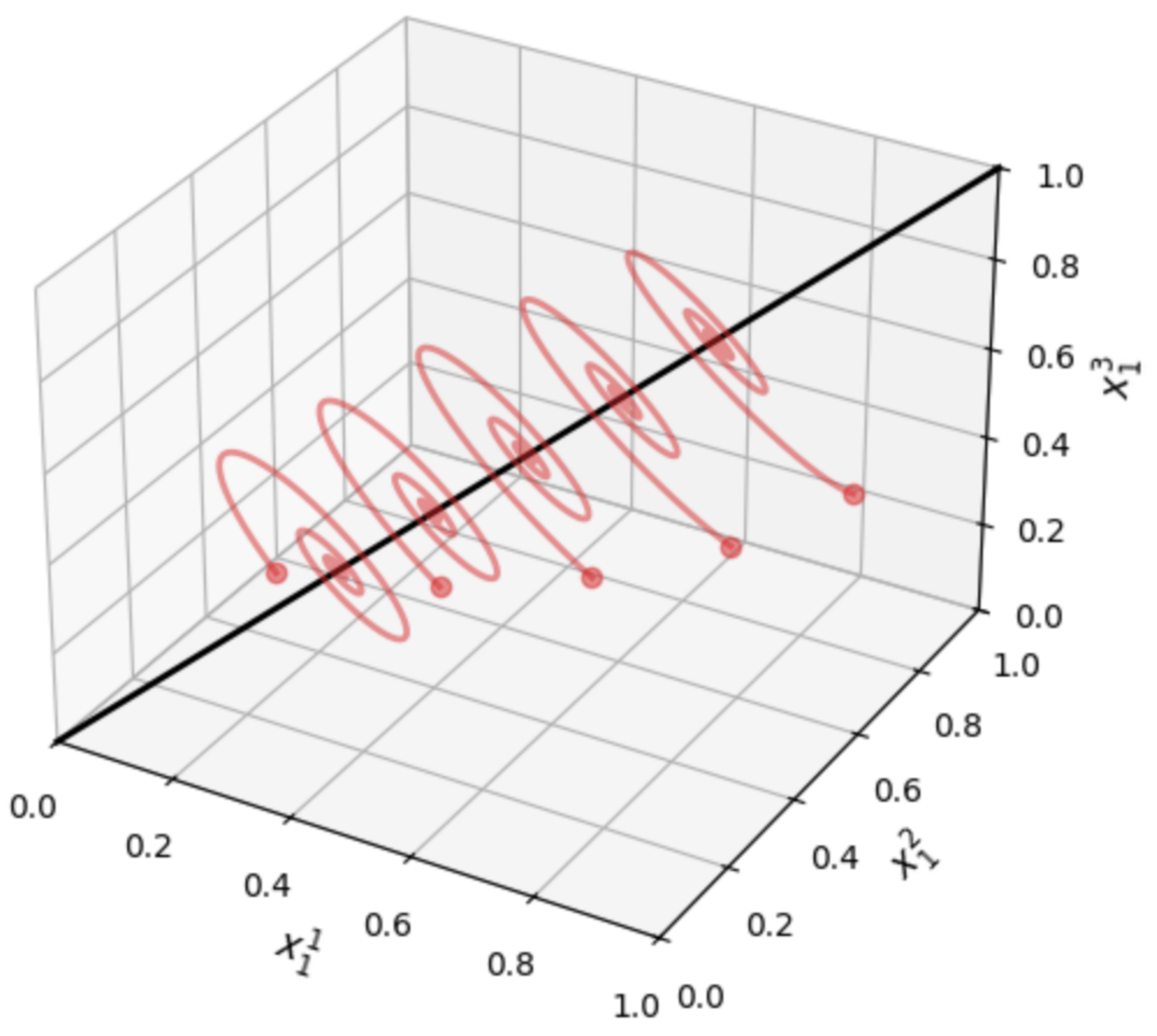}
        \subcaption{DFTRL ($\coeff=0.05$)}
        \label{fig:mpe05}
    \end{minipage}
    \begin{minipage}[b]{0.329\linewidth}
        \centering
        \includegraphics[keepaspectratio, scale=0.16]{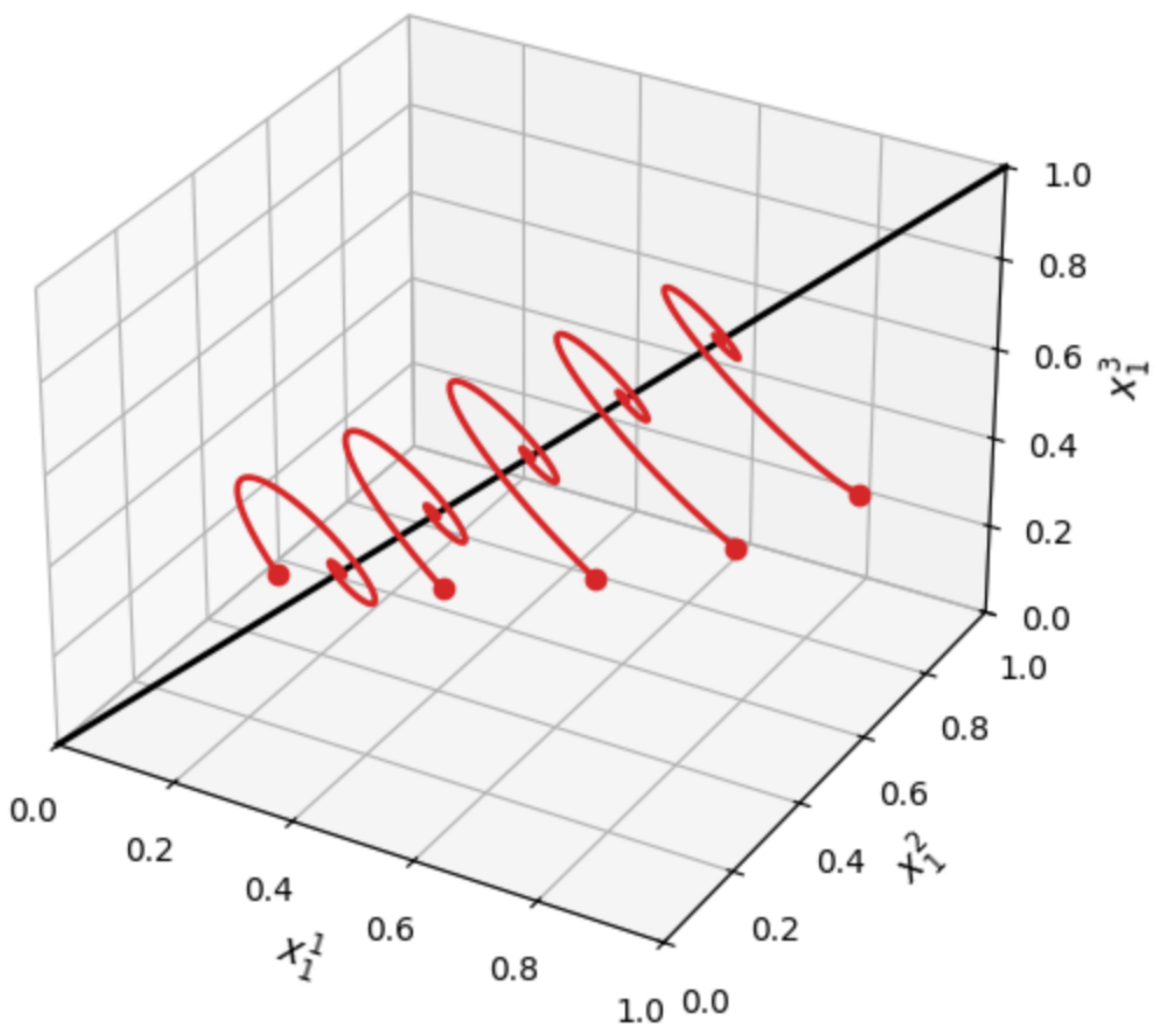}
        \subcaption{DFTRL ($\coeff=0.1$)}
        \label{fig:mpe1}
    \end{minipage}
    \caption{Three-player Matching Pennies.}  % The Euclidean regularizer.
    \label{fig:mpe}
\end{figure}

We also consider the three-player Matching Pennies, with different regularizer functions.
This game is characterized by $\agents=\set{1,2,3}$ and $A_i=\set{\mathrm{H}, \mathrm{T}}\simeq\set{1,2}$.
We set the payoff matrices as
\begin{equation}
    \payoffmat{12} =
    \payoffmat{23} =
    \payoffmat{31} =
    \begin{pmatrix}
        a  & -1  \\
        -1 &  a
    \end{pmatrix},
    \quad
    \payoffmat{ij} = -\left( \payoffmat{ji} \right)^\top,
\end{equation}
where $a\in\R$.
In this game, we have a continuum of Nash equilibria distributed along a straight line:
\begin{equation}
    (\xs{1}, \xs{2}, \xs{3}) = \left( (p, 1-p), (p, 1-p), (p, 1-p) \right),
\end{equation}
where $0\leq p \leq 1$.
Again, building on these data with a choice of regularizer functions, we solve the dynamical system \eref{eq:dftrl} for a given initial condition.
In \fref{fig:mp}, we take $a=1$, the entropic regularizer, and initial conditions chosen at random.
Here, we employ the Euclidean regularizer, while the other setups are kept intact.
The result is shown in \fref{fig:mpe}.
Similarly to the previous example, we observe that the solution trajectories for $\coeff=0$ (FTRL) exhibit cyclic behavior, and that the DFTRL dynamics for larger $\coeff$ converges faster to the Nash equilibrium.

\noindent \textbf{Reproducibility statement.}
All the experiments in this paper are conducted on an Intel(R) Xeon(R) CPU @ 2.20GHz.
The operating system is Ubuntu 22.04.4 LTS.
We use the \texttt{torchdiffeq} framework \cite{chen2018neuralode,torchdiffeq} for solving the differential equations.
Running the code for each simulation takes at most a few minutes.
The code is publicly available at \url{\githubrepo}.  %! publicly available at \url{\githubrepo}.  %! provided in the supplementary materials.

\bibliography{references}

\end{document}
% ========== ========== ========== ==========
% ****** End of file apstemplate.tex ******

%*** misc notes ***
% If in two-column mode, this environment will change to single-column
% format so that long equations can be displayed. Use sparingly.
%\begin{widetext}
% put long equation here
%\end{widetext}

% If you have acknowledgments, this puts in the proper section head.
%\begin{acknowledgments}
% put your acknowledgments here.
%\end{acknowledgments}